\documentclass[10pt,reqno]{amsart}
\usepackage{amsthm,amsfonts,amsmath,latexsym,amssymb,bbold,dsfont,mathtools}
\usepackage{amssymb,latexsym,amsfonts,amsmath,amsthm,mathabx,stmaryrd}
\usepackage[unicode]{hyperref} 
\usepackage{verbatim}
\usepackage[integrals]{wasysym}

\usepackage{graphics}
\usepackage{graphicx,xcolor}
\usepackage{epsfig}

\usepackage[rightcaption]{sidecap}

\usepackage{soul}    

\usepackage{mathtools}
\usepackage{mfirstuc}
\usepackage[normalem]{ulem}
\usepackage{tikz}

\usepackage[utf8]{inputenc}
\usepackage{cite}
\def\a{\alpha}

\def\g{\gamma}
\def\d{\delta}
\def\dd{\mathrm{d}}

\def\s{\sigma}

\def\D{\Delta}
\def\G{\Gamma}
\def\O{\Omega}
\def\l{\lambda}
\def\L{\Lambda}
\def\p{\partial}

\def\C{\mathbb C}

\def\Z{\mathbb Z}
\def\Q{\mathbb Q}
\def\R{{\mathbb R}}

\def\N{\mathbb N}
\def\tr{{\rm tr}\,}

\def \Lp {\mathsf L}  

\usepackage{xparse,aliascnt,hyperref,bookmark}

\NewDocumentCommand{\xnewtheorem}{m o m}
 {%
  \IfNoValueTF{#2}
   {\newtheorem{#1}{#3}}
   {%
    \newaliascnt{#1}{#2}%
    \newtheorem{#1}[#1]{#3}%
    \aliascntresetthe{#1}%
    \expandafter\newcommand\csname #1autorefname\endcsname{\makefirstuc  {\lowercase {#3}}}%
   }%
 }

\hypersetup{
    colorlinks,
    linkcolor={red!50!black},
    citecolor={blue!50!black},
    urlcolor={blue!80!black}
}

\theoremstyle{plain}
\newtheorem{thmc}{ERROR}[section]
\iftrue 
\xnewtheorem{thm}[thmc]{THEOREM}
\xnewtheorem{lemma}[thmc]{LEMMA}
\xnewtheorem{cl}[thmc]{COROLLARY}
\xnewtheorem{prop}[thmc]{PROPOSITION}
\xnewtheorem{assumptions}[thmc]{ASSUMPTIONS}
\xnewtheorem{conj}[thmc]{CONJECTURE}
\xnewtheorem{fact}[thmc]{FACT}
\theoremstyle{definition}
\xnewtheorem{defin}[thmc]{DEFINITION}

\theoremstyle{remark}
\xnewtheorem{remark}[thmc]{REMARK}
\xnewtheorem{remarks}[thmc]{REMARKS}

\else

\theoremstyle{plain}
\xnewtheorem{thm}[thmc]{Theorem}
\xnewtheorem{lemma}[thmc]{Lemma}
\xnewtheorem{cl}[thmc]{Corollary}
\xnewtheorem{prop}[thmc]{Proposition}
\xnewtheorem{assumption}[thmc]{Assumption}
\xnewtheorem{conj}[thmc]{Conjecture}
\xnewtheorem{fact}[thmc]{Fact}

\theoremstyle{definition}
\xnewtheorem{defin}[thmc]{Definition}

\theoremstyle{remark}
\newtheorem{remark}[thmc]{Remark}
\newtheorem{remarks}[thmc]{Remarks}
\fi

\newcommand{\be}{\begin{equation}}
\newcommand{\ee}{\end{equation}}
\newcommand{\bea}{\begin{eqnarray}}
\newcommand{\eea}{\end{eqnarray}}
\newcommand{\beax}{\begin{eqnarray*}}
\newcommand{\eeax}{\end{eqnarray*}}

\keywords{Entanglement entropy, area law, Landau Hamiltonian on half-plane} 
\subjclass[2010]{Primary 47G30, 35S05; Secondary 47B10, 47B35}

\numberwithin{equation}{section}

\begin{document}

\title[Entanglement entropy of the Landau Hamiltonian on the half-plane]{Entanglement entropy of ground states of the Landau Hamiltonian on the half-plane}

\date{\today}

\author[P.~Pfeiffer, W.~Spitzer]{Paul Pfeiffer and Wolfgang Spitzer}
\address{Ludwig-Maximilians-Universität München, 
	Fakultät für Mathematik, Informatik und Statistik
	Theresienstraße 39, 
	80333 München}
\address{Fakult\"at f\"ur Mathematik und Informatik, FernUniversit\"at in Hagen, Universit\"atsstra\ss e 1, 58097 Hagen, Germany} \email{pfeiffer@math.lmu.de} \email{wolfgang.spitzer@fernuni-hagen.de}

\begin{abstract} We study the entanglement entropy of ground states of a Hamiltonian defined on a domain with a boundary. Surprisingly, boundary conditions can change the spectrum and the nature of the spectrum drastically but not the leading behaviour of the entanglement entropy. As is well-known, the Landau Hamiltonian on the full plane has pure point spectrum (the infinitely degenerate Landau levels) and ground states display a so-called strict area law. On the other hand, the Landau Hamiltonian on the half-plane has purely absolutely continuous spectrum and yet we prove a strict area law for its ground states. We raise the question of what extra or finer conditions on the absolutely continuous spectrum are necessary to guarantee a logarithmically enhanced area-law as we have for the Laplace operator.
\end{abstract}

\maketitle
\centerline{\emph{Dedicated to Hajo Leschke on the occasion of his $81^\mathrm{st}$ birthday}}
\bigskip

\tableofcontents

\section{Motivation}

There have been a lot of efforts devoted to the study of entanglement entropy (EE) in general and to the scaling of this entropy for ground states of many-particle fermionic systems. Even the simplest, non-interacting systems pose considerable mathematical challenges and lead to interesting questions. These many-particle systems are completely described by operators $\mathcal H$ on the corresponding one-particle Hilbert-space such as the Laplace operator, $-\D$, or the Landau operator on $\mathsf{L}^2(\O)$, for some fixed spatial domain $\O\subset\R^d$. In our case, $\O$ will be the half-plane $\R^+\times\R$, which is (cone) invariant under scaling, that is, $L\cdot\O \subset \O$ for $L>0$. A ground state (of the many-particle) system on the full configuration space $\O$ is described by the spectral projection of $\mathcal H$, $P_\mu=1(\mathcal H\le\mu)$ for some $\mu\in\R$. Such a state is then spatially localized to $L\L$, see \eqref{def localized}. Then we ask about the behaviour of the EE of this localized state as $L$ becomes large. Since the EE of this state localized to $L\L$ is identical to the EE of the same state localized to the complement $\O\setminus (L\L)$,  it is plausible that the EE scales to leading order in $L$ with the surface area of the boundary, that is, $|\p(L\L)| = L^{d-1}|\p\L|$. If this is the case we call this a strict area-law (scaling) of EE. There are other notable cases where the EE scales differently to leading order, namely as $L^{d-1}\ln(L)|\p\L|$ and this we call a logarithmically enhanced area-law (scaling). Yet different (faster and slower) scaling behaviours are known for ground states of quantum spin chain models.

The simplest cases of the Laplace operator or Landau operator on the full Euclidean space $\R^d$ have been studied, see \cite{GK,LSS14,CE20,LSS20,PS22}. In the case of the Laplace operator one has a purely absolutely continuous spectrum with a slow decay of the integral kernel of the spectral projection $P_\mu$. This leads to a logarithmically enhanced area-law. It is relatively simple to deduce a strict leading area-law bound if the integral kernel of $P_\mu$ decays exponentially in all directions. To deduce the precise scaling in all cases is still another matter. The two-dimensional Landau operator has pure point spectrum with an exponentially decaying integral kernel $P_\mu$ and thus we see a strict area-law. 

Building on these results, stability results have been proved for perturbations by magnetic or electric potentials, see \cite{MS20,MS23,P21}. In \cite{PS23}, we studied the transition between a strict area law and a logarithmically enhanced area-law as we let the magnetic field tend to 0, or in \cite{LSS16},  where we studied the same transition as the temperature $T$ of an equilibrium state (instead of a ground state) tends to 0. 

In \cite{PfSob17}, the one-dimensional Laplace operator $\mathcal H = -\D+V$ with a periodic potential $V$ was studied. It is well-known that the spectrum of $\mathcal H$ is purely absolutely continuous and one expects a logarithmically enhanced area-law for the found states $P_\mu = 1(\mathcal H\le \mu)$. This turns out to be true (for all $\mu$ inside the spectrum) and, surprisingly, the leading coefficient of $\ln(L)$ does not depend on the potential $V$, that is, it is the same as for the one-dimensional Laplace operator. The case of a periodic potential in higher-dimension is unknown.

For (discrete) random systems with one-particle Hamiltonian $\mathcal H$, we mention the following results. In \cite{PasturSlavin14,EPS}, ground states $P_\mu = 1(\mathcal H\le \mu)$ with $\mu$ in the spectrum of $\mathcal H$ with exponentially localized eigenfunctions were studied and a strict area law was proved. What is unknown is the precise form of the coefficient of $L^{d-1}|\p\L|$, in particular, its dependence on physical parameters such as the disorder. In \cite{MPS}, the authors study the one-dimensional dimer model (an Anderson model with correlated weak disorder $v$) at the two exceptional energies, $\mu\in\{0,v\}$, at which there is delocalization. Surprisingly, they prove that the EE of ground states $P_\mu$ scales at least like $\ln(L)$ unlike for $\mu$ in the rest of the spectrum.

In this paper, we take a different configuration space from the start, namely the half-plane $\O=\R^+\times\R$ and the Landau operator on this half-plane with Dirichlet boundary conditions. Other boundary conditions such as Neumann boundary conditions, which are very popular in superconductivity, could be considered as well. Then it is interesting to ask how the boundary condition effects the (leading) scaling of the EE of ground states. While we only partially answer this question, see \autoref{main thm} and the remarks following, our initial interest in this model comes from a different direction. In contrast to the Landau model on the full plane, the Landau model on the half-plane has purely absolutely continuous spectrum. Does this now lead to a logarithmically enhanced area-law or to a strict area law? Our main result says that we have a strict area-law bound in all cases and the precise strict area-law scaling if the domain $\L$ is separated from the boundary $\{0\}\times\R$ of the half-plane and it is the same as if we considered the problem on the full plane. In other words, if the domain $\L$ is separated from the boundary, then the leading order of EE is uneffected by the Dirichlet boundary conditions. 

So, in all these models there is a relation between spectral properties and the type of area-law scaling. While $\mu$ in the point spectrum of $\mathcal H$ with exponentially decaying integral kernel of $P_\mu = 1(\mathcal H\le \mu)$ always leads to a strict area law we have now an example of a Hamiltonian $\mathcal H$ with purely absolutely continuous spectrum and yet a strict area law for its ground states. We discuss the decay properties of $P_\mu$ in the present model in \autoref{Discussion}.

\medskip

\textbf{Notation:} In our estimates, we denote constants by the letter $C$ and its value may change from one line to the next. We use the convention that $0$ is a natural number so that $\N \coloneqq \{0,1,2\ldots,\}$. Also, the set of negative integers is $\Z^-\coloneqq \{\ldots,-2,-1\}$, the set of positive integers $\Z^+$ equals $\{1,2,\ldots\}$ and hence the set of all integers $\Z = \Z^- \cup \{0\}\cup \Z^+ = \Z^-\cup\N$. 

We denote by $|\cdot|$ the Euclidean distance in $\R^d$ or $\C^d$, $d\in\Z^+$. The Hilbert space, $\mathsf{L}^2(\O,\C^d)$, of square-integrable functions on the set $\O$ to $\C^d$ is furnished with the Lebesgue measure on $\O$. If $d=1$ we simply write $\mathsf{L}^2(\O,\C) =\mathsf{L}^2(\O)$. We mainly use $\O = \R, \R^+\coloneqq (0,\infty)$, $\R^2$, $\R^+\times\R$, or $\O = I\times J, I$ for some open intervals $I,J$, and denote the corresponding $\mathsf{L}^2$-norm by $\|\cdot\|_2$, that is, $\|f\|_2\coloneqq \left(\int_\O \dd x\, |f(x)|^2\right)^{1/2}$. In case the space $\O$ is not too obvious, we write explicitly $\|\cdot\|_{\mathsf{L}^2(\O)}$. The corresponding scalar product is denoted by the angular brackets $\langle\cdot,\cdot\rangle$. We use the Dirac notion for $|\phi\rangle\langle \phi|$ to denote the projection onto $\C\cdot\phi$, for $\|\phi\|_2=1$.

We use $\mathsf{C}^m(\Omega)$, $m\in\N$, to denote the set of $m$ times differentiable functions on the open set $\O$ and $\mathsf{C}^\infty(\Omega) \coloneqq \bigcap_{m=0}^\infty \mathsf{C}^m(\Omega)$. We add the subindex $\text{c}$ when we restrict to functions whose support is compact and contained in $\O$. A distribution on $\Omega$ is an element of the dual space of $\mathsf{C}^\infty_{\text{c}}(\Omega)$. Any $f \in \Lp^2(\Omega)$ is identified with the distribution $\phi \mapsto \int_{\Omega} \dd x\,\overline{f(x)} \phi(x)$. The distributional derivative $\partial_j f$ of a distribution $f$ is given by $\phi \mapsto f(-\partial_j \phi) = -\int_{\Omega} \dd x\,\overline{f(x)} \frac{\p\phi}{\p x_j}(x)$.

For $\O\subseteq \R^d$, $d\in\{1,2\}$, and $m\in\N$ we use various Sobolev spaces defined as follows:
\begin{align*}
	\mathsf{H}^m(\O)&\coloneqq \big\{u\in \mathsf{L}^2(\O) : \mbox{ distributional derivatives } \p^\a u\in \mathsf{L}^2(\O) \mbox{ for all (multi)indices } \a, |\a|\le m\big\}\,,
	\\
	\mathsf{H}^m_0(\O)&\coloneqq \big\{u\in \mathsf{H}^m(\O) : \exists \mbox{ sequence } u_k\in \mathsf{C}^\infty_{\text{c}}(\Omega), k\in\N \mbox{ s.t. } \|u-u_k\|_{\mathsf{H}^m(\O)}\to 0\mbox{ as } k\to\infty\big\}\,,
	\\
	\mathsf{H}^m_{\text{loc}}(\O)&\coloneqq \big\{u:\O\to\C: \forall \mbox{ open and bounded }\O' \mbox{ with } \overline{\O'}\subset \O,  \mbox{ restriction } u|_{\O'}\in\mathsf{H}^m(\O')\big\}\,.  
\end{align*}
Here, we do not distinguish between $u$ and its equivalence class, $[u]$, determined by $u$.  The norm $\|\cdot\|_{\mathsf{H}^m(\O)}$ and inner (scalar) product are defined through 
\begin{align}
\|u\|^2_{\mathsf{H}^m(\O)} \coloneqq \sum_{|\a|\le m} \|\p^\a u\|_{\mathsf{L}^2(\O)}^2\,,\quad \langle u,v\rangle_{\mathsf{H}^m(\O)} \coloneqq \sum_{|\a|\le m} \langle\p^\a u,\p^\a v\rangle_{\mathsf{L}^2(\O)}\,.
\end{align}
In fact, we only need $m\in\{1,2\}$. The Sobolev spaces $\mathsf{H}^m(\O)$ and $\mathsf{H}^m_0(\O)$ are Hilbert spaces with the norm $\|\cdot\|_{\mathsf{H}^m(\O)}$, unlike $\mathsf{H}^m_{\text{loc}}(\O)$. They are often denoted by $W^{m,2}(\O)$ (with the same decorations $0$ and $\text{loc}$) in the literature (such as in our main reference \cite{Adams}) and can also be constructed as the closure of the spaces $\mathsf{C}^\infty_{(\text{c})}(\Omega)$ with respect to $\|\cdot\|_{\mathsf{H}^m(\O)}$.

Finally, we mention the two most important functions in this paper, namely $\varphi_k$ and $\psi_k$, defined  on $\R$, respectively on $[0,\infty)$ with parameter $k\in\R$. They are the normalized ground-state eigenfunctions of the shifted harmonic oscillator $-\frac{\dd^2}{\dd x^2} + (x-k)^2$ on $\R$ or $\R^+$, respectively. They will be expressed in terms of the parabolic cylinder functions $D_\nu$. In particular, $\varphi_k(x) = \pi^{-1/4} \exp\left(-(x-k)^2/2\right) = \pi^{-1/4}D_0(\sqrt{2}(x-k))$.

\section{Introduction}

We consider the Hamiltonian, $H^+$, which describes the energy of a (spin-less) electron on the half-plane, $\R^+\times \R$, in a constant magnetic field of strength $B>0$ perpendicular to this half-plane. This Hamiltonian is given by (see e.g. \cite[Equation 1.1]{BP99})
\begin{align}\label{def: Hamiltonian}
	H^+\coloneqq -\frac{\p^2}{\p x_1^2} + \Big(-\mathrm{i}\frac{\p}{\p x_2} - B x_1\Big)^2\,, \quad (x_1,x_2)\in \R^+\times \R\,,\R^+\coloneqq (0,\infty)\,.
\end{align}
with Dirichlet conditions on the boundary, $\{0\}\times\R$. It is formally defined via its form \eqref{H(+) form def} with form domain \eqref{H+ form domain}. The spectrum, $\s(H^+)$, of $H^+$ equals the half-line $[B,\infty)$ and is purely absolutely continuous. This is well-known and will become clear again from our discussion below, see \autoref{spectrum of H plus}.

Then, for $\mu>B = \inf\s(H^+)$ we introduce the so-called Fermi projection $P_\mu^+\coloneqq 1(H^+\le \mu) = 1_{(-\infty,\mu]}(H^+)$, defined for example through the functional calculus by applying the indicator function $1_{(-\infty,\mu]}$ of the closed interval $(-\infty,\mu]$ to the self-adjoint operator $H^+$. This projection $P_\mu^+$ describes the ground state of non-interacting fermions with one-particle Hamiltonian $H^+$ below the Fermi energy $\mu$. For simplicity, we will assume that $\mu<3B$. As $B$ is fixed, we set $B=1$. 

For any bounded (Lebesgue-measurable) set $\L\subset \R^+\times \R$ we define the localized Fermi projection
\begin{equation} \label{def localized}
	P_\mu^+(\L) \coloneqq 1_\L P_\mu^+ 1_\L\,.
\end{equation}
Here, $1_\L$ is understood as the indicator function of the set $\L$ as well as the multiplication operator by $1_\L$. 

In this paper, we are interested in the asymptotic behaviour of the trace of $f(P_\mu^+(L\L))$ as the scaling parameter $L$ tends to infinity for certain (test) functions $f$ defined on the interval $\R$ such as the R\'enyi entropy function $h_\a$, $\a>0$, defined as
\begin{equation}
	h_\a(t) \coloneqq \begin{cases} \frac1{1-\a} \ln\big(t^\a+(1-t)^\a\big) &\mbox{ for } \a\not=1\\-t\ln(t)-(1-t)\ln(1-t)&\mbox{ for } \a=1\end{cases} \,,
\end{equation}
and $h_\a(t)=0$ for $t\in\R\setminus(0,1)$. The quantity 
\begin{equation} \label{def EE} 
	S^+_\a(\L)\coloneqq \tr h_\a\big(P_\mu^+(\L)\big)
\end{equation} 
is called the $\a$-R\'enyi entanglement entropy of the ground state (described by $P_\mu^+$) reduced to $\L$. From a physics point-of-view, the most important case is $\a=1$ but other values are of interest, too. We are able to treat all $\a>0$.

As we want to understand this entropy at least for large domains, we fix a domain $\L$ and introduce a scaling parameter $L>0$, scale the domain $\L$ by $L$ and define $L\L\coloneqq \{x=L\cdot y : y\in\L\}$, and investigate the asymptotic behaviour of the function $L\mapsto \tr f(P_\mu^+(L\L))$ as $L\to\infty$. 

When we speak of an area law then we mean a leading asymptotic scaling of $\tr f(P_\mu^+(L\L))$ of the order $L$ (in spatial dimension two) and with a coefficient independent of $L$, which contains $|\partial\L|$, the length of the boundary of $\L$ in the ambient space, here as a subset of $\R^+\times\R$. 

First of all, we assume that $\L$ is the union of finitely many bounded domains (open connected sets) in $\R^+\times\R$, such that their closures are pairwise disjoint. Open and closed are meant in the trace topology of the Euclidean topology on $\R^+\times\R$. The open sets in this topology on $\R^+\times\R$ are the sets $\L'\cap (\R^+\times\R)$ where $\L'$ is an open set in $\R^2$. A set $\L\subset \R^+\times\R$ is closed by definition if the complement $(\R^+\times\R)\setminus\L$ is open. The boundary of an open subset $\L$ is thus defined as $\p^+\L\coloneqq \bar{\L}\setminus\L$. The length, $|\partial^+\Lambda|$, is the one-dimensional Hausdorff measure of this curve in $\R^+\times\R$. 
For example, the semi-circle $\L_0 = \{(x_1,x_2):|(x_1,x_2)| = \sqrt{x_1^2+x_2^2}<1, x_1>0,x_2\in\R\}$ is an open set in $\R^+\times\R$ with boundary $\p^+\L_0 = \{(x_1,x_2):x_1>0,x_2\in\R, |(x_1,x_2)|=1\}$ and length $|\partial^+\L_0| = \pi$.  

By $\p\L$ we denote the boundary of $\L$ considered as a subset of $\R^2$. In the above example of the semi-circle $\L_0$ we have $\p\L_0 = \p^+\L_0\cup \{(0,x_2):|x_2|\le 1\}$. 

Furthermore, we assume that the boundary curve $\p^+\L$ of $\L$ is $\mathsf{C}^3$-smooth. Altogether, we call such a set $\L$ a $\mathsf{C}^3$-smooth region. 

Here is our main result.

\begin{thm}\label{main thm} 
	For any $\mu\in (1,3)$, any $\a>0$, and any bounded $\mathsf{C}^3$-smooth region $\Lambda\subset \R^+\times\R$, the $\alpha$-R\'enyi entanglement entropy, $S_\a^+(L\L)$, of the ground state characterized by $1(H^+\le \mu)$, localized to $L\Lambda$ as defined in \eqref{def EE} satisfies an area-law bound,
	\begin{equation} \label{EE bound}
		S^+_\a(L\L)\le C L|\partial^+\L|\,,
	\end{equation}
	with some constant $C$ independent of $L$ and $\L$ but depending on $\mu$ and $\a$. Furthermore, if the distance of $\L$ to the boundary $\{0\} \times \R$ of  $\R^+\times\R$ is strictly positive, we get that $S^+_\alpha(L\L) =  L|\partial\L| \mathsf{M}_0(h_\a)+ o(L)$ with the coefficient $\mathsf{M}_0(h_\a)$ explained in \eqref{M_0 coefficient}.
\end{thm}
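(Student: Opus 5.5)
The plan rests on two structural facts. First, the partial Fourier transform in $x_2$ fibres $H^+$ into the family $h(k)=-\frac{\dd^2}{\dd x^2}+(x-k)^2$ on $\mathsf{L}^2(\R^+)$ with Dirichlet condition at $0$. Second, each eigenvalue branch $e_n(k)$ is strictly decreasing in $k$, with $e_0(k)\downarrow 1$ as $k\to\infty$ and $e_1(k)\downarrow 3$. Hence for $\mu\in(1,3)$ only the lowest band contributes. There is a unique $k_\mu$ with $e_0(k_\mu)=\mu$, and
\[
P_\mu^+=\int^\oplus 1_{[k_\mu,\infty)}(k)\,|\psi_k\rangle\langle\psi_k|\,\dd k .
\]
This gives an explicit kernel
\[
P_\mu^+(x,y)=\frac1{2\pi}\int_{k_\mu}^\infty \dd k\, e^{\mathrm{i}k(x_2-y_2)}\,\psi_k(x_1)\overline{\psi_k(y_1)} .
\]

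Next I compare $\psi_k$ with $\varphi_k$. For $k\to\infty$ one has $\|\psi_k-\varphi_k\|_{\mathsf{H}^1(\R^+)}\le Ce^{-ck^2}$, while $\psi_k$ itself is Gaussian-localised around $x_1\approx k$. This yields the kernel bound
\[
|P_\mu^+(x,y)|\le C\,e^{-c|x_1-y_1|^2}\,(1+|x_2-y_2|)^{-N}.
\]
The $x_1$ decay is Gaussian. The $x_2$ decay comes from integrating by parts in $k$; the sharp cutoff at $k_\mu$ produces a boundary term, but only from $k$ near $k_\mu$, where $\psi_k$ is supported at $x_1=O(1)$. So the kernel decays slowly, like $|x_2-y_2|^{-1}$, only in a strip of bounded width along the boundary $\{0\}\times\R$. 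Away from that strip the decay is Gaussian in $x_1$ and rapid in $x_2$ up to $e^{-c\min(x_1,y_1)^2}$ errors.

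For the area-law bound \eqref{EE bound} I use the standard reduction: $h_\a(t)\le C_\a\,(t(1-t))^{\gamma}$ for some $\gamma\in(0,1/2)$, so it suffices to bound the Schatten quasi-norm $\|1_{L\L}P_\mu^+1_{(\R^+\times\R)\setminus L\L}\|_{2\gamma}^{2\gamma}$ by $CL|\p^+\L|$. I cover the plane by unit squares $Q$. For each pair $Q\subset L\L$, $Q'\not\subset L\L$, I estimate $\|1_QP_\mu^+1_{Q'}\|_{2\gamma}$ by the kernel decay between $Q$ and $Q'$, which is possible since the kernel is smooth. I then sum using the $2\gamma$-triangle inequality. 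Pairs meeting the Dirichlet strip need care, because there the $x_2$ decay is only $|x_2-y_2|^{-1}$. However, the part of $\p(L\L)$ lying inside the strip is just the piece of $\p^+(L\L)$ near the wall, which is at most $O(L)$ cubes (indeed $O(1)$ unless $\p^+\L$ runs along the wall). I plan to handle the slow decay there using smoothness of the symbol in $k$ on $(k_\mu,\infty)$, so that only the single cutoff point contributes. This is the delicate step: I must make sure it does not produce a $\ln L$. I expect to achieve that by exploiting that the $k_\mu$-discontinuity is a rank-one-in-$x_1$ localised object, effectively a one-dimensional Fermi-sea projection of finite width. That object gives at most $\ln L$ from its two endpoints, which is $o(L)$, and is therefore absorbed.

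For the asymptotics when $\mathrm{dist}(\L,\{0\}\times\R)>0$, I compare $P_\mu^+$ with the full-plane lowest Landau projection $P=\int^\oplus_\R|\varphi_k\rangle\langle\varphi_k|\dd k$. On $L\L$ one has $x_1\ge cL$, so
\[
\|1_{L\L}(P_\mu^+-P)1_{L\L}\|_1\le Ce^{-cL^2}L^2 .
\]
This holds because modes with $k\le cL/2$ are exponentially small on $L\L$, and for $k>cL/2$ the difference $\psi_k-\varphi_k$ is super-exponentially small. A standard trace-norm continuity estimate for $h_\a$ (Hölder continuity of $h_\a$, combined with $\|f(A)-f(B)\|_1\le C\|A-B\|_q^q$-type bounds) then transfers the full-plane result of \cite{LSS14,CE20,LSS20} for the lowest Landau level, $\tr h_\a(1_{L\L}P1_{L\L})=L|\p\L|\mathsf{M}_0(h_\a)+o(L)$. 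Here $|\p^+\L|=|\p\L|$. I expect the main obstacle to be the boundary-strip analysis in the first part.
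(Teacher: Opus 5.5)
Your area-law argument has a genuine gap at exactly the step you flag, and the scheme as set up cannot be completed. Near the wall, the boundary term at $k=k_\mu$ gives the kernel of $P_\mu^+$ modulus $\psi_{k_\mu}(x_1)\psi_{k_\mu}(y_1)/(2\pi|x_2-y_2|)$ to leading order. So your first displayed bound with $(1+|x_2-y_2|)^{-N}$ is false as stated. For unit squares $Q\subset L\Lambda$ and $Q'\subset(L\Lambda)^c$ in the strip at distance $d$, this gives $\|1_QP_\mu^+1_{Q'}\|_{2\gamma}\ge\|1_QP_\mu^+1_{Q'}\|\gtrsim d^{-1}$. Since $2\gamma<1$, the pair contributions $d^{-2\gamma}$ are not summable over $Q'$ even for one fixed $Q$, so the pairwise bound from the $2\gamma$-triangle inequality is infinite. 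In fact this happens for every $\Lambda$: the $|x_2-y_2|^{-1}$ tail reaches every square, only with Gaussian prefactors in $x_1,y_1$.

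Your proposed cure, splitting off the $k_\mu$-jump as a one-dimensional Fermi-sea object worth $O(\ln L)$, is only announced. Carrying it out would need an $\mathcal S^{2\gamma}$ bound for $1_{L\Lambda}\bigl(|\psi_{k_\mu}\rangle\langle\psi_{k_\mu}|\otimes g(-\mathrm{i}\partial_2)\bigr)1_{(L\Lambda)^c}$ on a non-product region, where $g=1_{[k_\mu,\infty)}\chi$ with a smooth cutoff $\chi$. The remainder still has a kink and higher-order singularities at $k_\mu$, decaying only like $|x_2-y_2|^{-2},|x_2-y_2|^{-3},\dots$. For small $\alpha$ you would therefore have to peel off arbitrarily many Taylor terms, each needing the same analysis. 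Separately, ``possible since the kernel is smooth'' does not give $\mathcal S^p$ bounds for $p<1$. You need a uniform local bound such as $\|1_QP_\mu^+\|_p\le C_p$, which you never establish.

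The missing idea is not to analyse $P_\mu^+$ on its own, but to compare it with $P_\mu(\R^+\times\R)=1_{\R^+\times\R}P_\mu1_{\R^+\times\R}$ in all cases, not only away from the wall. The paper proves two local estimates:
\begin{itemize}
\item[(i)] $\|1_{Q_{a,b}}(P_\mu^+-P_\mu(\R^+\times\R))\|_2\le Ce^{-a/2}$, from the fibre formula and pointwise bounds on $\psi_k-\varphi_k$;
\item[(ii)] $\|1_{Q_{a,b}}P_\mu^{(+)}\|_p\le C_p$ uniformly in $(a,b)$, by writing $1_QP_\mu^+=(1_Qe^{-tH^+})(e^{tH^+}P_\mu^+)$ and using the diamagnetic bound $|e^{-tH^+}(x,y)|\le e^{t\Delta}(x,y)$.
\end{itemize}
Interpolation gives $\|1_{Q_{a,b}}(P_\mu^+-P_\mu(\R^+\times\R))\|_\sigma^\sigma\le Ce^{-a\sigma/4}$. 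Hence $\|1_{L\Lambda}(P_\mu^+-P_\mu(\R^+\times\R))\|_\sigma^\sigma\le C\sum_{|b|\lesssim L}\sum_{a\ge0}e^{-a\sigma/4}=O(L)$. Sobolev's inequality \eqref{Sobolev inequ} then gives $|S^+_\alpha(L\Lambda)-S_\alpha(L\Lambda)|=O(L)$, and the known full-plane area law yields \eqref{EE bound}. Because the difference is cut off only by $1_{L\Lambda}$ and never paired with $1_{(L\Lambda)^c}$, the slow vertical decay is irrelevant. Each of the $O(L)$ unit squares near the wall that meet $L\Lambda$ contributes $O(1)$.

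The same local $\mathcal S^p$ bound is what your second part lacks. Your estimate is a trace-norm bound, justified only heuristically from fibrewise smallness. Sobolev's inequality needs $\|A-B\|_\sigma^\sigma$ with $\sigma<1$. For $\alpha\le1$, where $h_\alpha$ is not Lipschitz at $0$ and $1$, a trace-norm bound alone cannot suffice. Once you interpolate your Gaussian smallness against $\|1_QP_\mu^{(+)}\|_p\le C_p$, that part coincides with the paper's argument.
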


\begin{remarks}
\begin{enumerate} 
	\item[(i)] 
		In our estimates on the entropy and such quantities, we make no assumption on the region $\L$ besides being bounded. We take the proven result \cite{CE20,LSS20} for the scaling behaviour of the entanglement entropy \eqref{def EE on full plane} on the full plane with the ground state projection $P_\mu$ and compare it to the situation on the half-plane with the ground state projection $P^+_\mu$. The weakest assumptions we are aware of are stated in \cite{PS23} which assume a $\mathsf{C}^2$-smooth domain or $\L$ being a polygon. This comparison gets worse as
		$\mu\downarrow1$ since then $P_\mu^+$ goes to 0 unlike $P_\mu$, which is independent of $\mu\in(1,3)$. In other words, by our method of proof the constant $C$ in \eqref{EE bound} can be bounded uniformly in $\mu$ on any compact subinterval of $(1,3)$.
	\item[(ii)] 
		 Notice that in the case $\operatorname{dist}(\L,\{0\}\times \R)>0$, we have $\p^+\L = \p\L$ and hence $|\p^+\L| = |\p\L|$. This is the area law proved in \cite{CE20,LSS20}. In general, $|\p\L| \le 2 |\p^+\L|$ and hence we can also write $CL|\p\L|$ on the right-hand side of \eqref{EE bound}.
	\item[(iii)] 
		With some extra work it might be possible to prove the asymptotic area-law behaviour, $S^+_\alpha(L\L) =  L|\partial^+\L| \mathsf{M}_0(h_\a)+ o(L)$ also for subsets $\L$ whose closure intersects the boundary $\R^+\times\R$. %
\end{enumerate}
\end{remarks}

Our approach is to compare $\tr f\big(P_\mu^+(L\L)\big)$ with a known trace (asymptotics), namely that of the localized Fermi projection, $P_\mu(L\L)\coloneqq 1_{L\L} P_\mu1_{L\L}$, of the Landau Hamiltonian, 
\begin{align}
H \coloneqq -\frac{\p^2}{\p x_1^2} + \Big(-\mathrm{i}\frac{\p}{\p x_2} - x_1\Big)^2\,, \quad (x_1,x_2)\in \R^2\,, 
\end{align}
defined in the same manner as $H^+$ but on the whole plane $\R^2$. For a formal definition via the quadratic form, see \eqref{H(+) form def} and \eqref{H form domain}. 
In contrast to $H^+$, the Landau Hamiltonian $H$ on $\R^2$ has pure point spectrum equal to $\{2\ell+1:\ell\in\N\}$. These eigenvalues are called Landau levels\footnote{Recall our convention that $0$ is a natural number.}. The entanglement entropy of the ground state displays a strict area-law behaviour. In \cite{CE20,LSS20}, the precise leading asymptotic behaviour or scaling as $L\to\infty$ was proved,
\begin{equation} \label{asympt for P}
	\tr f\big(P_\mu(L\L)\big) = L|\p\L| \mathsf{M}_0(f) + o(L)\,,
\end{equation}
where, for suitable functions, the functional $M_0(\cdot)$ is given by
\begin{equation} \label{M_0 coefficient}
	\mathsf{M}_0(f) \coloneqq \int_\R \frac{\dd\xi}{2\pi} \, f\big(\l(\xi)\big)\quad \mbox{with}\quad \l(\xi) \coloneqq \pi^{-1/2} \int_\xi^\infty\dd t\,\exp(-t^2)\,.
\end{equation}
Note that for $\L\subset \R^+\times\R$, the localized Fermi projection $P_\mu(\L)$ satisfies
\[ P_\mu(\L) = 1_\L P_\mu 1_\L = 1_\L 1_{\R^+\times\R}P_\mu1_{\R^+\times\R} 1_\L 
\] 
and hence $\tr f\big(1_{\L} 1_{\R^+\times\R}P_\mu1_{\R^+\times\R} 1_{\L}\big) = \tr f\big(P_\mu(\L)\big)$. As above, the quantity
\begin{align}\label{def EE on full plane}
S_\a(\L)\coloneqq \tr h_\a\big(P_\mu(\L)\big)
\end{align}
is called the $\a$-R\'enyi entanglement entropy of the ground state on the full plane (described by $P_\mu$) reduced to $\L$. The area law contains $|\p\L|$, the length of the boundary as a set of $\R^2$. Also the boundary $\p\L$ may only be a \emph{piecewise} smooth curve in $\R^2$.

We use the notation $H^{(+)}$ to mean either $H$ or $H^+$, when talking about statements that apply to both. We also use this notation for other objects associated to the operators, like $\R^{(+)}$.

The Hamiltonian $H^{(+)}$ (and hence the Fermi projection $P_\mu^{(+)}$) depends on the gauge but, due to cyclicity of the trace, the quantity $\tr f\big(P_\mu^{(+)}(\L)\big)$ is independent of this gauge. The index $0$ in $\mathsf{M}_0(f)$ indicates the $0^{\mathrm{th}}$ or lowest Landau level since $\mu\in(1,3)$ is between the lowest ($\ell=0$) and the next lowest ($\ell=1$) Landau level, which is all that counts for the coefficient $\mathsf{M}_0(f)$. In \cite{LSS20}, the precise leading asymptotic behaviour for any $\mu>1$ was proved which resulted in a more complicated coefficient $\mathsf{M}_{\le n}(f)$ with $n = \lfloor(\mu-1)/2\rfloor$ but is otherwise the same. Note that since $f(1)=0$, the volume term (of the order $L^2$) is absent in \eqref{asympt for P}. 

We close this section with a brief overview of the proof of our main result, \autoref{main thm}.

We will utilize a partial Fourier transform with respect to $x_2$, $\mathcal F_2$ (see \eqref{Fourier 2 def}) and relate $H^{(+)}$ to a direct integral over the following operators. For any $k\in\R$, let 
\begin{equation} 
	H^+(k) \coloneqq -\frac{\dd^2}{\dd x^2} + (x-k)^2
\end{equation}
be the (shifted by $k$) harmonic oscillator (operator) on $\mathsf{L}^2(\R^+)$ with Dirichlet boundary condition at $0$, formally defined via the quadratic form \eqref{H(+)k form def} with form domain \eqref{H+k form domain}. Let $\psi_k$ be the ground-state eigenfunction of $H^+(k)$, that is, the one with the lowest eigenvalue.

Furthermore, let $H(k)\coloneqq -\frac{\dd^2}{\dd x^2} + (x-k)^2$ be the operator on $\Lp^2(\R)$ (see \eqref{H(+)k form def} and \eqref{Hk form domain} for the formal definition via its form) and let $\varphi_k$ be its (normalized) ground-state eigenfunction. We will show that
\begin{equation}  \label{F2 operator trafo intro eq}
	\mathcal F_2 H^{(+)} \mathcal F_2^*= \int_\R^\oplus  \dd k\, H^{(+)}(k) \,.
\end{equation}
This will be done in \autoref{section Fourier reduction}, where we will also provide a precise definition of the direct integral, as well as all the discussed forms and operators. The notation $H^{(+)}$ ($H^{(+)}(k)$ or $P_\mu^{(+)}$, and later similar ones) stands either for $H$ or $H^+$ and, if there are several lines of calculations, this choice is kept consistently.

As a consequence of that, for any $a\in \N,b\in \Z$, we will show the Hilbert--Schmidt norm (see \eqref{def Schatten norm} with $p=2$) identity
\begin{align}\label{HS norm identity intro eq}
	\Big\| &1_{{(a,a+1)\times(b,b+1)}}\big(P_\mu^+ - P_\mu(\R^+\times\R)\big)\Big\|^2_{2} 
	\\
	&= \frac 1 {2\pi}  \int_{\R} \mathrm d s \int_a^{a+1} \mathrm d x_1 \int_{\R^+} \mathrm d y\, \big[1(s\ge k_F)\psi_s(x_1)\psi_s(y_1) -\varphi_s(x_1)\varphi_s(y_1) \big]^2\,, \nonumber
\end{align}
where $k_F\ge 0 $ is a constant depending on $\mu$. Finding sufficient upper bounds for this expression is a key part of this paper.

In \autoref{Section: Local and global estimates}, we will show that
\begin{align*}
	&\left \lVert 1_{{(a,a+1)\times(b,b+1)}}\big(P_\mu^+ - P_\mu(\R^+\times\R)\big)\right\rVert_{2} \le C\exp(-a/2) \, .
\end{align*}
Afterwards, the rough idea is to use the triangle inequality and the summability of the previous expression over $a$ to arrive at
\begin{align*}
	\left\lVert 1_{(0,L)\times (-L,L)} \big(P_\mu^+ - P_\mu(\R^+\times\R)\big)\right\rVert_{2} \le 2(L+1) \sum_{a=0}^LC \exp(-a/2)\le CL \, .
\end{align*}
This gets a bit more complicated then stated here, as we actually need to estimate the $p$-Schatten quasi norm for some $p<1$ instead of the Hilbert--Schmidt norm, see \autoref{Section: Local and global estimates}.


\section{Forms and operators, Fourier transform, and direct integrals} \label{section Fourier reduction}

The goal of this section is to formalize and prove \eqref{F2 operator trafo intro eq}. To this end, we also provide definitions of our differential operators via their associated forms. All this is mainly for the convenience of the reader and nothing new. As standard references to this topic we recommend \cite{Kato,RS2,RS4} for details. This section should also prepare for the calculations in the sections following.


Let us first formally define our operators $H^{(+)}$ by the sesquilinear forms
\begin{align} \label{H(+) form def}
	q^{(+)}(f,g) \coloneqq \int_{\R^{(+)} \times \R}\mathrm d x\,\Big(\overline{ \partial_1 f (x) } \partial_1 g(x) + \overline{ (\mathrm{i} \partial_2+ x_1)f(x)} (\mathrm{i} \partial_2+x_1)g(x)\Big) 
\end{align}
with form domains 
\begin{align}
	\mathcal Q (q)    	&\coloneqq \big\{ f \in (\mathsf{H}^1_{\text{loc}} \cap \Lp^2)(\R^2) : q(f,f)<\infty \big\} \, , \label{H form domain}\\
	 \mathcal Q (q^+)	& \coloneqq  \big\{ f \restriction_{\R^+\times \R} : f \in \mathcal Q(q), f \restriction_{\R^-\times \R}=0 \big\} \,.\label{H+ form domain}
\end{align}

To relate the definition of $\mathcal Q(q^+)$ to the Dirichlet boundary condition $f(0,x_2)=0$ for almost every $x_2 \in \R$, we have to define this expression. A priori, our $f$ would only be defined up to sets of measure $0$, like $\{0\}\times \R$. Let $f \in \mathcal Q(q)$. Clearly, $f \in \Lp^2(\R^2)$ and $\partial_1 f \in \Lp^2(\R^2)$. Thus, by Fubini, for almost every $x_2 \in \R$, we get $f(\cdot,x_2), \partial_1 f(\cdot,x_2) \in \Lp^2(\R)$, which states $f(\cdot,x_2) \in \mathsf{H}^1(\R)$. Due to the Sobolev imbedding theorem~\cite[Theorem 5.4, Case C]{Adams}, such $f(\cdot,x_2)$ have a unique continuous representative. This yields that $f(0,x_2)$ is well-defined for almost every $x_2 \in \R$. If now $f\restriction_{\R^- \times \R } =0$, we see that $f(0,x_2)=0$ for almost every $x_2 \in \R$. Thus, any $g \in \mathcal Q(q^+)$ can be uniquely extended to $\{0\}\times \R$ and vanishes at $g(0,x_2)$ for almost every $x_2 \in \R$. 	

We equip these sesquilinear form domains with the (graph) scalar products $q^{(+)}+\langle \cdot, \cdot \rangle_{\Lp^2(\R^{(+)}\times \R)}$, which turns them into Hilbert spaces. To show that we note that the sesquilinear forms $q^{(+)}$ satisfy the following properties:
\begin{enumerate}
\item[(i)] non-negativity, that is, $q^{(+)}(f,f)\ge0$, which is trivial by the very definition;
\item[(ii)] symmetry, that is, $q^{(+)}(f,g) = \overline{q^{(+)}(g,f)}$, which is also trivial by the very definition;
\item[(iii)] closedness, that is, the form domain $\mathcal Q(q^{(+)})$ is complete under the norm $\sqrt{q^{(+)}(f,f)+\|f\|^2}$, which happens to be the norm induced by the above scalar product. Closedness is non-trivial to prove.  
\item[(iv)] density, that is, $\mathcal Q (q^{(+)})$ is a dense subset of $\Lp^2(\R^{(+)}\times \R)$.
\end{enumerate}

The last two points are proved in \autoref{App C+ density}, see \autoref{Q are Hilbert spaces} and \autoref{Q is dense}.

The sesquilinear forms $q^{(+)}$ uniquely determine the self-adjoint operators, $H^{(+)}$, known as the Friedrichs extensions. To this end, one starts with defining the operators $H^{(+)}$ on some dense domain, for example on $\mathsf{C}^\infty_{\text c}(\R^{(+)}\times\R)$. Recall that the lower index $\text{c}$ in $\mathsf{C}^\infty_{\text{c}}$ stands for compactly supported functions. Clearly, $H^{(+)}$ are positive, symmetric operators and for $f,g\in \mathsf{C}^\infty_{\text c}(\R^{(+)}\times\R)$ we have
\begin{align}\label{form vs op}
q^{(+)}(f,g) = \langle f,H^{(+)}g\rangle\,.
\end{align}
Then, there are unique self-adjoint operators, also denoted by $H^{(+)}$, which are extensions of the operators originally defined on $\mathsf{C}^\infty_{\text c}(\R^{(+)}\times\R)$ and whose operator domains, $\mathcal D(H^{(+)})$, are contained in the form domains of $q^{(+)}$. See \cite[Theorem X.23]{RS2}. We are going to show now that 
\begin{align}
	\mathcal D (H) &= \big\{ g \in (\mathsf{H}^2_{\text{loc}} \cap \Lp^2)(\R^2) : Hg \in \Lp^2(\R^2) \big\} \, ,\\
	\mathcal D (H^+)& = \big\{ g \in (\mathsf{H}^2_{\text{loc}}  \cap \Lp^2)(\R^+ \times \R) : H^+g \in \Lp^2(\R^+\times\R), g \restriction_{\{0\}\times \R}=0 \big\} \,.
\end{align}
As we discussed after \eqref{H+ form domain}, the boundary condition $g\restriction_{\{0\}\times \R}=0$ is well-defined for all $f \in \mathcal Q(q^+) \supset \mathcal D (H^+)$. Any function $g$ in the claimed form domains clearly satisfies \eqref{form vs op} for every $f \in \mathcal Q(q^{(+)})$. We have to show that those are all functions that do so.

The conditions $H^{(+)}g \in \Lp^2(\R^{(+)}\times \R)$ are just part of the Friedrichs extensions. The non-trivial claim is that $g \in \mathsf{H}^2_{\text{loc}} (\R^{(+)}\times \R)$. Since $g$ is in the form domain, we know that $g \in\mathsf{H}^1_{\text{loc}}(\R^{(+)}\times \R)$. As the identity \eqref{form vs op} holds for all $f \in \mathsf{C}^\infty_{(\text{c})}(\R^{(+)}\times \R)$, we can conclude that the distributional derivative $\big(-\p_1^2 + (-\mathrm{i}\p_2 - B x_1 )^2 \big) g$ is square integrable. Since $g \in \mathsf{H}^1_{\text{loc}}(\R^{(+)}\times \R)$, every term which does not contain a second derivate is locally square integrable. Thus, $(-\p_1^2-\p_2^2)g \in \Lp^2_{\text{loc}}(\R^{(+)}\times \R)$. A well-known statement on elliptic regularity (see \cite[Theorem 1, Chp.~6.3.1]{Evans}) now implies $g \in \mathsf{H}^2_{\text{loc}}(\R^{(+)}\times \R)$.

Moreover, \eqref{form vs op} holds for all $f\in \mathcal Q\big(q^{(+)}\big), g\in \mathcal D\big(H^{(+)}\big)$.


We will now introduce a convenient dense subset, the linear span of $\mathcal C^{(+)}$. Let us define the one-dimensional Schwartz space
\begin{align}
	\mathcal S (\R) \coloneqq \big\{ f \in \mathsf{C}^\infty(\R) : \forall m , n \in \N \colon \sup_{x \in \R}  \lvert x \rvert^n \lvert f^{(m)} (x) \rvert < \infty \big\} \,.
\end{align}
We now consider 
\begin{align} \label{def C}
	\mathcal C ^{(+)} &\coloneqq 
	\big\{f:\R^{(+)}\times\R\to\C: f(x)= f_1(x_1) f_2(x_2), x=(x_1,x_2), f_1 \in \mathsf{C}^\infty_{\text{c}}\big(\R^{(+)}\big), f_2 \in \mathcal S(\R) \big\}\, .
\end{align}
The linear spans of $\mathcal C ^{(+)}$ are dense in the respective Hilbert spaces $\mathcal Q\big(q^{(+)}\big)$, as we show in \autoref{App C+ density}. 


We continue by introducing the partial Fourier transform $\mathcal F_2$, given by
\begin{equation} \label{Fourier 2 def}
	\mathcal F_2(f)(x_1,k) \coloneqq (2\pi)^{-1/2} \int_\R \dd x_2 \, f(x_1,x_2) \exp(-\mathrm{i}x_2 k)\,,\quad (x_1,k)\in\R^2\,.
\end{equation}
This is well-defined for a function $f\in\mathsf{L}^1(\R^2)$, and extended to $\mathsf{L}^2(\R^2)$ as a limit. It can trivially be restricted to $\mathsf{L}^2(\R^+\times \R)$ and is an isometry on both $\mathsf{L}^2(\R^2)$ and $\mathsf{L}^2(\R^+\times \R)$ by Plancherel's identity and Fubini. Its adjoint (and thus inverse) is given by
\[ 
	(\mathcal F_2^{*} f)(x_1,x_2) = (2\pi)^{-1/2} \int_\R \dd k\, f(x_1,k) \exp(\mathrm{i}x_2 k)\,,\quad (x_1,x_2)\in \R^{(+)}\times\R\,.
\]
For $f \in \mathcal C^{(+)}$, we note that $(\mathcal F_2 f )(x_1,k)=f_1(x_1) (\mathcal F f_2)(k)$, where 
\[
	(\mathcal F f_2)(k) \coloneqq (2\pi)^{-1/2} \int_\R \mathrm d t\, f_2(t)\exp(-\mathrm{i} tk)
\]
is the one dimensional Fourier transform. In particular, as $\mathcal F$ preserves the Schwartz space $\mathcal S(\R)$, we can conclude that $\mathcal F_2$ preserves the spaces $\mathcal C^{(+)}$, meaning $\mathcal F_2\,\mathcal C^{(+)} = \mathcal C^{(+)} = \mathcal F_2^* \,\mathcal C^{(+)}$.
 
 
For any $k \in \R$, the operators $H^{(+)}(k)$ are formally defined via the sesquilinear forms
\begin{align} \label{H(+)k form def}
	q^{(+)}_k(f,g)\coloneqq \int_{\R^{(+)}} \mathrm d x\, \Big(\overline {f'(t)} g'(x) + \overline{(x-k) f(x) } (x-k) g(x)  \Big)
\end{align}
with form domains
\begin{align}
	\mathcal Q (q_k) &\coloneqq \big\{ f \in \mathsf{H}^1 (\R) : q_k(f,f)<\infty \big\} \, ,\label{Hk form domain}\\
	\mathcal Q (q^+_k)	&\coloneqq  \big\{ f \in \mathsf{H}_0^1(\R^+) : q^+_k(f,f)<\infty\big\} \,.\label{H+k form domain}
\end{align}
As $q^{(+)}_k(f,f)\ge \lVert f' \rVert_{\Lp^2(\R^{(+)})} $, the form domains are just subspaces of the global Sobolev spaces, that is, $\mathcal Q (q_k)\subset \mathsf{H}^1(\R)\subset\Lp^2(\R)$ and $\mathcal Q (q^+_k)\subset \mathsf{H}^1_0(\R^+)\subset\Lp^2(\R^+)$, instead of its local versions, as we saw before. 

Let us now introduce the operator domains associated to these forms,
\begin{align}
	\mathcal D (H(k)) &=\big\{ g \in (\mathsf{H}^2_{\text{loc}}\cap \mathsf{H}^1)(\R), Hg \in \Lp^2(\R) \big\} \, ,\\
	\mathcal D(H^+(k))&=\big\{ g \in (\mathsf{H}^2_{\text{loc}}\cap \mathsf{H}^1_0)(\R^+), H^+g \in \Lp^2(\R^+)\big\} \, .
\end{align}
Trivially, for all $g \in \mathcal{D} \big(H^{(+)}(k)\big), f \in \mathcal Q \big(q^{(+)}_k\big)$, we have
\begin{align}
	q^{(+)}_k(f,g)= \langle f, H^{(+)}(k) g \rangle \, .
\end{align}
Similar to the arguments for $\mathcal D (H^{(+)})$, we start by observing that the last identity holds for all $f \in \mathsf{C}^\infty_{(\text{c})}(\R^{(+)})$, which tells us that the distribution $(-\p_1^2+(x_1-k)^2)g$ is square integrable and, as $(x_1-k)^2g$ is locally square integrable, this directly says that $\p_1^2 g = g''$ is locally square integrable. Since $g \in \mathsf{H}^1(\R^{(+)})$, this yields $g \in \mathsf{H}^2_{\text{loc}}(\R^{(+)})$.

We continue by constructing the direct integral operators $\int_\R^\oplus  \dd k\, H^{(+)}(k) $ via the sesquilinear forms
\begin{align}
	\hat q^{(+)} (f,g)& \coloneqq \int_\R \mathrm d k\, q^{(+)}_k\big(f(\cdot, k), g(\cdot,k)\big)\\
	& = \int_\R \mathrm d k\int_{\R^{(+)}} \mathrm d x_1 \Big(\overline{\partial_1 f (x_1,k)} \partial_1 g(x_1,k) + \overline{(x_1-k)f(x_1,k)}(x_1-k)g(x_1,k) \Big)\,,
\end{align}
with form domains
\begin{align}
	\mathcal Q\big( \hat q^{(+)}\big) \coloneqq \big\{ f \in \Lp^2(\R^{(+)}\times \R) : \text{ for almost every } k \in \R \colon  f(\cdot, k) \in \mathcal Q\big(q^{(+)}_k\big),  \hat q^{(+)}(f,f)<\infty \big\} \, .
\end{align}
In this paper, we do not need to specify the operator domains and refrain from doing so. Using the Hilbert space tensor product, we can identify these form domains as
\begin{align}
	\mathcal Q(\hat q) &= \big\{f \in \mathsf{H}^1(\R) \otimes \Lp^2(\R) \colon \hat q (f,f) <\infty \big\} \, , \label{hat H form domain}\\
	\mathcal Q\big(\hat q ^{+}\big) &= \big\{ f \in \mathsf{H}^1(\R^+)\otimes \Lp^2(\R) \colon \hat q^+ (f,f) <\infty, f \restriction_{\{0\}\times \R}=0 \big\}  \label{hat H+ form domain} 
	\\
	&= \big\{ f \in \mathsf{H}^1_0(\R^+)\otimes \Lp^2(\R) \colon \hat q^+ (f,f) <\infty \big\}\nonumber\,.
\end{align}
As before, we equip $\mathcal Q\big( \hat q^{(+)}\big)$ with the scalar product $\hat q^{(+)}+\langle \cdot, \cdot \rangle _{\Lp^2(\R^{(+)}\times \R)}$, which turns them into Hilbert spaces. Once again, the linear span of  $\mathcal C^{(+)}$ is dense in $\mathcal Q \big( q^{(+)}\big)$, as we show in \autoref{App C+ density}.

The forms $\hat{q}^{(+)}$ are non-negative, symmetric, and closed. Hence, they determine the unique self-adjoint (and positive) operators $\int_\R^\oplus  \dd k\, H^{(+)}(k)$ such that 
\begin{align} \hat{q}^{(+)}(f,g) = \big\langle f,\int_\R^\oplus  \dd k\, H^{(+)}(k)g\big\rangle \,,
\end{align}
for all $f\in \mathcal Q\big( \hat q^{(+)}\big), g\in \mathcal D\big(\int_\R^\oplus  \dd k\, H^{(+)}(k)\big)$.


As $\mathcal F_2$ is an isometry on $\Lp^2(\R^{(+)}\times \R)$, the claim
\begin{equation} \label{Fourier direct integral identity final eq}
	\mathcal F_2 H^{(+)} \mathcal F_2^*= \int_\R^\oplus  \dd k\, H^{(+)}(k) 
\end{equation}
boils down to showing that $\mathcal F_2$ is an isometry from $\mathcal Q ( \hat q^{(+)})$ to $\mathcal Q (q^{(+)})$;  \eqref{Fourier direct integral identity final eq} is then valid as an identity of operators including the respective operator domains. These domains are determined abstractly through the Friedrichs extension, which we do not specify further.

To verify \eqref{Fourier direct integral identity final eq}, let $f\in \mathcal C^{(+)}$. With an integration by parts, we arrive at
\begin{align*}
	(x_1-k) (\mathcal F_2 f)(x_1,k)&=f_1(x_1) (x_1-k)  (2\pi)^{-1/2}\int_\R \mathrm d t\,f_2(t) \exp(-\mathrm{i}t k)  \\
	&= f_1(x_1)(2\pi)^{-1/2} \int_\R \mathrm d t\,\big(x_1f_2(t)+\mathrm{i} f_2'(t)\big) \exp(-\mathrm{i} t k) \\
	&=f_1(x_1)  \big(\mathcal F (x_1 f_2+\mathrm{i} f_2'\big)(k)\,.
\end{align*}
Thus, for any $f,g \in \mathcal C^{(+)}$, using Plancherel's identity, we get
\begin{align*}
	&\hat q^{(+)}(\mathcal F_2 f, \mathcal F_2 g) \\
	&= \int_\R  \mathrm d k\int_{\R^{(+)}} \mathrm d x_1\,\Big(\overline{\partial_1 (\mathcal F_2f) (x_1,k)} \partial_1  (\mathcal F_2g)(x_1,k) + \overline{(x_1-k) (\mathcal F_2f)(x_1,k)}(x_1-k) (\mathcal F_2g)(x_1,k) \Big)\\
	&=\int_\R \mathrm d k\int_{\R^{(+)}} \mathrm d x_1\, \overline{f_1'(x_1) (\mathcal F f_2)(k) } {g_1'(x_1) (\mathcal F g_2) (k)}  \\
	&+\int_\R \mathrm d k\int_{\R^{(+)}} \mathrm d x_1\,\overline{f_1(x_1)  (\mathcal F (x_1 f_2+\mathrm{i} f_2'))(k)}g_1(x_1)  (\mathcal F (x_1 g_2+\mathrm{i} g_2'))(k)  \\
	&=\int_{\R^{(+)}} \mathrm d x_1\, \overline {f_1'(x_1)} g_1'(x_1) \int_\R \mathrm d k\, \overline{\mathcal F f_2(k)} \mathcal F g_2(k) \\
	&+\int_{\R^{(+)}} \mathrm d x_1 \,\overline{f_1(x_1)} g_1(x_1) \int_\R \mathrm d k\, \overline{(\mathcal F (x_1 f_2+\mathrm{i} f_2'))(k)} (\mathcal F (x_1 g_2+\mathrm{i} g_2'))(k) \\
	&=\int_{\R^{(+)}} \mathrm d x_1\, \overline {f_1'(x_1)} g_1'(x_1) \int_\R \mathrm d x_2 \, \overline{ f_2(x_2)} g_2(x_2) \\
	&+\int_{\R^{(+)}} \mathrm d x_1\, \overline{f_1(x_1)} g_1(x_1) \int_\R \mathrm d x_2\, \overline{(x_1 f_2+\mathrm{i} f_2')(x_2)}  (x_1 g_2+\mathrm{i} g_2')(x_2)  \\
	&= \int_{\R^{(+)}\times \R}   \dd x\, \Big(\overline{\partial_1 f (x)} \partial_1 g(x) + \overline{(x_1+\mathrm{i}\partial_2)f(x)} (x_1 +\mathrm{i} \partial_2)g(x)\Big) \\
	&= q^{(+)}(f,g) \, .
\end{align*}
As $\mathcal F_2 \,\mathcal C^{(+)}=\mathcal C^{(+)}$, we can set $\tilde f = \mathcal F_2 f , \tilde g \coloneqq \mathcal F_2 g$ and thus see for all $\tilde f, \tilde g \in \mathcal C^{(+)}$ that
\begin{align}
	\hat q^{(+)}(\tilde f, \tilde g)= q^{(+)}(\mathcal F_2^* \tilde f, \mathcal F_2^* \tilde g) \,. \label{reverse form identity}
\end{align}
As $q^{(+)}$ and $\hat q^{(+)}$ are sesquilinear and the span of $\mathcal C^{(+)}$ is dense in $\mathcal Q\big(q^{(+)}\big)$, we can conclude that $q^{(+)}(f,g) = \hat q^{(+)}(\mathcal F_2 f, \mathcal F_2 g)$ for all $f,g \in \mathcal Q\big(q^{(+)}\big)$ and $\mathcal F_2 \mathcal Q\big(q^{(+)}\big)\subset \mathcal Q \big(\hat q^{(+)}\big)$. Now, \eqref{reverse form identity} tells us that the we can do this in the other direction and arrive at $\mathcal F_2^* \mathcal Q\big(\hat q^{(+)}\big)\subset \mathcal Q \big(q^{(+)}\big)$. Thus, we have shown $\mathcal F_2 \mathcal Q\big(q^{(+)}\big)= \mathcal Q \big(\hat q^{(+)}\big)$, which was the last step towards proving \eqref{Fourier direct integral identity final eq}.

\section{Proof of local Hilbert--Schmidt norm identity \eqref{HS norm identity intro eq}}\label{Section 4}

In this section, we will establish an identity for the Hilbert--Schmidt norm of the difference of the spectral projection $P_\mu^+$ and of $P_\mu(\R^+\times\R) = 1_{\R^+\times\R}P_\mu1_{\R^+\times\R}$ localized to a square $Q_{a,b} = (a,a+1)\times (b,b+1)$ of side length 1. From this integral expression we will obtain all the necessary estimates in \autoref{Section: Local and global estimates}. 

This identity is expressed in the following theorem.
\begin{thm} \label{HS norm integral rep thm}
	For any $a\in \N,b\in \Z$, we have the Hilbert--Schmidt norm (see \eqref{def Schatten norm} with $p=2$) identity
	\begin{align} 
		&\left \lVert 1_{{(a,a+1)\times(b,b+1)}}\big(P_\mu^+ - P_\mu(\R^+\times\R)\big)
		\right\rVert_{2}^2 \\
		&=\label{HS norm integral rep eq} 
		\frac 1 {2\pi}  \int_{\R} \mathrm d s \int_a^{a+1} \mathrm d x_1
		\int_{\R^+} \mathrm d y_1 \, 
		\big[1(s\ge k_F)\psi_s(x_1)\psi_s(y_1) -\varphi_s(x_1)\varphi_s(y_1) \big]^2 \nonumber 
		\,,
	\end{align}
	where $k_F\in \R^+$ is defined via the identity $\nu(k_F)=\tfrac{\mu-1}2$ with $\nu(\cdot)$ given by \eqref{def of nu(k)}.
\end{thm}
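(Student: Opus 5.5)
Via the fibre decomposition \eqref{Fourier direct integral identity final eq}, the spectral calculus passes through $\mathcal F_2$ fibrewise, so
\[
\mathcal F_2 P^{(+)}_\mu \mathcal F_2^* = \int_\R^\oplus \dd k\, 1\big(H^{(+)}(k)\le\mu\big).
\]
The plan is to compute the fibre projections explicitly, write down the integral kernels of $P_\mu^+$ and $P_\mu(\R^+\times\R)$, and then evaluate the Hilbert--Schmidt norm of the localized difference as the $\Lp^2$-norm of that kernel over $Q_{a,b}\times(\R^+\times\R)$.

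\textbf{Fibre projections.} On $\R$, $H(k)$ is a shifted harmonic oscillator with eigenvalues $2\ell+1$. Since $\mu\in(1,3)$, we get $1(H(k)\le\mu)=|\varphi_k\rangle\langle\varphi_k|$ for every $k$. On $\R^+$, $H^+(k)$ has simple discrete spectrum. I will use the (presumably earlier or upcoming) facts about its lowest eigenvalue $1+2\nu(k)$ from \eqref{def of nu(k)}: it is continuous and strictly decreasing in $k$, tends to $\infty$ as $k\to-\infty$ and to $1$ as $k\to\infty$. The second eigenvalue also tends to $3$ from above as $k\to\infty$, so it always exceeds $3>\mu$ (Dirichlet bracketing against the full-line oscillator). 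Hence
\[
1(H^+(k)\le\mu)=1(k\ge k_F)\,|\psi_k\rangle\langle\psi_k|, \qquad \nu(k_F)=\tfrac{\mu-1}2 .
\]
Making these monotonicity and eigenvalue-comparison facts precise is the main technical obstacle; I would prove them with Feynman--Hellmann and a comparison with the Neumann and full-line oscillators.

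\textbf{Kernels.} Conjugating back with $\mathcal F_2$, both operators on $\Lp^2(\R^+\times\R)$ have real-valued-in-$s$ kernels
\[
K(x,y)=\frac1{2\pi}\int_\R \dd s\, e^{\mathrm i s(x_2-y_2)}\,g_s(x_1,y_1), \qquad g_s(x_1,y_1)=1(s\ge k_F)\psi_s(x_1)\psi_s(y_1)-\varphi_s(x_1)\varphi_s(y_1),
\]
restricted to $x_1,y_1>0$. The restriction is exactly what $1_{\R^+\times\R}P_\mu 1_{\R^+\times\R}$ produces.

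\textbf{Norm computation.} The squared HS norm of $1_{Q_{a,b}}(P^+_\mu-P_\mu(\R^+\times\R))$ equals
\[
\int_{Q_{a,b}}\dd x\int_{\R^+\times\R}\dd y\,|K(x,y)|^2 .
\]
Fix $x$ and apply Plancherel in $y_2$: $K(x,\cdot)$ is, up to the factor $(2\pi)^{-1/2}$ and a phase $e^{\mathrm i s x_2}$, the inverse Fourier transform of $s\mapsto g_s(x_1,y_1)$. This gives
\[
\int_\R \dd y_2\,|K|^2=\frac1{2\pi}\int_\R \dd s\, g_s(x_1,y_1)^2 .
\]
The result is independent of $x_2$, so the integral over $x_2\in(b,b+1)$ contributes a factor $1$, and the formula follows by Tonelli.

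To justify Plancherel I need $s\mapsto g_s(x_1,y_1)\in\Lp^2(\dd s)$ for a.e. $(x_1,y_1)$. This follows from Gaussian decay of $\varphi_s$ as $|s|\to\infty$, together with the analogous bound $|\psi_s(x)|\lesssim e^{-(x-s)^2/4}$ for $s\ge k_F$. Alternatively, I can avoid pointwise kernels entirely: write the operator as $1_{Q}\mathcal F_2^*\big(\int^\oplus A_s\big)\mathcal F_2$ with $A_s$ the rank-$\le2$ fibre operator compressed to $\R^+$, then compute the HS norm as $\tr$ of its square fibrewise. In that formulation the identity $\|1_{(b,b+1)}(x_2)\,\mathcal F^*\|$-type computation gives the $1/(2\pi)$ factor.
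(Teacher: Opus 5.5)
Your argument is correct and gives the same identity, but you apply Plancherel on the other side of $\mathcal F_2$.

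The paper conjugates by $\mathcal F_2$. Both projections then become diagonal in the fibre variable $s$, and the localisation $1_{(b,b+1)}(x_2)$ turns into convolution with the sinc kernel of \autoref{F2 indicator lem}. Squaring the resulting kernel and integrating $|\sin((k-s)/2)/(\pi(k-s))|^2$ over $k$ gives the factor $\frac1{2\pi}$. You instead stay in position space, where the localisation is diagonal and the kernel is an inverse Fourier integral in $s$, and then use Plancherel in $y_2$.

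What the paper's route buys is that no decay of $s\mapsto g_s(x_1,y_1)$ is ever needed. The sinc factor supplies the integrability, and Tonelli gives the identity in $[0,\infty]$ using only that the fibre operators are rank one. Your route does need such decay, both to make sense of the $s$-integral defining the position-space kernel and to identify that kernel with the operator. Concretely, you need the bound $|\psi_s(x)|\le C\exp(-(x-s)^2/4)$. It is true, but in the paper it only follows from the later parabolic-cylinder estimates of \autoref{phik - psik pointwise est cl} together with \eqref{D large x}. Your fallback avoids this: it is the fibrewise trace, with $\frac1{2\pi}$ appearing as the diagonal value of the kernel of the projection $\mathcal F 1_{(b,b+1)}\mathcal F^*$. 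That fallback is essentially the paper's computation.

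For the fibre projections, your approach is more elementary than the Sturm comparison of \autoref{Sturm comparison principle lem}. It combines min-max bracketing, which gives that the second eigenvalue of $H^+(k)$ is at least $3>\mu$, with monotonicity of the ground-state energy, for example via Hadamard/Feynman--Hellmann, $\frac{\dd}{\dd k}\big(2\nu(k)+1\big)=-\psi_k'(0)^2<0$. However, you still need \autoref{eigenfunctions are D nus lem} to identify the ground-state energy with $1+2\nu(k)$, with $\nu$ defined through the zeros of $D_\nu$. That identification is required because the statement defines $k_F$ through $\nu$.
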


In order to prove this, we need a few intermediate results. 

\begin{lemma}\label{F2 on projections lem}
	For any $\mu \in \R$, we have the identities 
	\begin{align}
		\mathcal F_2 1_{(-\infty,\mu]}\big(H^{(+)}\big) \mathcal F_2^*
		= \int_\R^\oplus  \dd k\, 1_{(-\infty,\mu]}(H^{(+)}(k)) \, .
	\end{align}
\end{lemma}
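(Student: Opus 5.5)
The plan is to derive the lemma from the unitary equivalence \eqref{Fourier direct integral identity final eq} using the functional calculus, in two steps. First, since $\mathcal F_2$ is unitary on $\Lp^2(\R^{(+)}\times\R)$, for every bounded Borel function $g$ on $\R$ the spectral theorem gives
\[
\mathcal F_2\, g\big(H^{(+)}\big)\,\mathcal F_2^* = g\big(\mathcal F_2 H^{(+)}\mathcal F_2^*\big) = g\Big(\int_\R^\oplus \dd k\, H^{(+)}(k)\Big).
\]
This holds because unitary conjugation maps the projection-valued measure of $H^{(+)}$ onto that of the conjugated operator. Taking $g=1_{(-\infty,\mu]}$ reduces the lemma to the identity
\[
1_{(-\infty,\mu]}\Big(\int_\R^\oplus \dd k\, H^{(+)}(k)\Big) = \int_\R^\oplus \dd k\, 1_{(-\infty,\mu]}\big(H^{(+)}(k)\big).
\]
This is the standard fact that the functional calculus commutes with direct integrals, \cite[Theorem XIII.85]{RS4}.

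The main obstacle is that the direct integral $\int^\oplus_\R\dd k\,H^{(+)}(k)$ was defined here through the form $\hat q^{(+)}$, not as in \cite{RS4}. We must therefore check that it coincides with the Reed--Simon direct integral. That requires measurability of $k\mapsto H^{(+)}(k)$, in the sense that $k\mapsto (H^{(+)}(k)+\mathrm{i})^{-1}$ is weakly measurable. I would do this via resolvents instead of arguing abstractly:

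\begin{enumerate}
\item[(a)] For $z=-1$, define $R(k)\coloneqq (H^{(+)}(k)+1)^{-1}$. The forms $q^{(+)}_k$ depend on $k$ only through the bounded-on-compacts multiplier $(x-k)^2$, so $k\mapsto \langle f,R(k)g\rangle$ is continuous. This follows from the second resolvent identity applied to functions in $\mathsf{C}^\infty_{\text c}$, extended by density, and it gives measurability.
\item[(b)] Show that $\int^\oplus \dd k\,R(k)$ is the inverse of $\int^\oplus\dd k\,H^{(+)}(k)+1$. Take $g\in\Lp^2$ and set $u(\cdot,k)=R(k)g(\cdot,k)$. By the form identity $q_k^{(+)}(f,u(\cdot,k))+\langle f,u(\cdot,k)\rangle=\langle f,g(\cdot,k)\rangle$, integrated over $k$, we get $u\in\mathcal Q(\hat q^{(+)})$ and $\hat q^{(+)}(f,u)+\langle f,u\rangle=\langle f,g\rangle$ for all $f\in\mathcal Q(\hat q^{(+)})$. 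Hence $u$ is the Friedrichs resolvent applied to $g$.
\end{enumerate}

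Once the resolvents agree fiberwise, so do $\phi(H)$ for $\phi(t)=(t+1)^{-n}$. By Stone--Weierstrass these functions are dense in $C_0(\R)$, so every continuous function vanishing at infinity is decomposed fiberwise. Monotone strong limits of continuous functions increasing to $1_{(-\infty,\mu]}$ then pass through the direct integral by dominated convergence in $k$. This yields the indicator function and proves the lemma.
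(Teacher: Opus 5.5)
Your proof is correct in substance, but it takes a different route from the paper's.

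\textbf{Comparison with the paper's proof.} The paper also cites \cite[Theorem XIII.85]{RS4}. Its own sketch extends \eqref{Fourier direct integral identity final eq} to polynomials in $H^{(+)}$ on the span of $\mathcal C^{(+)}$, then to the groups $\exp(\mathrm{i}tH^{(+)})$, and finally recovers the indicator from a Fourier inversion formula for $1_{[0,\mu]}$.

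You instead note that conjugation by the unitary $\mathcal F_2$ commutes with the Borel functional calculus for free. This isolates the real content: the functional calculus of the \emph{form-defined} operator $\int_\R^\oplus\dd k\,H^{(+)}(k)$ acts fiberwise. You prove that by identifying its resolvent at $-1$ with $\int_\R^\oplus\dd k\,(H^{(+)}(k)+1)^{-1}$ through the form representation theorem, and then pass to $C_0$-functions and indicators.

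This gives rigor exactly where the paper's sketch is thin:
\begin{enumerate}
\item[(i)] Nonzero elements of the span of $\mathcal C^{(+)}$ are not analytic vectors, since their $x_1$-factor has compact support.
\item[(ii)] The span of $\mathcal C^{+}$ is not an operator core for $H^+$. Its elements vanish near $\{0\}\times\R$, so they also lie in the domain of the Neumann realization.
\end{enumerate}
Hence polynomial identities on $\mathcal C^{(+)}$ do not by themselves determine $\exp(\mathrm{i}tH^{(+)})$. Your resolvent argument, by contrast, tracks the Dirichlet condition through the form.

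There are two further gains. The symmetric Fourier inversion in the paper returns $\tfrac12$ at the jump point $\mu$, so it needs an extra limit when $\mu$ is an eigenvalue (e.g.\ $\mu=1$ for $H$); your monotone-limit step covers every $\mu\in\R$. You also make explicit that the form-defined direct integral coincides with the Reed--Simon one, which the citation tacitly assumes. The paper's route is more hands-on, but it remains a sketch.

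\textbf{Slips to fix.}
\begin{enumerate}
\item[(1)] $1_{(-\infty,\mu]}$ is upper but not lower semicontinuous. It is therefore a \emph{decreasing}, not an increasing, limit of continuous functions, e.g.\ $\phi_n=1$ on $(-\infty,\mu]$, $\phi_n=0$ on $[\mu+1/n,\infty)$, linear in between. Bounded monotone convergence then works as you describe.
\item[(2)] $(t+1)^{-n}$ is singular at $t=-1$, so the Stone--Weierstrass step belongs in $C_0([0,\infty))$. This suffices because all operators involved are nonnegative.
\item[(3)] For (a), the cleanest justification is the identity $\langle f,(R(k')-R(k))g\rangle=(k'-k)\langle(2x-k-k')R(k)f,R(k')g\rangle$. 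It follows from the form equations for $R(k)f$ and $R(k')g$, which is legitimate because the form domains do not depend on $k$. Together with $\|(x-k)R(k)f\|\le\|f\|$, it even yields norm continuity of $k\mapsto R(k)$.
\end{enumerate}
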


\begin{proof} See \cite[Theorem XIII.85]{RS4} for an abstract and a (slightly) more general statement replacing the indicator function by a bounded Borel function. In our explicit situation, we argue as follows. At first, extend \eqref{Fourier direct integral identity final eq} to polynomials, e.g.
\begin{align*} \mathcal F_2 \big(H^{(+)}\big)^2 \mathcal F_2^*&= \mathcal F_2 H^{(+)} \mathcal F_2^* \mathcal F_2 H^{(+)} \mathcal F_2^*
\\
&=\int_\R^\oplus  \dd k\, H^{(+)}(k) \cdot \int_\R^\oplus  \dd k\, H^{(+)}(k) 
\\
&=\int_\R^\oplus  \dd k\, \big(H^{(+)}(k)\big)^2\,.
\end{align*}
Here we use that $\mathcal C^{(+)}$ is invariant under $H^{(+)}$. On these sets, we can then prove that for $t\in\R$,
\begin{align}\label{exp direct integral identity}
 \mathcal F_2 \exp\big(\mathrm{i}t H^{(+)}\big) \mathcal F_2^* = \int_\R^\oplus  \dd k\, \exp\big(\mathrm{i}tH^{(+)}(k)\big)\,,
\end{align}
and then extend this to all of $\mathsf{L}^2(\R^{(+)}\times\R)$. Next, we write the indicator function as $1_{[0,\mu]} = \mathcal F^{-1}(\mathcal F 1_{[0,\mu]})$, that is, 
\begin{align*} 1_{[0,\mu]}(x) = \frac{1}{2\pi} \lim_{T\to\infty}\int_{-T}^T\dd t\, \frac{1}{\mathrm{i}t} \big(1-\exp(-\mathrm{i}t\mu)\big)\,\exp(\mathrm{i}t x)\,,\quad x\in\R\,,
\end{align*}
so that by positivity of $H^{(+)}$,
\begin{align*}
\mathcal F_2 1_{(-\infty,\mu]}\big(H^{(+)}\big) \mathcal F_2^* &= \mathcal F_2 1_{[0,\mu]}\big(H^{(+)}\big) \mathcal F_2^* 
\\
&= \frac{1}{2\pi} \lim_{T\to\infty}\int_{-T}^T\dd t\, \frac{1}{\mathrm{i}t} \big(1-\exp(-\mathrm{i}t\mu)\big)\,\mathcal F_2\exp\big(\mathrm{i}t H^{(+)}\big)\mathcal F_2^*\,.
\end{align*}
If we apply \eqref{exp direct integral identity} to this identity, interchange integrals and limits we have proved the formula in this lemma.
 
\end{proof}

\begin{lemma}\label{compact resolvent}
	For any $k \in \R$, the operators $H^{(+)}(k)$ have compact resolvent.
\end{lemma}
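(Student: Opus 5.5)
The plan is to show that the form domain $\mathcal Q\big(q^{(+)}_k\big)$, with its graph norm, embeds compactly into $\mathsf{L}^2(\R^{(+)})$. Once this holds, the resolvent $(H^{(+)}(k)+1)^{-1}$ is compact. The reason is that it maps $\mathsf{L}^2$ boundedly into the form domain: for $g=(H^{(+)}(k)+1)^{-1}f$ we have
\[ q^{(+)}_k(g,g)+\|g\|_2^2=\langle g,f\rangle\le \|g\|_2\|f\|_2\le\|f\|_2^2\,. \]
Composing this bounded map with the compact embedding gives a compact operator. Compactness of $(H^{(+)}(k)-z)^{-1}$ for other $z$ in the resolvent set then follows from the first resolvent identity.

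For the compact embedding, let $(g_n)$ be a sequence in $\mathcal Q\big(q^{(+)}_k\big)$ with $q^{(+)}_k(g_n,g_n)+\|g_n\|_2^2\le 1$. I will show that $(g_n)$ has a subsequence that is Cauchy in $\mathsf{L}^2(\R^{(+)})$. This takes two ingredients.
\begin{enumerate}
\item[(i)] \emph{Tail control.} For $R>|k|+1$,
\[ \int_{|x|>R}\dd x\,|g_n(x)|^2\le (R-|k|)^{-2}\int\dd x\,(x-k)^2|g_n(x)|^2\le (R-|k|)^{-2}\,, \]
uniformly in $n$.
\item[(ii)] \emph{Local compactness.} The restrictions $g_n\restriction_{(-R,R)\cap\R^{(+)}}$ are bounded in $\mathsf{H}^1$ of a bounded interval, since $\|g_n'\|_2^2\le q_k^{(+)}(g_n,g_n)\le1$. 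By the Rellich--Kondrachov theorem \cite{Adams}, they are therefore precompact in $\mathsf{L}^2$ of that interval. In one dimension one can also argue directly: $|g(x)-g(y)|\le|x-y|^{1/2}\|g'\|_2$, together with the bound from the Sobolev imbedding, gives equicontinuity and uniform boundedness, so Arzel\`a--Ascoli applies.
\end{enumerate}
A diagonal argument over $R\in\N$ then produces a subsequence converging in $\mathsf{L}^2$ on every bounded interval. By (i), this subsequence is Cauchy in $\mathsf{L}^2(\R^{(+)})$.

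Nothing here is a real obstacle. The only point needing care is that $q^{(+)}_k$ controls both $\|g'\|_2$ and $\|(x-k)g\|_2$, which follows directly from \eqref{H(+)k form def}. The Dirichlet case on $\R^+$ needs no extra work, because the bounded piece $(0,R)$ is handled exactly like the full-line case. As an alternative, one could quote the standard fact that $-\frac{\dd^2}{\dd x^2}+V$ with $V\to\infty$ has compact resolvent \cite[Theorem XIII.67]{RS4}. One could also observe that the spectrum of $H(k)$ is explicitly discrete, $\{2\ell+1\}$ with Hermite eigenfunctions forming an orthonormal basis, so the resolvent is a norm limit of finite-rank operators; $H^+(k)$ would then follow via the variational argument above.
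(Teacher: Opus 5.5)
Your proof is correct, and it takes a different route from the paper. The paper proves no compact embedding. It cites Simon's abstract criterion \cite[Theorem 2]{Simon08}, which gives a compact resolvent for $H_0+V$ when the free semigroup $e^{-tH_0}$ is ultracontractive (maps $\Lp^2$ into $\Lp^\infty$) and the potential is confining. Here $H_0$ is the Laplacian on $\Lp^2(\R)$, resp.\ the Dirichlet Laplacian on $\Lp^2(\R^+)$, and ultracontractivity is read off from the explicit heat kernels.

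You instead show directly that the form domain $\mathcal Q\big(q^{(+)}_k\big)$ embeds compactly into $\Lp^2(\R^{(+)})$. You use the two terms of the form separately: $\|(x-k)g\|_2$ gives uniform tail control, and $\|g'\|_2$ with Rellich--Kondrachov gives compactness on bounded intervals. You then pass this to the resolvent through $q^{(+)}_k(g,g)+\|g\|_2^2\le\|f\|_2^2$ and the resolvent identity.

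Your argument is elementary and complete as written. It treats the full line and the Dirichlet half-line identically, since no boundary condition is needed on $(0,R)$, and it uses the form domains exactly as the paper defines them. The paper's route is shorter and more general: it only needs finite-measure sublevel sets of the potential and an ultracontractive free operator, and it sits next to the stronger trace-class statements for the semigroup that the paper mentions. But it hands the actual work to the cited theorem and to the heat-kernel check. Your fallback of quoting the standard criterion for $-\frac{\dd^2}{\dd x^2}+V$ with $V\to\infty$ is closer in spirit to the paper. On the half-line it would still need your own variational argument.
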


\begin{proof}  There are many ways to prove this. One is to prove the stronger trace-class property of the semigroup by using the Golden--Thompson inequality, or somewhat weaker to prove that $\exp(\D)\exp(-V)$ is trace-class. We prefer to use ultra-contractivity, see \cite[Theorem 2]{Simon08}. To apply this theorem, we need to 
	show that both the Laplace operator on $\Lp^2(\R)$ as well as 
	the Dirichlet Laplace on $\Lp^2(\R^+)$ are ultracontractive, which means that their
	respective heat kernels map $\Lp^2$ to $\Lp^\infty$, which can be derived from the 
	known explicit representations of the heat kernels. 
\end{proof}

For the remaining part of this section, we rely on results in \autoref{App PCF}, which discuss properties of the parabolic cylinder functions $D_\nu$ and how they relate to the eigenfunctions of the operators $H^+(k)$. In this Appendix, we use an integral representation of the parabolic cylinder functions to derive all the relevant properties by hand. For this section, we only rely on them for the following lemma:

\begin{lemma}\label{lemma k_F}
	For any $\mu \in (1,3)$, there is a constant $k_F\ge 0$, such that for any 
	$k \ge k_F$, the operator $H^+(k)$ has exactly one (simple) eigenvalue 
	$\le \mu$, while for  $k<k_F$,  $H^+(k)$ has no eigenvalues $\le \mu$. This $k_F$ is defined via the identity $\nu(k_F)=\tfrac{\mu-1}2$ with $\nu(\cdot)$ given by \autoref{nu(k) asymptotics lem}.
\end{lemma}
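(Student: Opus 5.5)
The plan is to describe the spectrum of the Dirichlet oscillator $H^+(k)$ on $\R^+$ explicitly through parabolic cylinder functions, and then use monotonicity of the two lowest eigenvalues in $k$.

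\textbf{Eigenvalue equation.} By \autoref{compact resolvent}, $H^+(k)$ has purely discrete spectrum. Since it is a one-dimensional Sturm--Liouville operator with a regular (Dirichlet) endpoint at $0$ and a limit-point endpoint at $\infty$, every eigenvalue is simple. An eigenfunction with eigenvalue $E=2\nu+1$ must be the $\Lp^2$-solution at infinity of $-u''+(x-k)^2u=Eu$, which is $u(x)=D_\nu(\sqrt2(x-k))$. The Dirichlet condition then reads $D_\nu(-\sqrt2 k)=0$. Thus the eigenvalues are $E_j(k)=2\nu_j(k)+1$, where $\nu_0(k)<\nu_1(k)<\dots$ enumerate the solutions $\nu$ of $D_\nu(-\sqrt2k)=0$. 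The function $\nu(k)=\nu_0(k)$ is the one from \autoref{nu(k) asymptotics lem}.

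\textbf{Monotonicity and limits.} For $k'>k$, extending functions in $\mathsf H^1_0(\R^+)$ by zero and translating shows that $H^+(k)$ is unitarily equivalent to the Dirichlet oscillator $-\frac{\dd^2}{\dd x^2}+x^2$ on $(-k,\infty)$. Dirichlet domain monotonicity (the form domain for $(-k,\infty)$ embeds into that for $(-k',\infty)$) together with min-max gives that each $E_j(k)$ is non-increasing in $k$. It is in fact strictly decreasing, by a Hadamard-type formula $\frac{\dd E_j}{\dd k}=-|u_j'(0)|^2/\|u_j\|^2<0$, using that $u_j'(0)\ne0$ by uniqueness for the ODE. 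Continuity in $k$ follows from the same unitary picture, or from the implicit function theorem applied to $D_\nu(-\sqrt2k)=0$, where the needed non-vanishing $\nu$-derivative is what the appendix should provide.

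For the limits, compare with the full-line oscillator. As $k\to\infty$, min-max gives $E_j(k)\ge 2j+1$, and cutoff trial functions built from the Hermite functions $\varphi_k,\dots$ show $E_j(k)\to2j+1$. Hence $E_0(k)\downarrow1$ and $E_1(k)>3$ for all $k$. As $k\to-\infty$, we have $E_0(k)\ge\inf_{x>0}(x-k)^2=k^2\to\infty$.

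\textbf{Conclusion and main obstacle.} Since $\mu<3<E_1(k)$, at most one eigenvalue lies $\le\mu$. Because $E_0$ is continuous, strictly decreasing, tends to $\infty$ at $-\infty$ and to $1<\mu$ at $+\infty$, there is a unique $k_F$ with $E_0(k_F)=\mu$, that is, $\nu(k_F)=\frac{\mu-1}2$. Then $E_0(k)\le\mu$ exactly when $k\ge k_F$.

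To see $k_F\ge0$, note that at $k=0$ the Dirichlet ground state is the odd Hermite function, so $E_0(0)=3>\mu$. This forces $k_F>0$, and it also matches the explicit value $\nu(0)=1$ from $D_1(0)=0$.

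The main obstacle is rigorously establishing strict monotonicity and continuity of $\nu(k)$, together with its limits. My first approach will be the Hadamard derivative formula via the variational characterization. If that becomes delicate, I will instead rely on the asymptotics and properties in \autoref{nu(k) asymptotics lem}.
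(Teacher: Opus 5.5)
Your proof is correct, but it takes a genuinely different route from the paper.

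\textbf{The paper's route.} In the paper the lemma is a one-line consequence of \autoref{psik is a Dnu cl}, and all the work is done with special functions in the appendix.
\autoref{eigenfunctions are D nus lem} converts eigenvalues $\lambda\le 3$ into zeros of $D_\nu$ at $-\sqrt2 k$, with $\nu=\frac{\lambda-1}{2}\in[-\frac12,1]$.
Sturm comparison in $x$ (against $D_0$, against $D_1$, and between different $\nu$) shows two things: $D_\nu$ has no zeros for $\nu\le 0$, and for $\nu\in(0,1]$ it has exactly one simple zero $\zeta(\nu)\le 0$, with $\zeta$ strictly increasing.
Hence $\nu(k)=\zeta^{-1}(-\sqrt2k)$ is continuous and strictly decreasing from $\nu(0)=1$ to $0$ (by \eqref{nu(k) asymptotic}). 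There is at most one eigenvalue $\le 3$, and none for $k<0$.
Your direction of monotonicity is the correct one; the paper's proof text says ``increasing'', which is a slip.

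\textbf{Your route.} You replace the zero-counting by variational arguments:
\begin{itemize}
\item comparison with the full-line oscillator gives $E_1(k)\ge 3$, and strict monotonicity upgrades this to $E_1>3$;
\item Dirichlet domain monotonicity replaces the Sturm argument for $\zeta$;
\item cutoff trial functions replace the asymptotics of $\nu(k)$;
\item $E_0(0)=3$ comes from the odd Hermite function.
\end{itemize}
Parabolic cylinder functions enter only to identify $E_0(k)=2\nu(k)+1$. There you implicitly use $E_0>1$, i.e.\ that there is no admissible zero with $\nu\le 0$; your strict decrease to the limit $1$ supplies this.
Your approach is more structural and would work for any confining potential. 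The paper's approach costs more special-function work, but it yields the quantitative asymptotics of $\nu(k)$, which are needed later anyway (e.g.\ in \autoref{D nu(k) to D0 diff cl}).

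\textbf{The points you flag as the obstacle.} These are standard if you work on the fixed half-line $\R^+$ with potential $(x-k)^2$ rather than on the moving domain $(-k,\infty)$.
\begin{itemize}
\item Continuity: the form domain does not depend on $k$, and $(x-k')^2-(x-k)^2$ is form-bounded relative to $H^+(k)$ with bound $O(|k'-k|)$. Min--max then gives continuity (indeed analyticity) of each $E_j$.
\item Strictness, option one: use Feynman--Hellmann, $E_0'(k)=-2\int_0^\infty(x-k)\psi_k^2\,\dd x$, together with the virial identity $\int_0^\infty(x-k)\psi_k^2\,\dd x=\frac12\psi_k'(0)^2$. The identity follows by multiplying the eigenvalue equation by $\psi_k'$ and integrating; this reproduces your Hadamard formula rigorously.
\item Strictness, option two: use unique continuation. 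A ground state extended by zero cannot minimise the Rayleigh quotient on a strictly larger half-line, since it would then be an eigenfunction vanishing on an interval.
\end{itemize}

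\textbf{Avoid the implicit-function route in $\nu$.} It needs $\partial_\nu D_\nu(-\sqrt2k)\neq 0$ at $\nu=\nu(k)$, which the appendix does not provide. The appendix only gives $\zeta\in\mathsf{C}^1$ strictly increasing, not $\zeta'>0$. Continuity of $\nu(\cdot)$ is available anyway, as the inverse of the continuous, strictly monotone $\zeta$.
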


\begin{proof}
	This mainly relies on \autoref{psik is a Dnu cl}. It tells us that there is at most one eigenvalue below $3$, none for $k\le 0$ and that the eigenvalue satisfies $\lambda(k)=2\nu(k)+1$, where the strictly increasing function $\nu(\cdot)$ is defined in \eqref{def of nu(k)}. Thus, we can define $k_F$ by the property $\nu(k_F)=\tfrac{\mu-1}2$.
\end{proof}

Since $H(k)$ is just a shifted harmonic oscillator, we see that the only eigenvalue below $3$ of $H(k)$ is $1$ with the eigenfunction $\varphi_k(x)=\varphi_0(x-k) =\pi^{-1/4}\exp(-(x-k)^2/2)$. Obviously, the function $\varphi_k$ is an eigenvector of $H(k)$. Since it is positive, it has to be the ground state by Perron--Frobenius. The first excited state of $H(k)$ is $\sqrt{2}\pi^{-1/4}(x-k)\exp(-(x-k)^2/2)$. This is clearly an eigenvector with one root and therefore we have found the first excited state. Its energy eigenvalue is equal to $3$.

We recall that $\psi_k$ is the normalized ground-state eigenfunction of $H^+(k)$. 

\begin{cl}\label{direct integral projector evaluation}
	For $\mu \in (1,3)$ and any $f \in \Lp^2(\R^2), g  \in \Lp^2(\R^{+}\times \R)$, we have the identities 
	\begin{align}
		\left(\int_\R^\oplus  \dd k\, 1_{(-\infty,\mu]}\big(H(k)\big) f \right)(x,t) &= \varphi_t(x) \int_{\R}\mathrm d y\, \varphi_t(y)f(y,t) \, ,\quad (x,t)\in\R^2\,,\\
		\left(\int_\R^\oplus  \dd k\, 1_{(-\infty,\mu]}\big(H^+(k)\big) g \right)(x,t) &= 1_{[k_F,\infty)}(t)\psi_t(x) \int_{\R^+}\mathrm d y \,\psi_t(y)g(y,t) \, ,\quad (x,t)\in\R^+\times\R\,,
	\end{align}
	where $k_F$ is defined via the identity $\nu(k_F)=\tfrac{\mu-1}2$ with $\nu(\cdot)$ given by \autoref{nu(k) asymptotics lem}.
\end{cl}

\begin{remark} Let $\rho\in\mathsf{L}^2(\R)$. If we insert the functions $f_\rho$ and $g_\rho$ defined as $f_\rho(x,k) \coloneqq \varphi_k(x)\rho(k)$ for $(x,k)\in\R^2$ and $g_\rho(x,k) \coloneqq \psi_k(x) 1_{[k_F,\infty)}(k)\rho(k)$ for $(x,k)\in\R^+\times\R$, then we obtain
\begin{align*} 
\int_\R^\oplus  \dd k\, 1_{(-\infty,\mu]}\big(H(k)\big) f_\rho =f_\rho\,,\quad \int_\R^\oplus  \dd k\, 1_{(-\infty,\mu]}\big(H^+(k)\big) g_\rho = g_\rho\,.
\end{align*}
If we choose an orthonormal basis $(\rho_n)_{n\in\N}$ of such functions, either on $\R$ or on $[k_F,\infty)$, then these functions $f_n\coloneqq f_{\rho_n}$ and $g_n\coloneqq g_{\rho_n}$ form an orthonormal basis of eigenvectors of the image of the respective direct-integral projection operator.
\end{remark}

\begin{proof}
	Due to the definition of $k_F$, we see that $1_{(-\infty,\mu]}\big(H^+(k)\big)=0$ for $k< k_F$. Furthermore, for $k\ge k_F$, since $\psi_k$ is the only eigenfunction of $H^+(k)$ with eigenvalue below $\mu$, we get 
	\begin{align}
		1_{(-\infty,\mu]}\big(H^+(k)\big)=1_{[k_F,\infty)}(k) \lvert \psi_k \rangle \langle \psi_k \rvert \, ,
	\end{align}
	where $\lvert \psi \rangle \langle \psi \rvert$ is the usual bra--ket notation for the projection onto the span of $\psi$. 
	
	Similarly, we get
	\begin{align}
		1_{(-\infty,\mu]}\big(H(k)\big)=\lvert \varphi_k \rangle \langle \varphi_k \rvert\,.
	\end{align}
	Taking the direct integral and applying it to the functions $f$ and $g$, 
	\begin{align*}
	\left(\int_\R^\oplus  \dd k\,|\varphi_k\rangle \langle \varphi_k|f\right)(x,t)={\varphi_t}(x)\,\langle\varphi_t,f(\cdot,t)\rangle
	=\varphi_t(x)\int_{\R}\mathrm d y\, \overline{\varphi_t}(y)f(y,t)\,,
	\end{align*}
	and
	\begin{align*}
	\left(\int_\R^\oplus  \dd k\,1_{[k_F,\infty)}(k) \,|\psi_k\rangle \langle \psi_k|g\right)(x,t)&=1_{[k_F,\infty)}(t){\psi_t}(x)\,\langle\psi_t,g(\cdot,t)\rangle
	\\&=1_{[k_F,\infty)}(t){\psi_t}(x)\int_{\R^+}\mathrm d y\, \overline{\psi_t}(y)g(y,t)\,.
	\end{align*}
	Thus, we arrived at the claimed representation, as $\varphi_t$ and $\psi_t$ are real-valued. 
\end{proof}

There is small notational incompleteness in the sense that we did not define $\psi_k$ for $k<0$ but use them in the direct integral representation. We could take the true ground states of $H^+(k)$ or any reasonable family of states for $k<0$ but all this does not matter since the indicator function $1_{[k_F,\infty)}$ in front maps them to 0.

\begin{cl} \label{spectrum of H plus}
The spectrum of $H^+$ inside $(-\infty,3]$ equals $[1,3]$ and is purely absolutely continuous. 
\end{cl}

As we consider $\mu<3$, the set $[1,3]$ is the only spectral set of interest to us. One can show that the full spectrum of $H^+$ equals $[1,\infty)$ and is purely absolutely continuous. 

\begin{proof} 
To show the first claim we use that $\l$ is in the spectrum of $H^+$ if and only if the spectral projection $1_{(\l-\varepsilon,\l+\varepsilon)}(H^+)\not=0$ for all $\varepsilon>0$. Let $\l\in(1,3)$ and, without loss of generality, $\varepsilon$ small enough so that $\l+\varepsilon<3$. By writing $1_{(a,b)} = 1_{(-\infty,b)} - 1_{(-\infty,a]}$, we get 
\[\mathcal F_2 \,1_{(\l-\varepsilon,\l+\varepsilon)}(H^+) \mathcal F_2^{-1} = \int_\R^\oplus\dd k\, 1_{(k_F^+(\l),k_F^-(\l))}(k) |\psi_k\rangle\langle\psi_k|\,,
\]
where $\nu\big(k_F^\pm(\l)\big) = \tfrac12(\l\pm\varepsilon-1)$. The right-hand side is $\not=0$ and hence $\l$ is in the spectrum of $H^+$. [The same argument shows that $(1,3)$ is not part of the spectrum of $H$, because the direct-integral in \autoref{direct integral projector evaluation} does not depend on $\mu$ for $\mu\in(1,3)$ and hence $\mathcal F_2 \,1_{(\l-\varepsilon,\l+\varepsilon)}(H) \mathcal F_2^{-1} = 0$ for $\l\in(1,3)$ and sufficiently small $\varepsilon$.]

To prove the second claim, we take a look at the spectral theorem. It tells us, that there is a spectral measure $E$, such that for any $f \in \Lp^\infty(\R)$, we have the identity
\begin{align}
	f(H^+)=\int_\R f(\lambda)\, \mathrm d E(\lambda)\,.
\end{align} 
Now, for any $\tilde{g}\in\mathsf{L}^2(\R^+\times\R)$, this means
\begin{align}
	\langle \tilde g , f(H^+) \tilde g \rangle = \int_\R f(\lambda) \,\langle \tilde g, \mathrm d E (\lambda) \tilde g \rangle \eqqcolon \int_\R f(\lambda) \,\mathrm d E_{\tilde g}(\lambda) \, .
\end{align}	
The spectrum of $H^+$ in the interval $[1,3]$ is absolutely continuous, if and only if for any $\tilde g$, the Borel measure $E_{\tilde g}$ is absolutely continuous. 

Since we are studying a Borel measure, it suffices to use the test functions $f=1_{(-\infty,\mu]}$ for $\mu \in [1,3]$. 
Using \eqref{eigenvalue equ for H(k) plus} and \autoref{F2 on projections lem}, we find
\begin{align*} 
	\mathcal F_2  1_{(-\infty,\mu]} (H^+)\mathcal F_2^{-1} &= \int_\R^\oplus \dd k\, 1_{(-\infty,\mu]}(H^+(k))   
	\\
	&=\int_{\R^+}^\oplus \dd k \, 1_{(-\infty,\mu]} (2\nu(k)+1)\,|\psi_k\rangle\langle\psi_k|\,.
\end{align*}
For $g \coloneqq \mathcal F_2^{-1} \tilde g$, we compute
\begin{align*}
	\langle \tilde g \mid 1_{(-\infty,\mu]}(H^+) \tilde g \rangle 
	&= \Big\langle g\Big| \int_{\R^+}^\oplus \dd k\, 1_{(-\infty,\mu]} (2\nu(k)+1) \,|\psi_k\rangle\langle\psi_k|g\Big\rangle \\
	&=\int_{\R^+} \dd x\int_{\R^+}\dd t\, \bar{g}(x,t) 1_{(-\infty,\mu]} (2\nu(t)+1)
	\, \psi_t(x) \int_0^\infty \dd y\, \psi_t(y) g(y,t) \\
	&= \int_{\R^+}\dd t \,1_{(-\infty,\mu]} (2\nu(t)+1) \int_0^\infty \dd x\, \bar{g}(x,t) \psi_t(x) \,  \int_0^\infty\dd y\, {g}(y,t) \psi_t(y) \\
	&= \int_{\R^+}\dd t \,1_{(-\infty,\mu]} (2\nu(t)+1) \, \Big|\int_0^\infty \dd x\, {g}(x,t) \psi_t(x)\Big|^2\,.
\end{align*}
By Fubini--Tonelli, since $g \in \Lp^2(\R^+\times\R)$, for almost every $t \in \R$, we have $g(\cdot,t)\in \Lp^2(\R^+)$. Hence, can establish
\begin{align*}
	G(t) \coloneqq \Big|\int_0^\infty \dd x\, {g}(x,t) \psi_t(x)\Big|^2 \le \lVert \psi_t \rVert_{\Lp^2(\R^+)}^2  \lVert g(\cdot,t) \rVert_{\Lp^2(\R^+)}^2 < \infty\,.
\end{align*}
Hence, $G$ is an almost everywhere well-defined function and integrable with $\int_{\R^+} \dd t\, G(t) = \lVert g \rVert_{\Lp^2(\R^+\times \R)}^2$. Now, we perform a change of variables via $\lambda \coloneqq 2 \nu (t)+1$. This uses that $\nu^{-1}(s)=-\zeta(s)/\sqrt{2}$ is strictly decreasing and differentiable due to \autoref{nu(k) asymptotics lem} and \autoref{D nu unique simple zero cl}. Thus, we arrive at
\begin{align*}
	\int_\R 1_{(-\infty,\mu]}(\lambda) \,\dd E_{\tilde g} ( \lambda)
	&= \int_{\R^+} \dd t  \,1_{(-\infty,\mu]} (2\nu(t)+1) G(t) \\
	&= \int_{(1,3)} \dd \l \, 1_{(-\infty,\mu]}(\lambda) G\left(\nu^{-1}\left(\tfrac{\lambda-1}2 \right)\right) \tfrac 1 2 (-\nu^{-1})'\left(\tfrac{\lambda-1}2 \right)\,.
\end{align*} 
As this identity holds for all $\mu \in (-\infty,3]$, the Borel measures on both sides agree on all $f \in \Lp^{\infty}((-\infty,3))$. Thus, we have shown that the measure $E_{\tilde g}$ is absolutely continuous, since we have calculated its density with respect to the Lebesgue measure. Since this holds for any $\tilde g \in \Lp^2(\R^+ \times \R)$, we can conclude that the spectral measure $E$ is absolutely continuous on the interval $(-\infty ,3]$, which means that $H^+$ has only absolutely continuous spectrum on that interval.

\end{proof}

\begin{lemma} \label{F2 indicator lem}
	Let $I,J$ be intervals, with $J=(b_0,b_1)$ for $b_0,b_1 \in \R$. Then, the following  identities hold for all $f \in \Lp^2(\R^2)$ and all $(x_1,x_2)\in\R^2$:
	\begin{align}
		(\mathcal F_2 1_{I \times \R} \mathcal F_2^*f)(x_1,x_2) =& 1_I(x_1) f(x_1,x_2) \, , \\
		(\mathcal F_2 1_{I \times J} \mathcal F_2^*f)(x_1,x_2) =& 1_I(x_1) \int_{\R}  \mathrm dy\,\frac{\exp\big(-\mathrm{i}b_1(x_2-y)\big)-\exp\big(-\mathrm{i}b_0(x_2-y)\big)}{-2\pi \mathrm{i}(x_2-y)   }f(x_1,y) \, .
	\end{align}
\end{lemma}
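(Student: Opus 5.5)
The plan is to reduce both identities to statements about the one-dimensional Fourier transform $\mathcal F$ acting in the second variable, with $x_1$ as a passive parameter. Since $\mathcal F_2$ acts only on $x_2$, the multiplication operator $1_{I\times\R}$, which depends only on $x_1$, commutes with $\mathcal F_2$. So for $f\in\Lp^2(\R^2)$ we get $\mathcal F_2 1_{I\times\R}\mathcal F_2^* f = 1_{I\times\R}\mathcal F_2\mathcal F_2^* f = 1_I(x_1) f$, using that $\mathcal F_2$ is unitary. To make the commutation rigorous, I would first verify it on $\Lp^1\cap\Lp^2$ directly from \eqref{Fourier 2 def}: the factor $1_I(x_1)$ simply pulls out of the $x_2$-integral. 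It then extends to $\Lp^2$ by density and boundedness.

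For the second identity, write $1_{I\times J} = 1_{I\times\R}\, 1_{\R\times J}$. By the first step it suffices to compute $\mathcal F_2 1_{\R\times J}\mathcal F_2^*$. This operator is a Fourier multiplier in the second variable, which after conjugation becomes a convolution. I would compute it first for $f$ in a dense class, say $f(x_1,\cdot)\in\mathcal S(\R)$ for a.e.\ $x_1$, or simply $f\in\mathcal C$ as in \eqref{def C}. There
\begin{align*}
(\mathcal F_2 1_{\R\times J}\mathcal F_2^*f)(x_1,x_2) = \frac1{2\pi}\int_{b_0}^{b_1}\dd t\, e^{-\mathrm{i}x_2 t}\int_\R \dd y\, f(x_1,y)\,e^{\mathrm{i}ty}.
\end{align*}
Since the $t$-interval is bounded and $f(x_1,\cdot)\in\Lp^1$, Fubini lets us swap the integrals. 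The $t$-integral is then elementary:
\begin{align*}
\int_{b_0}^{b_1}\dd t\, e^{-\mathrm{i}t(x_2-y)} = \frac{e^{-\mathrm{i}b_1(x_2-y)}-e^{-\mathrm{i}b_0(x_2-y)}}{-\mathrm{i}(x_2-y)}.
\end{align*}
Combined with the prefactor $1/(2\pi)$, this gives exactly the stated kernel. The singularity at $y=x_2$ is removable, because the kernel is bounded by $(b_1-b_0)/(2\pi)$. Multiplying by $1_I(x_1)$ from the first identity finishes the dense case.

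The main obstacle is extending from the dense class to all of $\Lp^2(\R^2)$. The kernel decays only like $|x_2-y|^{-1}$, so it is not in $\Lp^1$, and the integral in the statement converges only conditionally or in an $\Lp^2$ sense, like a Hilbert-transform-type kernel. I would handle this as follows. The left-hand side is a bounded operator of norm at most one. The right-hand side, interpreted on the dense class, agrees with it and hence extends uniquely by continuity. For general $f$, I would read the $y$-integral as an $\Lp^2$-limit, for example of integrals over $|y|<R$ as $R\to\infty$, and then pass to a subsequence converging pointwise a.e. Alternatively, for fixed $x$ the kernel lies in $\Lp^2(\dd y)$, so the $y$-integral converges absolutely as a Cauchy--Schwarz pairing whenever $f(x_1,\cdot)\in\Lp^2$, which holds for a.e.\ $x_1$. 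This pairing is continuous in $f(x_1,\cdot)$, and that makes the identity hold for a.e.\ $(x_1,x_2)$ by density. I expect this second route to be the cleanest.
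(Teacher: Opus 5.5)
Your proposal is correct and follows essentially the same route as the paper: both identities are verified on the dense class $\mathcal C$, where Fubini applies and the integral over $J$ is elementary, and then extended by boundedness and density. Your extension step fills in a detail the paper compresses into ``by linearity and density''. The kernel, as a function of $y$, lies in $\Lp^2(\dd y)$ with norm independent of $x_2$, so the right-hand side is well defined by Cauchy--Schwarz for a.e.\ $x_1$, and along a subsequence it converges a.e.
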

\begin{remarks} 
\begin{enumerate}
\item[(i)] In the case $(b_0,b_1)=(b,b+1)$, we can express the function inside the integral as
	\begin{align}\label{F2 indicator sine kernel remark}
		\frac{\exp(-\mathrm{i}b_1t)-\exp(-\mathrm{i}b_0 t )}{-2 \pi \mathrm{i} t} =\exp\big(-\mathrm{i}(b+\tfrac 1 2 t)\big) \, \frac{\sin(t/2)}{\pi t} \,.
	\end{align}
\item[(ii)] If we use the second formula with the interval $J=(-b,b)$, $b>0$, we get the expression 
\[ 1_I(x_1) \int_\R \dd y\, \frac{\sin(y)}{\pi y} \, f(x_1,y/b +x_2)\,.
\]
Sending $b\to\infty$ and using $\int_\R \dd y\, \frac{\sin(y)}{\pi y} =1$ we formally obtain the same formula as the first one as it should be the case. 
\end{enumerate}
\end{remarks}

\begin{proof}
	As all the expressions clearly define linear operators and the expressions on the left-hand side clearly are bounded operators on $\Lp^2(\R^2)$, it suffices to prove the identities on a subset with dense span. We choose the set $\mathcal C$, as defined in \eqref{def C}. Thus, let $f \in \mathcal C$. Hence $f(x_1,x_2)=f_1(x_1)f_2(x_2)$ with $f_1\in \mathsf{C}^1_{\text{c}}(\R)$ and $f_2 \in \mathcal S(\R)$. Since $(\mathcal F_2 f)(x_1,x_2)=f_1(x_1)(\mathcal F f_2)(x_2)$ and $(\mathcal F_2^* f)(x_1,x_2)=f_1(x_1)(\mathcal F^* f_2)(x_2)$, the first claim follows as $\mathcal F \mathcal F^*= \operatorname{id}$. For the second claim, we observe
	\begin{align*}
		(\mathcal F_2 1_{I \times J} \mathcal F_2^*f)(x_1,x_2) = 1_I(x_1)f_1(x_1) (\mathcal F 1_J \mathcal F^* f_2)(x_2) \,,
	\end{align*}
	and continue with
	\begin{align*}
		2 \pi (\mathcal F 1_J \mathcal F^* f_2)(x_2) &= \int_{J} \dd s \,\exp(-\mathrm{i} s x_2) \int_{\R} \mathrm d y \,\exp(\mathrm{i} s y) f_2(y) \\
		&= \int_{\R} \mathrm d y \,f_2(y) \int_J \mathrm d s \,\exp\big(-\mathrm{i}s(x_2-y)\big) \\
		&=\int_{\R} \mathrm d y \,f_2(y) \frac{\exp\big(-\mathrm{i}b_1(x_2-y)\big)-\exp\big(-\mathrm{i}b_0(x_2-y)\big)}{-\mathrm{i}(x_2-y)}\,.
	\end{align*}
	The last two equations combined yield the claim for all $f \in \mathcal C$ and thus, by linearity and density, for all $f \in \Lp^2(\R^2)$.
\end{proof}

We have completed the preparations and can now move to the 
\begin{proof}[Proof of \autoref{HS norm integral rep thm}]
	Let us first note that, since $\mathcal F_2$ is a unitary operator, we get 
	\begin{align*}
		\big\lVert 1_{{(a,a+1)\times(b,b+1)}} \big(P_\mu^+ - P_\mu(\R^+\times\R)\big) \big\rVert_{2} =
		\left \lVert \mathcal F_2 1_{{(a,a+1)\times(b,b+1)}} \big(P_\mu^+ - P_\mu(\R^+\times\R)\big) \mathcal F_2^* \right \rVert_{2}  \, .
	\end{align*}
	We recall that $P_\mu^{(+)}=1_{(-\infty,\mu]}(H^{(+)})$ for a $\mu \in (1,3)$ and that $P_\mu(\R^+ \times \R)=1_{\R^+\times \R} P_\mu 1_{\R^+\times \R}$. As the Hilbert--Schmidt norm is the square integral of the integral kernel, we will now find the integral kernel of this operator. Thus, for any $f \in \Lp^2(\R^+\times \R)\subset \Lp^2(\R^2)$ and any $x_1,x_2 \in \R$, we calculate 
	\begin{align*}
		&\left(\mathcal F_2 1_{{(a,a+1)\times(b,b+1)}} \big(P_\mu^+ - P_\mu(\R^+\times\R)\big) \mathcal F_2^* f\right) (x_1,x_2)\\
		&=\left(\mathcal F_2 1_{{(a,a+1)\times(b,b+1)}} \mathcal F_2^* 
		 \big( \mathcal F_2 P_\mu^+ \mathcal F_2^* - 1_{\R^+\times\R} \mathcal F_2 P_\mu\mathcal F_2^* 1_{\R^+\times\R} \big) f\right) (x_1,x_2) \\
		&=1_{(a,a+1)}(x_1) \int_{\R}\mathrm d y_2\,
		 \frac{\exp\big(-\mathrm{i}(b+1)(x_2-y_2)\big)-\exp\big(-\mathrm{i}b(x_2-y_2)\big)}{-2\pi \mathrm{i}(x_2-y_2)   }  \\
		 &\times\left(\left( \int_\R^\oplus \mathrm d k \,1_{(-\infty,\mu]}(H^+(k)) 
		 - 1_{\R^+\times\R} \int_\R^\oplus \mathrm d k  \,1_{(-\infty,\mu]}(H(k)) 1_{\R^+\times\R}  \right) f\right)(x_1,y_2) \\
		 &= 1_{(a,a+1)}(x_1) \int_{\R} \mathrm d y_2\,
		 \exp\big(-\mathrm{i}(x_2-y_2)(b+\tfrac 1 2 )\big) \frac{\sin\big((x_2-y_2)/2\big)} {\pi (x_2-y_2)} \\
		 &\times\Big( 1_{[k_F,\infty)}(y_2)\psi_{y_2}(x_1)\int_{\R^+} \mathrm d y_1 \,\psi_{y_2}(y_1) f(y_1,y_2)
		 -1_{\R^+}(x_1) \varphi_{y_2}(x_1)\int_{\R^+} \mathrm d y_1\, \varphi_{y_2}(y_1)f(y_1,y_2)
		 \Big) \\
		 &=\int_{\R^+}\mathrm d y_1 \int_\R \mathrm dy_2 \,f(y_1,y_2) 1_{(a,a+1)}(x_1) \exp\big(-\mathrm{i}(x_2-y_2)(b+\tfrac 1 2 )\big)\, \frac{\sin\big((x_2-y_2)/2\big)} {\pi (x_2-y_2)} \\
		 &\times\Big( 1_{[k_F,\infty)}(y_2)\psi_{y_2}(x_1) \psi_{y_2}(y_1) 
		 - \varphi_{y_2}(x_1) \varphi_{y_2}(y_1)
		 \Big)\,.
	\end{align*}
	We have used \autoref{F2 indicator lem}, \autoref{F2 on projections lem}, \eqref{F2 indicator sine kernel remark}, \autoref{direct integral projector evaluation} and in the final step $(a,a+1)\subset \R^+$ since $a\in \N$. The final equation provides us with an integral representation of the operator whose Hilbert--Schmidt norm we are interested in. To calculate the square integral of this kernel, we note $\big\lvert \exp\big(-\mathrm{i}(x_2-y_2)(b+\tfrac 1 2 )\big) \big\rvert=1$ and that every other part of this kernel is real-valued. We rename $x_2$ to $k$ and $y_2$ to $s$ to avoid double indices. Hence, we arrive at 
	\begin{align*}
		&\big\lVert 1_{{(a,a+1)\times(b,b+1)}} \big(P_\mu^+ - P_\mu(\R^+\times\R)\big) \big\rVert_{2}^2 \\
		&=\int_{\R}\mathrm d k \int_{\R}\mathrm d s  \int_{a}^{a+1} \mathrm d x_1 \int_{\R^+} \mathrm d y_1 \,\left( \frac{\sin\big((k-s)/2\big)}{\pi(k-  s)}		\big[1(s\ge k_F)\psi_s(x_1)\psi_s(y_1) -\varphi_s(x_1)\varphi_s(y_1) \big]
		\right)^2\,.
	\end{align*}
	The integral over $k$ can be resolved immediately:
	\begin{align}
		\int_\R \mathrm d k\, \frac{\big(\sin((k-s)/2)\big)^2}{\pi^2(k-s)^2} = \int_\R  \mathrm d t\,\frac{\big(\sin(t/2)\big)^2}{\pi^2t^2} = \frac 1 {2\pi}\,.
	\end{align}
	For this explicit calculation, we have used that $\sqrt{2\pi} \sin(t/2)/(\pi t)$ is the Fourier transform of the indicator function of the interval $[-1/2,1/2]$ and the unitarity of the Fourier transform. Thus, we arrive at
	\begin{align*}
		&\big\lVert 1_{{(a,a+1)\times(b,b+1)}} \big(P_\mu^+ - P_\mu(\R^+\times\R)\big) \big\rVert_{2}^2  \\
		&= \frac 1 {2\pi}\int_{\R}\mathrm d k \int_{\R}\mathrm d s  \int_{a}^{a+1} \mathrm d x_1 \int_{\R^+} \mathrm d y_1 \	\big[1(s\ge k_F)\psi_s(x_1)\psi_s(y_1) -\varphi_s(x_1)\varphi_s(y_1) \big]^2\,.
	\end{align*}
	This was our claim.	
\end{proof}

\section{Estimates on parabolic cylinder functions}\label{Estimates on pcf}

In \autoref{App PCF} we recall some basic facts about the parabolic cylinder function, $D_\nu(\cdot)$ which satisfies the differential equation 
\begin{equation} \label{DE for D}
	\frac{\dd^2}{\dd x^2}w(x) - \Big(\frac14 x^2-\nu-\tfrac12\Big) w(x) = 0\,,\quad x\in\R\,.
\end{equation}
$D_\nu(x)$ decays exponentially in $x^2$ as $x \to \infty$. 

By scaling, for any $\lambda \in \R$, the function $w(x)\coloneqq D_{\tfrac{\lambda-1}{2}}\big(\sqrt{2}(x-k)\big)$ satisfies the differential equation,
\[ \frac{\dd^2}{\dd x^2}w(x) - \big((x-k)^2 + \lambda\big)w(x) =0\,,\quad x\in\R\,,
\]
or put differently,
\begin{equation} \label{eigenvalue problem} H^{(+)}(k) w  = \lambda w\,.
\end{equation}
In order for this to define an eigenfunction of $H^+(k)$, it has to satisfy the boundary condition $w(0)=0$ or $D_{\tfrac{\lambda-1}{2}}(-\sqrt 2 k)=0$, see \autoref{eigenfunctions are D nus lem}. There is a bijection between $k \in \R^+$ and eigenvalues $\lambda \in (1,3)$ of $H^+(k)$, denoted by $\lambda= 2\nu(k)+1$, see \autoref{psik is a Dnu cl} and \autoref{nu(k) asymptotics lem}.

In the following three corollaries we derive the essential estimates on parabolic cylinder functions, their norms, and on the eigenfunctions $\varphi_k$ and $\psi_k$. These are based on the results of \autoref{App PCF}. In the first corollary we estimate the pointwise difference of $D_{\nu(k)}(x)$ and $D_0(x)$ for large values of $k$. 
\begin{cl} \label{D nu(k) to D0 diff cl}
For $k \ge 0, \lvert  x \rvert  \le  \sqrt 2 k$ and with $\nu(k)$ defined in \eqref{def of nu(k)}, we have the estimate
\begin{align}
\lvert D_{\nu(k)}(x)- D_0 (x)  \rvert  \le C(1+k) \exp(-k^2/2) \, ,
\end{align}
where $C$ is a constant independent of $k$.
\end{cl}
\begin{proof}
In the case $k \le 1$, the claim is trivial by continuity. For $k \ge 1$, we estimate
\begin{align*}
\lvert D_{\nu(k)}(x)- D_0 (x)  \rvert  &\le \nu(k) \sup_{\nu \in (0,1)} \left \lvert \tfrac {\dd}{ \dd \nu} D_{\nu}(x) \right \rvert \\
&\le  C k \exp(-k^2) \left( \sup_{\nu \in (0,1)} \left \lvert \tfrac {\dd}{ \dd \nu} \tilde D_{\nu}(x)  \right \rvert \Gamma(1+\nu) + \lvert \tilde D_\nu(x) \rvert \Gamma'(1+\nu) \right) \\
&\le  C k \exp(-k^2) \exp(x^2/4) \le C k \exp(-k^2/2) \, . 
\end{align*}
We used $D_\nu(x)=\Gamma(\nu+1) \tilde D_\nu(x)$ (see \eqref{loop repr of D}, \eqref{loop rep of tilde D}) as well as the estimates \eqref{nu(k) asymptotic}, \eqref{d nu D bound} and \eqref{D large x}, \eqref{D large negative x}. The last inequality uses $x^2 \le 2 k^2$. 
\end{proof}

In our next result, we estimate the difference of the $\mathsf{L}^2$-norms of $D_{\nu(k)}$ and of $D_0$, in particular for large $k$.
\begin{cl} \label{D nu(k) norm estimate cl}
For $k \ge 0$ and with $\nu(k)$ defined in \eqref{def of nu(k)}, we have the estimate
\begin{align}
\left \lvert \lVert D_{\nu(k)} \rVert_{\Lp^2((-\sqrt 2 k, \infty))} - \lVert D_0 \rVert_{\Lp^2(\R)} \right \rvert \le C (1+k)^{3/2} \exp(-k^2/2) \, ,
\end{align}
where $C$ is a constant independent of $k$.
\end{cl}

\begin{proof}
We define $I_- \coloneqq (-\infty,-\sqrt 2 k), I_0 \coloneqq (-\sqrt 2 k, \sqrt 2 k) , I_+ \coloneqq (\sqrt 2 k , \infty)$. An application of the triangle inequality leads to the inequality
\[\Big|\|f_0+f_+\| - \|g_-+g_0+g_+\|\Big| \le \Big|\|f_0\|-g_0\|\Big| + \|f_+\|+\|g_0\|+\|g_+\|\,,
\]
which yields
\begin{align*}
& \left \lvert \lVert D_{\nu(k)} \rVert_{\Lp^2((-\sqrt 2 k, \infty))} - \lVert D_0 \rVert_{\Lp^2(\R)} \right \rvert \\
&\le  \left \lvert \lVert D_{\nu(k)} 1_{I_0} \rVert_{\Lp^2(\R)} - \lVert D_0 1_{I_0} \rVert_{\Lp^2(\R)} \right \rvert 
+ \lVert D_{\nu(k)} 1_{I_+} \rVert_{\Lp^2(\R)} + \lVert D_0 1_{I_+} \rVert_{\Lp^2(\R)} +  \lVert D_0 1_{I_-} \rVert_{\Lp^2(\R)}  \\
&\le   \lVert D_{\nu(k)} -D_0 \rVert_{\Lp^2(I_0)} 
+ \lVert D_{\nu(k)} 1_{I_+} \rVert_{\Lp^2(\R)} + \lVert D_0 1_{I_+} \rVert_{\Lp^2(\R)} +  \lVert D_0 1_{I_-} \rVert_{\Lp^2(\R)} \\
&\le  \lVert D_{\nu(k)} -D_0 \rVert_{\Lp^\infty(I_0)} \lVert 1 \rVert_{\Lp^2(I_0)}
+ \lVert D_{\nu(k)} 1_{I_+} \rVert_{\Lp^2(\R)} +2 \lVert D_0 1_{I_+} \rVert_{\Lp^2(\R)} \\
&\le  C (1+k) \exp(-k^2/2) \sqrt k
+ \lVert D_{\nu(k)} 1_{I_+} \rVert_{\Lp^2(\R)} +2 \lVert D_0 1_{I_+} \rVert_{\Lp^2(\R)}\,.
\end{align*}
We have used \autoref{D nu(k) to D0 diff cl} in the final step. In the case $k \le 1$, we are done here. 

In the case $k \ge 1$ we use \eqref{D large x} with $\nu(k)\le1$ and we are left to estimate
\begin{align*}
\int_{\sqrt 2 k} ^\infty \mathrm dx \, (1+x)^2 \exp(-x^2/2)& \le C \int_{\sqrt 2 k} ^\infty \mathrm dx \, x^2 \exp(-x^2/2) \\
& \le C \left(  \left[ - x \exp(-x^2/2) \right]_{\sqrt 2 k}^\infty + \int_{\sqrt 2 k }^\infty \mathrm dx\,\exp(-x^2/2) \right) \\
& \le  C (1+k) \exp(-k^2) \,.
\end{align*}
Putting the two estimates together we obtain that also for $k\ge1$,
\begin{align*}
\left \lvert \lVert D_{\nu(k)} \rVert_{\Lp^2((-\sqrt 2 k, \infty))} - \lVert D_0 \rVert_{\Lp^2(\R)} \right \rvert &\le C (1+k)\sqrt{k} \exp(-k^2/2) + C \sqrt{1+k} \exp(-k^2/2) 
\\
&\le C (1+k)^{3/2} \exp(-k^2/2)\,,
\end{align*}
for some constant $C$ independent of $k$.
\end{proof}

In the last corollary, we use the above results to obtain a pointwise bound on the difference of the eigenfunctions of $H(k)$ and $H^+(k)$. 

\begin{cl} \label{phik - psik pointwise est cl}
For $k \ge 0$ and $x \ge 0$, we have the estimates for the (normalized) eigenfunction $\varphi_k$ of $H(k)$ on the full real line and of the (normalized) eigenfunction $\psi_k$ of $H^+(k)$ on the half real line,
\begin{align}\label{difference of psi and phi}
\lvert \varphi_k(x)- \psi_k(x)  \rvert  \le C
\begin{cases}
(1+k)^{3/2} \exp(-k^2/2) & \text{ if } x \le  2 k \, , \\
(1+x)\exp(-x^2/8) & \text{ if } x \ge  2 k \, ,
\end{cases} 
\end{align}
where $C$ is a constant independent of $k$.
\end{cl}

\begin{remarks}\label{remarks on difference and uniform bound}
\begin{enumerate}
\item[(i)]	This corollary also implies that for all $x \ge 0, k \ge 0$, one has
	\begin{align}
		\lvert \varphi_k(x) - \psi_k (x) \rvert \le C \min \Big((1+2k)^{3/2} \exp(-k^2/2) \, , \,  (1+x)^{3/2} \exp(-x^2/8)\Big) \, .
	\end{align}
	To see this, we replace in the bound on $x\le 2k$ the factor $(1+k)^{3/2}$ by $(1+2k)^{3/2}$ and in the second bound on $x\ge 2k$ the factor $(1+x)$ by $(1+x)^{3/2}$. Then, we use that the function $\lambda \mapsto (1+\lambda)^{3/2} \exp(-\lambda^2/8)$ is decreasing for $\lambda > 2$ and bounded (and strictly positive) before that. Thus, whenever $x \ge 2$ and $k\ge1$, the minimum swaps at $x =2k$ and is always the one compatible with the Corollary. 
	\item[(ii)] \eqref{difference of psi and phi} also shows that $\psi_k(x)$ is uniformly bounded for all $x\ge0$ and $k\ge0$ by simply writing $|\psi_k(x)| \le |\psi_k(x)-\varphi_k(x)| + \varphi_k(x)$. Another (standard and more explicit) way of showing a uniform upper bound to $\|\psi_k\|_{\mathsf{L}^\infty(\R^+)}$ is the Sobolev inequality: To this end, we use $\psi_k(0)=0$ and write 
	\begin{align*}
		|\psi_k(x)^2|&=\Big|\int_0^x \dd t\, \tfrac{\dd }{\dd t} (\psi_k(t)^2)\Big| = 2 \,\Big|\int_0^x \dd t\, \psi_k(t) \tfrac{\dd }{\dd t} \psi_k(t)\Big|\le 2 \int_0^x \dd t\, |\psi_k(t)|\,  \big|\tfrac{\dd }{\dd t} \psi_k(t)\big| 
		\\
		& \le 2 \,\|\psi_k\|_{\mathsf{L}^2(\R^+)} \,\|\tfrac{\dd }{\dd t} \psi_k\|_{\mathsf{L}^2(\R^+)}
		= 2\, \|\tfrac{\dd }{\dd t} \psi_k\|_{\mathsf{L}^2(\R^+)} 
		\le 2 \sqrt{\|\tfrac{\dd }{\dd t} \psi_k\|_{\mathsf{L}^2(\R^+)}^2 + \|(x-k)\psi_k\|_{\mathsf{L}^2(\R^+)}^2}
		\\
		&=2 \sqrt{2\nu(k)+1} \le 2 \sqrt{3}\,.
	\end{align*}
\end{enumerate}
\end{remarks}

\begin{proof}
We begin with the case $x \le 2k$. We note
\begin{align*}
\varphi_k(x)= \frac{\exp(-(x-k)^2/2)}{ \pi ^{\tfrac 1 4}} = \frac{ D_0(\sqrt 2 (x-k) ) } { 2^{-\tfrac 1 4} \lVert D_{0} \rVert_{\Lp^2(\R)}} \, . 
\end{align*}
This looks very similar to \autoref{psik is a Dnu cl}. For $a_1,a_2,b_1,b_2 \in \R^+$, we observe
\begin{align*}
\frac {a_1} {a_2} - \frac {b_1}{b_2}
= \frac {a_1} {a_2} -\frac{b_1}{a_2} +\frac{b_1}{a_2} - \frac {b_1}{b_2}
=\frac{a_1-b_1}{a_2} + b_1 \frac { b_2-a_2}{a_2 b_2} \, .
\end{align*}
Thus, with \autoref{D nu(k) to D0 diff cl} (applied to $\hat x = \sqrt 2 (x-k)\in [-\sqrt 2 k, \sqrt 2 k]$) and \autoref{D nu(k) norm estimate cl}, we get
\begin{align*}
&\lvert \varphi_k(x)- \psi_k(x)  \rvert  \\
 \le& \frac{   \lvert D_0 (\sqrt 2 (x-k )) - D_{\nu(k)}(\sqrt 2 (x-k))  \rvert }{ \pi^{\tfrac 1 4 } } + D_{\nu(k)}(\sqrt 2 (x-k))  \frac {\left \lvert \lVert D_{\nu(k)} \rVert_{\Lp^2((-\sqrt 2 k, \infty))} - \lVert D_0 \rVert_{\Lp^2(\R)} \right \rvert } { 2^{-\tfrac 1 4} \lVert D_{\nu(k)} \rVert_{\Lp^2((-\sqrt 2 k, \infty))}  \lVert D_0 \rVert_{\Lp^2(\R)}  } \\
 \le & C(1+k) \exp(-k^2/2) + \left( \exp(-(x-k)^2/2) + C(1+k) \exp(-k^2/2)  \right) \frac {C(1+k)^{3/2} \exp(-k^2/2) } { \lVert D_{\nu(k)} \rVert_{\Lp^2((-\sqrt 2 k , \infty))}}  \\
  \le & C(1+k) \exp(-k^2/2) + C \frac {(1+k)^{3/2} \exp(-k^2/2) } { \lVert D_{\nu(k)} \rVert_{\Lp^2((-\sqrt 2 k , \infty))}}  \\
 \le & C(1+k)^{3/2} \exp(-k^2/2) \left(1+ \frac 1 { \lVert D_{\nu(k)} \rVert_{\Lp^2((-\sqrt 2 k , \infty))}}\right)\,.
\end{align*}
Thus, we are left to show that the expression $\lVert D_{\nu(k)} \rVert_{\Lp^2((-\sqrt 2 k , \infty))}$ is bounded below by a positive constant. A very rough lower bound can be established by
\begin{align*}
\lVert D_{\nu(k)} \rVert_{\Lp^2((-\sqrt 2 k , \infty))}^2 \ge \inf_{\nu \in [0,1]} \int_{1}^2 \mathrm d x\, D_\nu(x)^2 \ge \inf_{\nu \in [0,1], x \in [1,2]} D_\nu(x)^2 >0 \, ,
\end{align*}
as $D_\nu$ is continuous in $\nu,x$ and has no zeros in the given range. Thus, we have shown the first claim. 

 For the case $x \ge 2k$, we bound
\begin{align*}
\lvert \varphi_k(x)- \psi_k(x)   \rvert & \le  \lvert \varphi_k(x) \rvert + \lvert \psi_k(x) \rvert  \\
& \le C \lvert D_{\nu(k)}(\sqrt 2 (x-k) ) \rvert + C  \lvert D_0(\sqrt 2 (x-k) ) \rvert \\
& \le C (1+(x-k)) \exp(-(x-k)^2/2) \le C (1+x) \exp(-x^2/8) \, ,
\end{align*}
which relies on $\nu(k)\le 1$, \eqref{D large x} and $x-k\ge x/2$. That this constant $C$ can be chosen independent of $k$ uses again the above rough lower bound on $\lVert D_{\nu(k)} \rVert_{\Lp^2((-\sqrt 2 k , \infty))}$.
\end{proof}

To estimate the integral in \eqref{HS norm integral rep eq}, we begin with the only part, that directly interacts with $\psi_k$, which is estimated in the following proposition:

\begin{prop} \label{main prop} 
	There exists a constant $C<\infty$ such that for any $x_1>0$
	\begin{align} \label{main estimate}
		\int_{0}^\infty\dd k\int_0^\infty\dd y_1\,\big|\psi_k(x_1)\psi_k(y_1) -\varphi_k(x_1)\varphi_k(y_1)\big|^2 \le C (1+x_1)^3 \exp(-x_1^2/4)\,.
	\end{align}
\end{prop}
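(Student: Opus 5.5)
We write the integrand as a sum of two differences,
\[
\psi_k(x_1)\psi_k(y_1)-\varphi_k(x_1)\varphi_k(y_1) = \big(\psi_k(x_1)-\varphi_k(x_1)\big)\psi_k(y_1) + \varphi_k(x_1)\big(\psi_k(y_1)-\varphi_k(y_1)\big),
\]
and use $(u+v)^2\le 2u^2+2v^2$. This gives two integrals, each controlled by \autoref{phik - psik pointwise est cl} and its Remark~\ref{remarks on difference and uniform bound}.

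\textbf{First term.} Since $\int_0^\infty \dd y_1\,\psi_k(y_1)^2=1$, the $y_1$-integral is trivial. We are left with $\int_0^\infty \dd k\,|\psi_k(x_1)-\varphi_k(x_1)|^2$, and we split the $k$-integration at $k=x_1/2$.

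For $k\le x_1/2$ we have $x_1\ge 2k$, so the second case of \eqref{difference of psi and phi} gives the integrand bound $C(1+x_1)^2\exp(-x_1^2/4)$. Integrating over an interval of length $x_1/2$ yields $C(1+x_1)^3\exp(-x_1^2/4)$; this is exactly where the power $3$ arises.

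For $k\ge x_1/2$ we have $x_1\le 2k$, so the first case gives $C(1+k)^3\exp(-k^2)$. Its tail integral $\int_{x_1/2}^\infty$ is bounded by $C(1+x_1)^2\exp(-x_1^2/4)$, which is of the required form.

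\textbf{Second term.} This is $\int_0^\infty \dd k\,\varphi_k(x_1)^2\,\|\psi_k-\varphi_k\|^2_{\Lp^2(\R^+)}$. First we bound the $\Lp^2(\R^+)$-norm uniformly in $k$ by $C(1+k)^{3}\exp(-k^2/2)$ up to harmless polynomial factors, again splitting at $y_1=2k$.
\begin{itemize}
\item On $[0,2k]$ the pointwise bound squared, times the length $2k$, gives $Ck(1+k)^3\exp(-k^2)$.
\item On $[2k,\infty)$ the integral of $(1+y)^2\exp(-y^2/4)$ gives $C(1+k)\exp(-k^2)$.
\end{itemize}
Hence $\|\psi_k-\varphi_k\|^2\le C(1+k)^4\exp(-k^2)$. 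Multiplying by $\varphi_k(x_1)^2=\pi^{-1/2}\exp(-(x_1-k)^2)$, we use
\[
(x_1-k)^2+k^2\ge x_1^2/2
\]
(minimized at $k=x_1/2$). We keep half of the Gaussian factor $\exp(-(x_1-k)^2-k^2)$ and bound $(1+k)^4\exp(-\tfrac12((x_1-k)^2+k^2))$ by a constant. This leaves
\[
\exp\!\Big(-\tfrac12\big((x_1-k)^2+k^2\big)\Big) = \exp\!\big(-x_1^2/4\big)\exp\!\big(-(k-x_1/2)^2\big),
\]
whose $k$-integral is bounded by $C\exp(-x_1^2/4)$.

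\textbf{Main obstacle.} The only delicate point is the bookkeeping of exponents: we need exactly $\exp(-x_1^2/4)$ in both terms, not something weaker.
\begin{itemize}
\item In the first term this is automatic, since $2\cdot x_1^2/8=x_1^2/4$ in the region $k\le x_1/2$, and $(x_1/2)^2=x_1^2/4$ at the crossover.
\item In the second term the completing-the-square argument gives $\exp(-x_1^2/4)$ exactly, while the polynomial factors are absorbed by sacrificing half of the Gaussian decay in $k$.
\end{itemize}
Both bounds are uniform in $x_1$, and the constants are independent of $k$ by \autoref{phik - psik pointwise est cl}. The estimate therefore holds for all $x_1>0$.
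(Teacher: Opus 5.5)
Your proposal is correct and follows the paper's proof essentially step for step: the same splitting of the integrand, the same cuts at $k=x_1/2$ and $y_1=2k$ via \autoref{phik - psik pointwise est cl}, and the same bound $C(1+k)^4\exp(-k^2)$ on $\|\psi_k-\varphi_k\|^2_{\Lp^2(\R^+)}$. The only cosmetic difference is in the final $k$-integral: the paper completes the square exactly and obtains the stronger bound $C(1+x_1)^4\exp(-x_1^2/2)$, while you spend half the Gaussian to absorb $(1+k)^4$ and arrive directly at $C\exp(-x_1^2/4)$; either suffices.
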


\begin{proof}
As a first step we write the difference in the integrand as
\begin{align}\Big|\psi_k(x_1)\psi_k(y_1) &- \varphi_k(x_1)\varphi_k(y_1)\Big|^2 \nonumber
\\
&=\Big|\big(\psi_k(x_1) - \varphi_k(x_1)\big)\psi_k(y_1) +  \varphi_k(x_1) \big(\psi_k(y_1)-\varphi_k(y_1)\big)\Big|^2 \nonumber
\\
&\le 2 \big|\psi_k(x_1) - \varphi_k(x_1)\big|^2 \,\big|\psi_k(y_1)\big|^2 + 2 \varphi_k(x_1)^2 \,\big|\psi_k(y_1)-\varphi_k(y_1)\big|^2\,,\label{first and second term}
\end{align}
where we used the inequality $|a+b|^2\le 2|a|^2 + 2|b|^2$.

For the first term in \eqref{first and second term} we can perform the integration with respect to the $y_1$ variable and use the normalization $\int_0^\infty\dd y_1\, \big|\psi_k(y_1)\big|^2 =1$, see \eqref{eigenvalue equ for H(k) plus}. It remains to prove that
\begin{align*} \int_{0}^\infty\dd k\,\big|\psi_k(x_1) - \varphi_k(x_1)\big|^2 \le  C (1+x_1)^3 \exp(-x_1^2/4)\,.
\end{align*}
We will employ \autoref{phik - psik pointwise est cl}. Thus, we split the integral domain at $\tfrac 1 2 x_1$ and observe 
\begin{align*}
\int_{0}^\infty\dd k\,\big|\psi_k(x_1) - \varphi_k(x_1)\big|^2 & \le   C x_1 (1+x_1)^2 \exp(-x_1^2/4)+ C \int_{\tfrac 1 2 x_1}^\infty  \mathrm d k\, (1+k)^3 \exp(-k^2)  \\
&\le  C (1+x_1)^3 \exp(-x_1^2/4) +C (1+x_1)^{2} \exp(-x_1^2/4) \\
&\le  C (1+x_1)^3 \exp(-x_1^2/4)\,.
\end{align*}
Here, we bounded $(1+k)^3$ in the integral by $4 (1+k^3)$. For the error function integral, we use $\int_{x_1/2}^\infty\dd k \, \exp(-k^2)\le C \exp(-x_1^2/4)$ since $x_1>0$. In the remaining integral we write $k^3 \exp(-k^2)= -\frac12 k^2\frac{\dd}{\dd k}\exp(-k^2)$ and integrate by parts and once more for the remaining integral with integrand $k\exp(-k^2) = -\frac12 \frac{\dd}{\dd k}\exp(-k^2)$.

Altogether, this shows that the integral over the first term in \eqref{first and second term},
\begin{align*}
\int_0^\infty \dd k\int_0^\infty \dd y_1\, \big|\psi_k(x_1)-\varphi_k(x_1)\big|^2 |\psi_k(y_1)|^2\le C (1+x_1)^3 \exp(-x_1^2/4) \,.
\end{align*}
Now we deal with the second term \eqref{first and second term} and we proceed in a similar manner. We start with the integral with respect to $y_1$. Using \autoref{phik - psik pointwise est cl}, we split the integration domain at $2k$ and observe
\begin{align*} 
 \int_0^\infty \dd y_1\, \big|\psi_k(y_1) - \varphi_k(y_1)\big|^2 & \le 2k C (1+k)^3\exp(-k^2) + C \int_{2k}^\infty \dd y\, (1+y_1)^2 \exp(-y_1^2/4)  \\
 &\le C (1+k)^4 \exp(-k^2) + C (1+k) \exp(-k^2) \le C (1+k)^4 \exp(-k^2) \, .
\end{align*}
Due to an explicitly known formula for $\varphi_k$, we can now tackle the integral with respect to $k$:
\begin{align*}
&\int_0^\infty \dd k\,\varphi_k(x_1)^2\int_0^\infty \dd y_1\, \big|\psi_k(y_1) - \varphi_k(y_1)\big|^2 \\
&\le C \int_0^\infty  \dd k\,  \exp\big(-(x_1-k)^2\big) (1+k)^4 \exp(-k^2) \\
&= C \exp(-x_1^2/2) \int_{-x_1/2}^\infty  \dd k\,\Big(1+\frac{x_1}{2} + k\Big)^4  \exp(- 2k^2)
\\
&\le C \exp(-x_1^2/2) \int_\R  \dd k\,\big((1+x_1)^4 + k^4\big)  \exp(- 2k^2)
\\
&\le C \exp(-x_1^2/2) \big((1+x_1)^4+1\big) \le C  (1+x_1)^3 \exp(-x_1^2/4) \, .
\end{align*}
where we used the inequalities $(a+b)^4  = ((a+b)^2)^2\le (2a^2+2b^2)^2\le 8 (a^4+b^4)$ with $a=1+x_1/2$ and $b=k$ as well as $x_1\ge 0$ in the integrand. 

This yields the required estimate for the integrals of the second term in \eqref{first and second term}. Altogether, we have proved \eqref{main estimate}.
\end{proof}

\section{Local and global estimates on Fermi projections}\label{Section: Local and global estimates}

In the following, we denote for $a\in\N$ and $b\in\Z$ the (open) square $Q_{a,b} \coloneqq \{x=(x_1,x_2)\in\R^+\times\R : a< x_1< a+1,b<x_2< b+1\}$ with left lower corner at $(a,b)$ and with side length equal to 1. For $c\in\R$, let $I_c \coloneqq \{x\in\R : c< x< c+1\}$, then $Q_{a,b} = I_a\times I_b$. As usual, we denote the multiplication operator with the indicator function on $Q_{a,b}$ by the (same) symbol, $1_{Q_{a,b}}$. 

We decided to work with open squares $Q_{a,b}$ but this can be replaced in all bounds by closed squares with no change. For instance, we will use without further notice that $\R^+\times\R = \bigcup_{a\in\N,b\in\Z} Q_{a,b}$ up to a nullset, or $1_{\R^+\times\R} = \sum_{a\in\N,b\in\Z} 1_{Q_{a,b}}$ almost everywhere.

Here is our first crucial estimate about the local difference of the Fermi projection $P_\mu^+=1(H^+\le\mu)$ and $P_\mu(\R^+\times\R) = 1_{\R^+\times\R} 1(H\le\mu) 1_{\R^+\times\R}$, which we consider as an operator on $\mathsf{L}^2(\R^+\times\R)$. 

\begin{prop} \label{HS estimate}
For any $\mu\in(1,3)$ there exists a constant $C$, such that, for any $a\in \N,b\in\Z$, the Hilbert--Schmidt norm of the operator $1_{Q_{a,b}}\big(P_\mu^+ - P_\mu(\R^+\times\R)\big)$ satisfies the bound
\begin{equation} \|1_{Q_{a,b}}\big(P_\mu^+ - P_\mu(\R^+\times\R)\big)\big\|_{2}^{2} \le C\exp(-a)\,.
\end{equation}
\end{prop}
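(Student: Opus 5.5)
The plan is to start from the identity of \autoref{HS norm integral rep thm}, which expresses the squared Hilbert--Schmidt norm as
\[
\frac1{2\pi}\int_a^{a+1}\dd x_1\int_\R \dd s\int_{\R^+}\dd y_1\,\big[1(s\ge k_F)\psi_s(x_1)\psi_s(y_1)-\varphi_s(x_1)\varphi_s(y_1)\big]^2 ,
\]
and to bound the inner double integral pointwise in $x_1\in(a,a+1)$ by $C\,\mathrm{e}^{-x_1}$ (or by something like $C(1+x_1)^3\mathrm{e}^{-x_1^2/4}$ plus a term decaying at least like $\mathrm{e}^{-x_1}$). Integrating over $x_1\in(a,a+1)$ then gives $C\mathrm{e}^{-a}$. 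The bound does not depend on $b$, which is consistent with the fact that $b$ has already dropped out of the identity.

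I will split the $s$-integral into three pieces: $s<0$, $0\le s<k_F$, and $s\ge k_F$.

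\emph{Region $s\ge k_F$.} Here the bracket equals $\psi_s(x_1)\psi_s(y_1)-\varphi_s(x_1)\varphi_s(y_1)$. Since $k_F\ge0$, this piece is at most the integral over $s\in[0,\infty)$, and \autoref{main prop} bounds it by $C(1+x_1)^3\exp(-x_1^2/4)$. That is certainly $\le C\mathrm{e}^{-x_1}$ for $x_1\ge0$.

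\emph{Region $s<k_F$.} Here the bracket equals $-\varphi_s(x_1)\varphi_s(y_1)$, so after integrating over $y_1$ we can use $\int_{\R^+}\varphi_s(y_1)^2\,\dd y_1\le1$. What remains is
\[
\int_{-\infty}^{k_F}\varphi_s(x_1)^2\,\dd s=\pi^{-1/2}\int_{-\infty}^{k_F}\mathrm{e}^{-(x_1-s)^2}\,\dd s .
\]
For $x_1\ge 2k_F$ this is at most $\pi^{-1/2}\int_{x_1-k_F}^\infty \mathrm{e}^{-t^2}\,\dd t\le C\mathrm{e}^{-x_1^2/4}$. For $x_1<2k_F$ the integral is trivially bounded by $1$, so it is $\le C\mathrm{e}^{-x_1}$ with a constant $C$ depending on $k_F$, i.e.\ on $\mu$. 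This $\mu$-dependence is allowed by the statement.

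Summing the contributions, the inner integral is bounded by $C_\mu\mathrm{e}^{-x_1}$ uniformly in $x_1>0$, and integrating over $x_1\in(a,a+1)$ finishes the proof.

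The only genuinely delicate input is \autoref{main prop}, which rests on the comparison estimates for $\psi_k$ versus $\varphi_k$ from \autoref{phik - psik pointwise est cl}. Since that result is already available, the main care needed here is in the bookkeeping. The region $s<0$ sits inside $s<k_F$, where only $\varphi_s$ appears, so no properties of $\psi_s$ for negative $s$ are needed. The factor $\tfrac1{2\pi}$ and the uniformity in $b$ carry through unchanged.
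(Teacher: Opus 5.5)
Your proposal is correct and follows the paper's proof. Both use the Hilbert--Schmidt identity of \autoref{HS norm integral rep thm}, split the $s$-integral at $k_F$, apply \autoref{main prop} for $s\ge k_F$, and use the normalization of $\varphi_s$ plus a Gaussian bound for $s<k_F$. The only difference is cosmetic: where you split into $x_1\ge 2k_F$ and $x_1<2k_F$, the paper uses $\mathrm{e}^{-t^2/2}\le\sqrt{\mathrm{e}}\,\mathrm{e}^{-|t|}$ to get $\int_{-\infty}^{k_F}\dd s\,\mathrm{e}^{s-x_1}=\mathrm{e}^{k_F-x_1}$ in one step.
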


\begin{proof} 
	We rely on \autoref{HS norm integral rep thm}. Thus, we want to estimate  
		\begin{align}
		\frac 1 {2\pi}\int_a^{a+1} \mathrm d x_1   \int_{\R} \mathrm d s 
		\int_{\R^+} \mathrm d y_1  
		\big[1(s\ge k_F)\psi_s(x_1)\psi_s(y_1) -\varphi_s(x_1)\varphi_s(y_1) \big]^2\,.
	\end{align}
	We will ignore the integral over $x_1$ for now. 
	We start by splitting the integral over $s$ into the domains $(-\infty,k_F)$ and $(k_F,\infty)$. Let us begin with the last one of those. Here, we use \autoref{main prop} and observe
	\begin{align*}
		&\int_{k_F}^\infty \mathrm d s \int_{\R^+} \mathrm d y_1 \big[ 1(s\ge k_F)\psi_s(x_1)\psi_s(y_1) -\varphi_s(x_1)\varphi_s(y_1) \big]^2 \\
		&\le \int_{\R^+}\mathrm d s \int_{\R^+} \mathrm d y_1 \big[ \psi_s(x_1)\psi_s(y_1) -\varphi_s(x_1)\varphi_s(y_1) \big]^2 \\
		& \le C (1+x_1)^3 \exp(-x_1^2/4) \le C \exp(-x_1) \, .
	\end{align*}
	For the remaining interval, we use  the normalization $\lVert \varphi_s \rVert_{\Lp^2(\R)}=1$ and $\exp(-a^2/2)\le \exp(\tfrac 12) \exp(a)$. We get
	\begin{align*}
		&\int_{-\infty}^{k_F} \mathrm d s \int_{\R^+} \mathrm d y_1 \,\big[ 1(s\ge k_F)\psi_s(x_1)\psi_s(y_1) -\varphi_s(x_1)\varphi_s(y_1) \big]^2 \\
		&\le \int_{-\infty}^{k_F} \mathrm d s \,\varphi_s(x_1)^2 \int_\R \mathrm d y_1 \,\varphi_s(y_1)^2 \\
		&= \int_{-\infty}^{k_F} \mathrm d s\, \pi^{-\tfrac 1 4} \exp(-(x_1-s)^2/2)  \\
		&\le  \sqrt \mathrm{e} \, \pi^{-\tfrac 1 4} \int_{-\infty}^{k_F} \dd s\, \exp(s-x_1)=C\exp(k_F-x_1) =C\exp(-x_1) \, .
	\end{align*}
	This estimate is pretty bad in the case $k_F>x_1$, but since $k_F$ is just a constant for us, it still suffices.

	Altogether, this shows that
	\begin{align*} 
		\left \lVert 1_{Q_{a,b}}\big(P_\mu^+ - P_\mu(\R^+\times\R)\big)\right \rVert_{2}^2 
		\le & \int_a^{a+1}\mathrm d x_1 \,\frac 1 {2\pi} \left[ C\exp(-x_1)+C\exp(-x_1)\right] \le C \exp(-a) \, ,
	\end{align*}
	which proves the statement of this proposition.
\end{proof}

Let us introduce some more notation. For $p>0$ we denote by
\begin{equation} \label{def Schatten}
\mathcal S^p \coloneqq \mathcal S^p(\R^+\times\R) \coloneqq\big\{A : \mathsf{L}^2(\R^+\times\R)\to \mathsf{L}^2(\R^+\times\R)\mbox{ linear, compact and }  \tr (A^* A)^{p/2} <\infty \big\}
\end{equation} 
the Schatten--von Neumann ideal of compact (linear) operators with $p$-summable singular values. $\mathcal S_p$ is equipped with the $p$(-quasi)-norm 
\begin{equation}  \label{def Schatten norm}
\|A\|_{{\mathcal S}^p} \coloneqq \left(\tr (A^* A)^{p/2}\right)^{1/p} \,.
\end{equation} 
We simply write $\|\cdot\|_p$ for this (quasi-)norm. We denote by $\|\cdot \|$ the operator norm which equals $\|\cdot\|_\infty$. For any $0<p<q \le \infty$, we have
	\begin{equation} \label{p norm monotone}
		\lVert A \rVert_q \le \lVert A \rVert_p \, .
	\end{equation}
For $p\in(0,1]$, the $p$-quasi norm satisfies the $p$-triangle inequality,
\begin{equation} \label{p triangle}
	\|A+B\|_{p}^p\le \|A\|_{p}^p + \|B\|_{p}^p
\end{equation}
rather than the usual triangle inequality. Let us also recall H\"older's inequality in the form
\begin{align}\label{HolderI}
	\|AB\|_r \le \|A\|_p\cdot\|B\|_q\,,
\end{align}
if $r,p,q>0$ and $\frac1r = \frac1p + \frac1q$ (with the usual convention $1/\infty=0$) and in the form
\begin{equation} \label{HolderII}
	\|A\|_{r} \le \|A\|_{p}^\a \|A\|_{q}^{1-\a}\,,
\end{equation}
where $\frac1r = \a \frac1p + (1-\a)\frac1q$ with $r,p,q>0$, $\a\in(0,1)$ and the convention $1/\infty=0$.

Our second local estimate is the following bound

\begin{prop} \label{local p norm of P oben plus} For all $p>0$ there is a constant $C$ depending on $p$ but not on $(a,b)\in\N\times\Z$ such that
\begin{equation} \|1_{Q_{a,b}} \big(P_\mu^+-P_\mu(\R^+\times\R)\big)\|_{p} \le C\,.
\end{equation}
\end{prop}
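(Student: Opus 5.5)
By the $p$-triangle inequality \eqref{p triangle} (for $p\le1$) and the monotonicity \eqref{p norm monotone}, it suffices to treat $p\le 1$. We also reduce to
\[
\|1_{Q_{a,b}}P_\mu^+\|_p\le C \quad\text{and}\quad \|1_{Q_{a,b}}1_{\R^+\times\R}P_\mu1_{\R^+\times\R}\|_p\le C
\]
uniformly in $(a,b)$. For the full-plane term, $\|1_{Q_{a,b}}P_\mu 1_{\R^+\times\R}\|_p\le \|1_{Q_{a,b}}P_\mu\|_p$. The right-hand side is independent of $(a,b)$ by magnetic translation invariance: magnetic translations are unitaries commuting with $H$ and conjugating $1_{Q_{a,b}}$ to $1_{Q_{0,0}}$. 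So we only need $1_{Q_{0,0}}P_\mu\in\mathcal S^p$.

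\textbf{Smoothing factorization.} For either projection write $P=P^{(+)}_\mu=e^{tH^{(+)}}P\,e^{-tH^{(+)}}$, with $e^{tH^{(+)}}P$ bounded by $e^{t\mu}$. Then
\[
\|1_Q P\|_p\le \|1_Q\, g(H^{(+)})\|_p\,\|g(H^{(+)})^{-1}P\|,
\]
where $g$ is any positive decaying function; we choose $g(\lambda)=(\lambda+1)^{-m}$ with $m$ large. So it suffices to show $1_Q(H^{(+)}+1)^{-m}\in\mathcal S^p$ with norm bounded uniformly in the square $Q$.

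For $p\le 1$ we factor once more,
\[
1_Q(H^{(+)}+1)^{-m}=\bigl[1_Q(H^{(+)}+1)^{-1}\chi\bigr]\cdots+\text{remainder},
\]
and use Hölder \eqref{HolderI}. The required input is that $1_Q(H^{(+)}+1)^{-1}$ lies in $\mathcal S^q$ for every $q>1$, uniformly in $Q$. This follows from diamagnetic/Cwikel-type bounds: $|e^{-tH^{(+)}}(x,y)|\le (4\pi t)^{-1}e^{-|x-y|^2/4t}$ (Dirichlet only decreases the kernel), giving for instance that $1_Q e^{-tH^{(+)}}$ is Hilbert–Schmidt with norm independent of $Q$.

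\textbf{Cleaner route avoiding commutator remainders.} Write
\[
1_QP=(1_Qe^{-H^{(+)}/2^n})\cdots
\]
using the semigroup property $P=e^{-H^{(+)}}\,(e^{H^{(+)}}P)$ and $e^{-H}=(e^{-H/N})^N$. Insert a partition of unity $1=\sum_{Q'}1_{Q'}$ between factors, so that
\[
1_Q e^{-H^{(+)}}=\sum_{Q_1,\dots}1_Qe^{-H/N}1_{Q_1}e^{-H/N}\cdots.
\]
Each factor $1_{Q'}e^{-H/N}1_{Q''}$ is Hilbert–Schmidt with norm at most $Ce^{-c\,\mathrm{dist}(Q',Q'')^2}$, by the Gaussian kernel bound. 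Hölder gives the $\mathcal S^{2/N}$-norm of each product term. The $p$-triangle inequality then sums over the chain $Q_1,\dots,Q_{N-1}$, and the Gaussian decay makes the sum converge uniformly in $Q$. Choosing $N\ge 2/p$ finishes, since $\|e^{H^{(+)}}P\|\le e^\mu$.

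The main obstacle is justifying the Gaussian (diamagnetic) heat-kernel bound for $H^+$ with Dirichlet condition, and checking that the $p$-triangle sums converge. For the kernel bound I would cite the diamagnetic inequality together with domain monotonicity of the Dirichlet heat kernel.
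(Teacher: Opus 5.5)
Your final (``cleaner'') route is correct and is essentially the paper's proof. Both use diamagnetic domination of $e^{-tH^{(+)}}$ by the free Gaussian kernel, Hilbert--Schmidt bounds on unit-square pieces decaying like $e^{-c\,\mathrm{dist}^2}$, H\"older plus the $p$-triangle inequality to reach small $p$, and $\|e^{tH^{(+)}}P^{(+)}_\mu\|\le e^{t\mu}$. The only differences are cosmetic. The paper climbs dyadically by induction, using $e^{-2^{k+2}H^+}=(e^{-2^{k+1}H^+})^2$ and halving the exponent at each step, whereas you split a fixed time into an $N$-fold chain at once. Your magnetic-translation remark is also unnecessary, since the Gaussian bound is already uniform in $(a,b)$.

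Drop the resolvent/Cwikel sketch, though. Pointwise domination of integral kernels does not control $\mathcal S^q$ norms for $q<2$, so that route is not justified as written.
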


\begin{proof} By the $p$-triangle inequality \eqref{p triangle}, it suffices to prove that $\|1_{Q_{a,b}} P_\mu^+\|_{p}\le C$ and $\|1_{Q_{a,b}} P_\mu(\R^+\times\R)\|_{p} \le  \|1_{Q_{a,b}}P_\mu\|_{p}\le C$. Let us look first at $1_{Q_{a,b}} P_\mu^+$; for $1_{Q_{a,b}} P_\mu$ such a bound is known, see \cite[Lemma 12]{LSS20}, but we will give an independent proof here, which is taken from \cite[Lemma 4.2.5]{P24thesis} with almost no changes in the proof. The idea is to insert $\exp(-t H^+)\exp(tH^+)$ between the indicator $1_{Q_{a,b}}$ and the projection $P_\mu^+$ with $t$ depending suitably on $p$, basically $t=2/p$. In the end, there will be the operator $\exp(t H^+)P_\mu^+$, which is bounded in operator norm (because $H^+\le \mu$ on $P_\mu^+$) and the operator $1_{Q_{a,b}}\exp(-t H^+)$, which will be bounded in the $p$-quasi norm.  

We use a diamagnetic inequality for the integral kernel of the semigroup, $\exp(-tH^+)$, of $H^+$ 
\begin{equation} \label{diamagnetic} \big|\exp(-t H^+) (x,y)\big| \le \exp(t\D^+)(x,y)
\end{equation}
for all $t>0$, and almost all $x,y$ in $\R^+\times\R$. Here, $\D^+$ is the Laplacian on $\R^+\times\R$ with the same Dirichlet boundary conditions as $H^+$ on $\{0\}\times \R$. Once one uses the Feynman--Kac--It\^{o} formula for $\exp(-t H^+)$ and the triangle inequality, this is trivially proved. See a more general statement in \cite[Theorem 1.1, Remark 1.2(iii)]{HS} including proofs. 

The integral kernel of $\exp(t\D^+)$ is of the form
\begin{align} \exp(t\D^+)&\big((x_1,x_2),(y_1,y_2)\big) 
\\
&= \frac{1}{4\pi t}\Big[\exp\big(-(x_1-y_1)^2/(4t)\big) - \exp\big(-(x_1+y_1)^2/(4t)\big)\Big]\exp\big(-(x_2-y_2)^2/(4t)\big)\,,\nonumber
\end{align}
with $(x_1,x_2),(y_1,y_2)\in\R^+\times\R$. The second factor $\exp\big(-(x_1+y_1)^2/(4t)\big)\le \exp\big(-(x_1-y_1)^2/(4t)\big)$ is not really helpful so that we can drop it and use the bound
\begin{align*}
\Big|\exp(-t H^+) (x,y)\Big| \le \frac{1}{4\pi t}\exp\big(-(x_1-y_1)^2/(4t)\big) \exp\big(-(x_2-y_2)^2/(4t)\big) = \exp(t\D)(x,y)\,,
\end{align*}
where $\Delta$ is the Laplace operator on $\R^2$. By translation invariance of $\Delta$ we can thus fix $a=b=0$ and the uniformity of the estimate is obvious, once it is proved for this special choice.

Let us return to the proof of $\|1_{Q_{a,b}}P_\mu^+\|_{p}\le C$. We claim that for $k\in\{-1\}\cup \N$  and $(c,d)\in\N\times\Z$,
\begin{align*}
\big\|1_{Q_{a,b}} \exp(-2^{k+1}H^+) 1_{Q_{c,d}}\big\|_{2^{-k}}^{2^{-k}} \le C \exp\Big(-2^{-3k-5}\big((a-c)^2 + (b-d)^2\big)\Big)\,,
\end{align*}
where the constant $C$ may depend on $k$ but not on $a,b,c,d$. The proof follows the same lines as \cite[(4.2.29)]{P24thesis}. It is done by induction on $k$. The initial step at $k=-1$ is just a Hilbert--Schmidt norm estimate based on the upper bound for the integral kernel, while the induction step relies on  \eqref{p triangle} and \eqref{HolderI}, which yield
\begin{align*}
	&\big\|1_{Q_{a,b}} \exp(-2^{k+2}H^+) 1_{Q_{c,d}}\big\|_{2^{-(k+1)}}^{2^{-(k+1)}} \\
	\le &  \sum_{(a',b') \in \Z^2} \big\|1_{Q_{a,b}} \exp(-2^{k+1}H^+) 1_{Q_{a',b'}}\big\|_{2^{-k}}^{2^{-k}} \big\|1_{Q_{a',b'}} \exp(-2^{k+1}H^+)  1_{Q_{c,d}}\big\|_{2^{-k}}^{2^{-k}} \, .
\end{align*}
The remaining estimates can be found in \cite[(4.2.29)]{P24thesis}. 

Due to this exponential bound and \eqref{p triangle} it is easy to get rid of the second factor $1_{Q_{c,d}}$,
\begin{align*}
\big\|1_{Q_{a,b}} \exp(-2^{k+1}H^+)\big\|_{2^{-k}}^{2^{-k}} &\le \sum_{a\in\N,b\in\Z}\big\|1_{Q_{a,b}} \exp(-2^{k+1}H^+)1_{Q_{c,d}}\big\|_{2^{-k}}^{2^{-k}} 
\\
&\le C\sum_{a\in\N,b\in\Z} \exp\Big(-2^{-3k-5}\big((a-c)^2 + (b-d)^2\big)\Big)
\\
&\le C\,,
\end{align*}
where $C$ depends on $k$, only. Now, inserting $\exp(-2^{k+1} H^+)\exp(2^{k+1}H^+)$ between $1_{Q_{a,b}}$ and $P_\mu^+$ and using the H\"older inequality \eqref{HolderI}, we have the following estimate 
\begin{align*}
\big\|1_{Q_{a,b}} P_\mu^+\big\|_{2^{-k}}^{2^{-k}} &\le \big\|1_{Q_{a,b}} \exp(-2^{k+1}H^+)\big\|_{2^{-k}}^{2^{-k}} \cdot \big\|\exp(2^{k+1}H^+) P_\mu^+\big\|_{\infty}^{2^{-k}}\le C \exp(2^{k+1}\mu)\,.
\end{align*}
This completes the proof for $p=2^{-k}$. For $p$ different from these special values we can use \eqref{p norm monotone}.

Finally a word to the proof of $\|1_{Q_{a,b}}P_\mu\|_{p}\le C$. Here, we proceed along the same line of arguments since the well-known diamagnetic inequality $\big|\exp(-tH)(x,y)\big|\le \exp(t\D)(x,y)$ can be used from the start.
\end{proof}

The next result combines the local Hilbert--Schmidt estimate of \autoref{HS estimate} and the local $p$-quasi norm estimate of \autoref{local p norm of P oben plus} to a global estimate.

\begin{cl}\label{global p estimate}
For any $p\in(0,1]$, there exists a constant $C$ depending on $p$, $\mu$ and $\L$ but not on $L \ge 1$ such that
\begin{equation} 
\big\|1_{L\L}\big(P_\mu^+-P_\mu(\R^+\times\R)\big)\big\|_{p}^p \le C L\,.
\end{equation}
Moreover, if the distance between $\L$ and the boundary $\R^+\times \R$ is (strictly) positive then 
\begin{equation} \label{global p estimate far from boundary}
\big\|1_{L\L}\big(P_\mu^+ - P_\mu(\R^+\times\R)\big)\big\|_{p} \le C\,.
\end{equation}
\end{cl}

\begin{proof} We insert $1_{L\L} = \sum_{(a,b)\in(\N\times\Z)\cap B_{\sqrt 2}(L\L)}1_{L\L} 1_{Q_{a,b}}$ and apply the $p$-triangle inequality \eqref{p triangle},
\begin{align*}
\big\|1_{L\L} \big(P_\mu^+ - P_\mu(\R^+\times\R)\big)\big\|_{p}^p &\le\sum_{(a,b)\in(\N\times\Z)\cap B_{\sqrt 2}(L\L)}  \big\|1_{Q_{a,b}} \big(P_\mu^+ -P_\mu(\R^+\times\R)\big)\big\|_{p}^p\,.
\end{align*}
Let $A_{a,b}\coloneqq 1_{Q_{a,b}}\big(P_\mu^+-P_\mu(\R^+\times\R)\big)$. We use \autoref{HS estimate} and \eqref{p norm monotone} to see
\[
\lVert A_{a,b} \rVert = \lVert A_{a,b} \rVert_\infty \le \lVert A_{a,b} \rVert_2 \le C \exp(-a/2) \, .
\]
We apply H\"older's inequality \eqref{HolderII} and \autoref{local p norm of P oben plus} to $A_{a,b}$ and obtain
\begin{align*}
\|A_{a,b}\|_{p}^p &\le \|A_{a,b}\|_{p/2}^{p/2}\cdot \|A_{a,b}\|_{\infty}^{p/2}\le C^{p/2} \cdot C^{p/2} \exp(-ap/4)\,.
\end{align*}
As $\lvert b \rvert \le CL$, the number of possible values of $b$ is bounded by $CL$. Thus, we can conclude that
\begin{align*}
\big\|1_{L\L} \big(P_\mu^+ - P_\mu(\R^+\times\R)\big)\big\|_{p}^p &\le C L \sum_{a\in\N} \exp(-a p/4)\le C L\,.
\end{align*}
The final constant $C$ depends through the sum over $b$ also on the diameter of $\L$ in the vertical direction, that is, on the largest vertical line inside $\L$.

The second statement follows from the same line of arguments except that the sum over $a$ runs over the set $\{\lfloor\d L\rfloor,\lfloor\d L\rfloor+1,\ldots\}$ for some $\d>0$ depending on the distance of $\L$ to $\{0\}\times\R$ so that the sum over $a$ is of order $\exp(-\delta p L/4)$. So in fact, \eqref{global p estimate far from boundary} can be improved to an exponentially small upper bound $CL\exp(-\delta pL/4)$, which is however not needed here.
\end{proof}

\section{Proof of main theorem}

The method of proving \autoref{main thm} is standard by now and based on the following inequality of Sobolev \cite[Theorem 2.4]{Sobolev17}: For any function $f\in\mathsf{C}^2(\R\setminus\{x_0\})\cap \mathsf{C}(\R)$ with support in $[x_0-R,x_0+R]$, $x_0\in\R$, $R>0$ and self-adjoint operators $A,B\in\mathcal S^{\s q}$, see \eqref{def Schatten}, 
\begin{equation} \label{Sobolev inequ}
\|f(A)-f(B)\|_{q} \le C R^{\g-\s} \|f\|_{x_0,\g} \,\||A-B|^\s\|_{q}\,.
\end{equation}
Here, $q\le 1$, $\s\in(0,1], (2-\s)^{-1}<q,\s<\g$ and
\begin{equation} 
\|f\|_{x_0,\g} \coloneqq \max_{0\le k\le 2} \sup\big\{|f^{(k)}(x)| |x-x_0|^{-\g+k} : x\in\R\big\}\,.
\end{equation}
The constant $C$ in \eqref{Sobolev inequ} is independent of the operators $A,B$, of the function $f$ and the parameter $R$. By a partition of unity, inequality \eqref{Sobolev inequ} can be extended to functions $f$ which are $\mathsf{C}^2$-smooth except on a finite number of points $x_0,\ldots,x_N$. We may write $f=\sum_{i=0}^N f_i$ with $\|f_i\|_{x_i,\g}<\infty$ such that the single term $\|f\|_{x_0,\g}$ on the right-hand side in \eqref{Sobolev inequ} is replaced by $C\|f\|_{\{x_0,\ldots,x_N\},\g}$ with $\|f\|_{\{x_0,\ldots,x_N\},\g} = \sum_{i=0}^N \|f_i\|_{x_i,\g}$ and a constant $C$ that depends on $N$ and $q$, in general. For $q=1$ the latter constant is 1. 

The R\'enyi entropy functions $h_\a$ are such examples with $x_0=1, x_1=1$, $R=1$ and with $\g = \a$ if $\a<1$ and any $\g<1$ if $\a\ge1$.

We apply inequality \eqref{Sobolev inequ} with $f=h_\a$, $A\coloneqq 1_{L\L}P_\mu^+1_{L\L}$, $B\coloneqq 1_{L\L}P_\mu(\R^+\times\R)1_{L\L}$, and $q=1$. Then,
\begin{align*} \big|S^+_\a(L\L) - S_\a(L\L)\big|&\le \big\|h_\a(A) - h_\a(B)\big\|_1
\\
&\le C  \|h_\a\|_{\{0,1\},\g} \cdot \big\|\big|1_{L\L}\big(P_\mu^+-P_\mu(\R^+\times\R)\big)1_{L\L}\big|^\s\big\|_{1}
\\
&\le C \big\|1_{L\L}\big(P_\mu^+-P_\mu(\R^+\times\R)\big)\big\|_{\s}^\s
\\
&\le C L\,.
\end{align*}
In the last step we have used \autoref{global p estimate}.   The constant $C$ depends (besides $\s$ and $\mu$) on the diameter of $\L$ in the vertical direction, which we can bound by $|\p^+\L|$. Also, in general, $|\p\L|\le 2 |\p^+\L|$. Altogether, this yields $S^+(L\L) = S(L\L) + O(L) \le C L|\p^+\L|$. 

If the distance between $\L$ and $\{0\}\times\R$ is (strictly) positive then we have 
\begin{align*} \big|S^+_\a(L\L) - S_\a(L\L)\big|&\le C\,,
\end{align*}
which yields $S^+_\a(L\L) = L |\p\L|\mathsf{M}_0(h_\a) + o(L)$, since the first term is the leading asymptotic term in the expansion of $S(L\L)$ with error $o(L)$. We do not have a rigorously proven statement for the entanglement entropy beyond the leading term of the order $L$. If we had, combined with the exponentially small bound (in $L$) of \eqref{global p estimate far from boundary}, we would then obtain the same expansion for $S^+_\a(L\L)$ as for $S_\a(L\L)$ if $\L$ has positive distance to the boundary $\{0\}\times\R$.

\section{Decay of Fermi projections and an open question}\label{Discussion}

In \autoref{Section 4}, we saw that the integral kernel of $P_\mu^+$ is of the form
\[
P_\mu^+\big((x_1,x_2),(y_1,y_2)\big) = \int_{k_F}^\infty\frac{\dd k}{2\pi} \, \exp\big(\mathrm{i}k(x_2-y_2)\big)\, \psi_k(x_1)\psi_k(y_1)\,,\quad (x_1,x_2),(y_1,y_2)\in\R^+\times\R\,,
\]
with $\psi_k(x) = N_k D_{\nu(k)}\big(\sqrt{2}(x-k)\big)$; $N_k$ is the normalization factor so that $\|\psi_k\|_{\mathsf{L}^2(\R^+)} = 1$, see the normalization in \eqref{eigenvalue equ for H(k) plus} and \autoref{lemma k_F} for the definition of $k_F$. Then, in the horizontal direction with $t>0$, we get
\begin{align*} 
\lvert P_\mu^+&\big((x_1,x_2),(x_1+t,y_2)\big)  \rvert
\\
&\le \int_{k_F}^\infty\frac{\dd k}{2\pi} \, \psi_k(x_1)\psi_k(x_1+t)
\\
&\le \int_{0}^\infty\dd k\, \psi_k(x_1)\psi_k(x_1+t)
\\
&\le C  \int_0^{x_1+\tfrac t 2 } \mathrm d k\,\psi_k(x_1+t)   + C \int_{x_1 + \tfrac t 2 } ^\infty  \mathrm d k\,\psi_k(x_1) \\
&\le C \int_0^{x_1+\tfrac t 2 } \mathrm d k\,\big[\lvert \psi_k(x_1+t)-\varphi_k(x_1+t) \rvert + \varphi_k(x_1+t)\big] \\
& \quad + C \int_{x_1 + \tfrac t 2 } ^\infty \mathrm d k \,\big[\lvert \psi_k(x_1)-\varphi_k(x_1) \rvert +  \varphi_k(x_1) \big]\\
& \le C \int_0^{x_1+\tfrac t 2 } \mathrm d k\,\big[(1+x_1+t)^{3/2} \exp(-(x_1+t)^2/8) + \exp(-(x_1+t-k)^2/2)\big] \\
& \quad + C \int_{x_1 + \tfrac t 2 } ^\infty \mathrm d k\,\big[(1+2k)^{3/2} \exp(-k^2/2) +  \exp(-(x_1-k)^2/2) \big] \\
&\le C\exp(-t^2/10)\,.
\end{align*}
In this (non-optimal) bound we have used that $k_F\ge0$, $0\le \psi_k(x_1)\le C$ is upper bounded as a function of $k$ and $x_1$, and \autoref{phik - psik pointwise est cl} as formulated in \autoref{remarks on difference and uniform bound}.

	That shows an exponential decay of the integral kernel in the horizontal direction.

To compare this to the situation on the full plane, we have for $(x_1,x_2),(y_1,y_2)\in\R^2$
\begin{align*}
P_\mu\big((x_1,x_2),(y_1,y_2)\big)&=\int_\R \frac{\dd k}{2\pi}\, \exp\big(\mathrm{i}k(x_2-y_2)\big) \varphi_k(x_1)\varphi_k(y_1)
\\
&=\frac{1}{2\pi} \, \exp\big(-(x_1-y_1)^2/4 -(x_2-y_2)^2/4   +\mathrm{i}(x_1+y_1)(x_2-y_2)/2\big)\,.
\end{align*}
Thus, $\big|P_\mu\big((x_1,x_2),(x_1+t,y_2)\big)\big| = (2\pi)^{-1}\exp(-(x_2-y_2)^2/4) \exp(-t^2/4)$ and $\big|P_\mu\big((x_1,x_2),(y_1,x_2+t)\big)\big| = (2\pi)^{-1}\exp(-(x_1-y_1)^2/4) \exp(-t^2/4)$ for any $t\in\R$ imply exponential localization in the horizontal and vertical direction.

More generally, suppose we consider the spectral Fermi projection $P_\mu = 1(\mathcal H\le\mu)$ of a Hamiltonian $\mathcal H$ with an exponentially decaying integral kernel in all directions. Then one can prove an area-law bound of the entanglement entropy as in \eqref{EE bound}, that is, $S_\a(P_\mu(\Lambda))\le C |\partial\Lambda|$. Such a proof in the discrete case is presented in \cite{PasturSlavin14}. To prove the precise area-law scaling is still another matter.

On the other hand, for the half-plane Hamiltonian, in the vertical direction with $t\in\R$, we have 
\begin{align*}
P_\mu^+\big(&(x_1,x_2),(y_1,x_2+t)\big)
\\
&=\int_{k_F}^\infty\frac{\dd k}{2\pi} \, \exp\big(-\mathrm{i}kt\big)\, \psi_k(x_1)\psi_k(y_1)
\\
&=-\frac{1}{2\pi\mathrm{i}t}\int_{k_F}^\infty\dd k \, \frac{\dd}{\dd k}\exp\big(-\mathrm{i}kt\big)\, \psi_k(x_1)\psi_k(y_1)
\\
&=\frac{1}{2\pi \mathrm{i}t} \exp\big(-\mathrm{i}k_Ft\big)\, \psi_{k_F}(x_1)\psi_{k_F}(y_1) +\frac{1}{2\pi\mathrm{i}t}\int_{k_F}^\infty\dd k \, \exp\big(-\mathrm{i}kt\big)\, \frac{\dd}{\dd k}\big(\psi_k(x_1)\psi_k(y_1)\big)\,.
\end{align*}

Assuming that $k \mapsto \tfrac {\mathrm d } {\mathrm dk} \psi_k(x) \in \Lp^1((k_F,\infty))$ for $x\in\{x_1,y_1\}$ and using that $0\le \psi_k(x)\le 2$, we see that the last integral is (in absolute value) bounded by 
\[ 
\frac{1}{2\pi |t|} \int_{k_F}^\infty\dd k\, \Big[\psi_k(x_1)\Big|\frac{\dd}{\dd k}\psi_k(y_1)\Big| +  \psi_k(y_1)\Big|\frac{\dd}{\dd k}\psi_k(x_1)\Big|\Big]\le \frac{C}{|t|}\,.
\]
Thus, for fixed $x_1$ and $y_1$, the integral kernel $P_\mu^+\big((x_1,x_2),(y_1,x_2+t)\big)$  should decay like $C/|t|$ as $t\to\infty$ with a constant $C$ that depends on $x_1,y_1$. This is in fact even weaker than for the Laplace operator in two dimensions, where the decay of the spectral projection (in all directions) is of the order $|t|^{-3/2}$; see formula (4.2) in \cite{PS23} for the spectral projection of the Laplace operator on $\R^2$ and \cite[Proposition 2.2]{PS23} for the decay of the Bessel function. Recall that for the Laplace operator we have a logarithmically enhanced area-law. In \cite{PS22}, we proved a logarithmically enhanced area-law for the ground states of the Landau Hamiltonian on $\R^3$. Here, the Fermi projection (denoted by $\mathrm{D}_\mu$) decays exponentially in the planar coordinates ($x^\perp$) and like $C/|x^\parallel|$ in the vertical (or longitudinal) direction ($x^\parallel$) parallel to the constant magnetic field, see \cite[(2.8)]{PS22}. In contrast to the present situation, this yields an enhanced area-law.

We have now two examples of a Hamiltonian with purely absolutely continuous spectrum, namely the Landau Hamiltonian on $\R^3$ and the Landau Hamiltonian on the half-plane. In both examples, the Fermi projection has a mixed (exponential and polynomial) decay behaviour, and it is the weak polynomial decay which is responsible for absolutely continuous spectrum. In one case this leads to a logarithmically enhanced area-law while in the other case it yields a strict area law. The main purpose of this paper to show that a (purely) absolutely continuous does not suffice to have an enhanced area-law. 

We end this paper with the question: What are sufficient conditions on the spectrum (or spectral measure) of a (one-particle) Schr\"odinger operator, $\mathcal H$, to guarantee a logarithmically enhanced area-law in its ground states defined as $P_\mu = 1(\mathcal H\le\mu)$, $\mu\in\R$?

\begin{appendix}

\section{Density of linear spans of $\mathcal C^{(+)}$ in form domains}\label{App C+ density}

In this appendix, we show that the linear span of $\mathcal C$ (see \eqref{def C}) is dense in $\mathcal Q(q)$ (see \eqref{H form domain}) and $\mathcal Q(\hat q)$ (see \eqref{hat H form domain}) as well as that the linear span of $\mathcal C^+$ (see \eqref{def C}) is dense in $\mathcal Q(q^+)$ (see \eqref{H+ form domain}) and $\mathcal Q(\hat q^+)$ (see \eqref{hat H+ form domain}).

Let us first introduce the support of an $\Lp^2$-function. For $v \in \Lp^2(\R^2)$, let $\operatorname{supp}(v)$ be the smallest closed set $A$, such that the set $\{x \in \R^2 : v(x)=0 , x \not \in A \}$ is a Lebesgue-nullset, or more formally
\begin{align*}
	\R^2 \setminus \operatorname{supp}(v) \coloneqq \bigcup_{\{x \in \Q^2, \varepsilon \in \Q^+ :  \lvert v^{-1}(0) \cap B_\varepsilon(x) \rvert =0 \} } B_\varepsilon(x) \, ,
\end{align*}
which uses a countable basis of the topology of $\R^2$ to show that $ v^{-1}(0) \cap \big(\R^2 \setminus \operatorname{supp}(v) \big)$ is a countable union of nullsets and thus, a nullset. 

Let $\Omega \subset \R^2$ be open. Since $\mathcal Q(q), \mathcal Q(\hat q) \subset \Lp^2(\R^2)$, we can define
\begin{align}
	X(\Omega) \coloneqq \big\{ v \in \mathcal Q(q) : \operatorname{supp}(v) \subset \overline{\Omega} \big\} \, , \quad 
	\hat X (\Omega) \coloneqq \big\{ v \in \mathcal Q(\hat q) : \operatorname{supp}(v) \subset \overline{\Omega} \big\} \, .
\end{align}
We equip $X(\O)$ and $\hat{X}(\O)$ with the graph norm $\lVert v \rVert_{X(\Omega)} \coloneqq \sqrt{q(v,v) + \langle v,v\rangle}$ and $\lVert v \rVert_{\hat{X}(\Omega)} \coloneqq \sqrt{\hat{q}(v,v) + \langle v,v\rangle}$, respectively.

By definition, we have the canonical identification $\mathcal Q (q^+)=X(\R^+ \times \R)$. For the identification $\mathcal Q(\hat q^+)=\hat X(\R^+ \times \R)$, we use the extension theorem \cite[Lemma 3.22]{Adams} on $\mathsf{H}^1_0(\R^+)\subset \mathsf{H}^1(\R)$, which implies that we can extend functions in $\mathcal Q(\hat q^+)$ by zero on the negative half-plane.

Let us also define the linear operators $T \colon \mathcal Q(q) \to \Lp^2(\R^2,\C^3), \hat T \colon \mathcal Q(\hat q) \to \Lp^2(\R^2,\C^3)$:
\begin{align}
	Tv= \begin{pmatrix} v \\ \mathrm{i} \partial_1 v \\ (\mathrm{i} \partial_2 + x_1) v \end{pmatrix} \, , \quad  \hat Tv= \begin{pmatrix} v \\ \mathrm{i} \partial_1 v \\ ( x_1-x_2) v \end{pmatrix} \, .
\end{align}
We see that for any $\Omega$, $\lVert v \rVert_{X(\Omega)}=\lVert Tu \rVert_{\Lp^2(\Omega,\C^3)}$ and $\lVert v \rVert_{\hat X(\Omega)}=\lVert \hat Tv \rVert_{\Lp^2(\Omega,\C^3)}$.

\begin{lemma} \label{X(.) Sobolev lem}
	Let $I,J$ be bounded and open intervals. Then, the following pairs of spaces have equivalent norms:
	\begin{align*}
		X(I\times J ) \simeq \mathsf{H}^1_0(I \times J)\, , \quad \hat X (I \times J) \simeq \mathsf{H}^1_0(I) \otimes \Lp^2(J) \, .
	\end{align*}
\end{lemma}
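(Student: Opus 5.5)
The plan is to use that on a bounded rectangle the magnetic terms are bounded perturbations of the free ones. This makes the graph norms of $X(I\times J)$ and $\hat X(I\times J)$ equivalent to the corresponding Sobolev-type norms. What remains is to check that the underlying sets coincide, i.e.\ that the support condition $\operatorname{supp}(v)\subset\overline{I\times J}$ encodes exactly the Dirichlet condition in $\mathsf{H}^1_0$.

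\textbf{Norm comparison.} Choose $M$ with $|x_1|,|x_2|\le M$ on $\overline{I\times J}$. For $v\in X(I\times J)$ all integrals in $\|Tv\|_{\Lp^2(\R^2,\C^3)}$ live on $I\times J$. The triangle inequality then gives
\begin{align*}
\|\p_2 v\|_2\le \|(\mathrm{i}\p_2+x_1)v\|_2+M\|v\|_2\,,\qquad \|(\mathrm{i}\p_2+x_1)v\|_2\le \|\p_2 v\|_2+M\|v\|_2\,.
\end{align*}
Hence $\|v\|_{X(I\times J)}^2$ and $\|v\|_{\mathsf{H}^1(\R^2)}^2=\|v\|^2+\|\p_1v\|^2+\|\p_2v\|^2$ are bounded by constants (depending only on $M$) times each other. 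The same computation for $\hat T$ is even simpler, since $(x_1-x_2)$ is a bounded multiplier with $|x_1-x_2|\le 2M$. So $\|v\|_{\hat X(I\times J)}^2$ is equivalent to $\|v\|_2^2+\|\p_1 v\|_2^2$, which is precisely the norm of $\mathsf{H}^1(I)\otimes\Lp^2(J)\cong \Lp^2(J;\mathsf{H}^1(I))$.

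\textbf{Identification of the sets, $X$-case.} Conversely, $u\in\mathsf{H}^1_0(I\times J)$ extended by zero lies in $\mathsf{H}^1(\R^2)$: approximate by $\mathsf{C}^\infty_{\text c}(I\times J)$ functions, which stay Cauchy in $\mathsf{H}^1(\R^2)$. By the norm comparison, $q(u,u)<\infty$, so $u\in X(I\times J)$.

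Now take $v\in X(I\times J)$. By the norm comparison $v\in\mathsf{H}^1(\R^2)$, and $v$ vanishes a.e.\ outside the closed rectangle. I would show $v\in\mathsf{H}^1_0(I\times J)$ by a dilation-and-mollification argument. Let $c$ be the center of the rectangle and set $v_t(x)\coloneqq v\big(c+(x-c)/t\big)$ for $t<1$. Then $v_t$ has compact support inside the open rectangle $I\times J$, and $v_t\to v$ in $\mathsf{H}^1(\R^2)$ as $t\uparrow1$ by continuity of dilations in $\mathsf{H}^1$. Mollifying $v_t$ with a kernel of radius smaller than the distance of $\operatorname{supp} v_t$ to $\p(I\times J)$ yields $\mathsf{C}^\infty_{\text c}(I\times J)$ approximants. (Alternatively, this is the standard fact, e.g.\ in \cite{Adams}, that for Lipschitz, in particular star-shaped, domains an $\mathsf{H}^1(\R^2)$ function vanishing outside $\overline{\Omega}$ lies in $\mathsf{H}^1_0(\Omega)$.)

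\textbf{Identification of the sets, $\hat X$-case.} Here I argue fibrewise. For $v\in\hat X(I\times J)$, Fubini gives $v(\cdot,x_2)\in\mathsf{H}^1(\R)$ for a.e.\ $x_2$, with support in $\overline I$. Its continuous representative (Sobolev embedding, as used after \eqref{H+ form domain}) vanishes at the endpoints of $I$, so $v(\cdot,x_2)\in\mathsf{H}^1_0(I)$. Moreover $v(\cdot,x_2)=0$ for a.e.\ $x_2\notin J$. Measurability and square-integrability of $x_2\mapsto v(\cdot,x_2)$ in $\mathsf{H}^1_0(I)$ follow from $v,\p_1v\in\Lp^2$. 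This places $v$ in $\Lp^2(J;\mathsf{H}^1_0(I))\cong \mathsf{H}^1_0(I)\otimes\Lp^2(J)$. The reverse inclusion is by extension by zero in $x_1$, using \cite[Lemma 3.22]{Adams} fibrewise, and then in $x_2$.

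\textbf{Main obstacle.} I expect the only genuinely non-formal step to be the 2D inclusion $X(I\times J)\subset\mathsf{H}^1_0(I\times J)$, i.e.\ the dilation argument. In particular one must verify that $v_t$ really has support strictly inside $I\times J$ and that the dilations converge in $\mathsf{H}^1$. Both are routine for a rectangle, which is star-shaped with respect to its center.
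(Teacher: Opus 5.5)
Your proposal is correct and takes the same route as the paper: on the bounded rectangle the multipliers $x_1$ and $x_1-x_2$ are bounded, so the graph norms of $X(I\times J)$ and $\hat X(I\times J)$ are equivalent to the $\mathsf{H}^1$-norm and the $\mathsf{H}^1(I)\otimes\Lp^2(J)$-norm, and $\mathsf{H}^1_0$ is identified with functions supported in $\overline{I\times J}$. The only difference is that the paper takes this identification from \cite[Lemma 3.22]{Adams}, whereas you prove it directly (dilation and mollification for the rectangle, and vanishing of the continuous fibrewise representative at the endpoints of $I$ for the tensor-product space), and that argument is valid.
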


We use the symbol $\simeq$ between two normed space $X$ and $Y$ to indicate that there is a continuous bijective map between $X$ and $Y$. We write $X=Y$ if this map is unitary.

\begin{proof}
	Let us first note that by a Sobolev space extension theorem for bounded Lipschitz domains~\cite[Lemma 3.22]{Adams} such as $I\times J$ we have 
	\begin{align*}
		\mathsf{H}^1_0(I \times J) &= \big\{v \in \mathsf{H}^1(\R^2) : \operatorname{supp}(v) \subset \overline{I \times J} \big\} \, , \\
		\quad \mathsf{H}^1_0(I) \otimes \Lp^2(J)  &= \big\{v \in \mathsf{H}^1(\R) \otimes \Lp^2(\R)  : \operatorname{supp}(v) \subset \overline{I \times J}\big \} \, .
	\end{align*}
	
	Thus, all spaces above imbed into $\Lp^2(I \times J)= \big\{ v \in \Lp^2(\R^2) : \operatorname{supp}(v) \subset \overline{I \times J} \big\}$. Let $v \in \Lp^2(I \times J) \subset \Lp^2(\R^2)$. Since $I$ and $J$ are bounded, we see that $x_1v , x_2v \in \Lp^2(I\times J)$ with norms bounded by $C\lVert v \rVert_{\Lp^2(\R^2)}$. Thus, $(x_1-x_2)v \in \Lp^2(I \times J)$, which shows that $\hat X (I \times J) \simeq \mathsf{H}^1_0(I) \otimes \Lp^2(J) $ and $(\mathrm{i} \partial_2+x_1)v \in \Lp^2(I\times J)$ if and only if $\mathrm{i} \partial_2v \in \Lp^2(I\times J) $, which yields $X(I\times J ) \simeq \mathsf{H}^1_0(I \times J)$.
\end{proof}

Let $\Phi \in \mathsf{C}^\infty(\R)$ with $\Phi(t)=1$ for $t<-2$, $\Phi(t)=0$ for $t>0$ and $\Phi(\R)= [0,1],\lVert \Phi'\rVert_{\Lp^\infty(\R)}\le 1$. For $R>2$, define $\Phi_R \in \mathsf{C}^\infty_{\text{c}}(\R^2)$ by 
\begin{align}
	\Phi_R(x)\coloneqq \Phi (\lvert x \rvert - R) \, .
\end{align}

\begin{lemma} \label{X(.) finite approx lem}
	Let $\Omega \subset \R^2$ be open. Then, for any $R>2$, the multiplication operator $X(\Omega) \to X(\Omega \cap B_R(0)), v \mapsto \Phi_R v$ is continuous with a norm bounded by $2$ and, for any $f \in X(\O)$, as $R \to \infty$, $\Phi_R v $ converges to $v$ in $X(\Omega)$.  The same holds when we replace all $X(\cdot)$'s by $\hat X(\cdot)$ or $\Lp^2(\cdot)$.
\end{lemma}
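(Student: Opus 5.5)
The plan is to prove the three assertions (support, norm bound, convergence) by direct computation with the explicit cutoff $\Phi_R$, using a Leibniz rule for $T$ and $\hat T$. Note first that $\Phi_R$ is smooth, bounded by $1$, and $\operatorname{supp}(\Phi_R)\subset \overline{B_R(0)}$, since $\Phi(|x|-R)=0$ for $|x|>R$. Moreover, $|\nabla \Phi_R(x)| = |\Phi'(|x|-R)|\le 1$; the function is smooth near the origin because $\Phi_R\equiv 1$ on $B_{R-2}(0)$ and $R>2$. Hence $\operatorname{supp}(\Phi_R v)\subset \overline{\Omega}\cap \overline{B_R(0)}$, which is the support condition for the target space. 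Strictly, this gives $\subset\overline{\Omega}\cap\overline{B_R}$ rather than $\overline{\Omega\cap B_R}$; I would either note that the statement is used only up to this (harmless) identification, or slightly shrink the radius.

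For the norm bound I compute $T(\Phi_R v)$ componentwise: the first component is $\Phi_R v$, the second is $\mathrm{i}\partial_1(\Phi_R v)=\Phi_R\,\mathrm{i}\partial_1 v + \mathrm{i}(\partial_1\Phi_R)v$, and the third is $(\mathrm{i}\partial_2+x_1)(\Phi_R v)=\Phi_R(\mathrm{i}\partial_2+x_1)v+\mathrm{i}(\partial_2\Phi_R)v$. These product rules hold distributionally since $\Phi_R\in\mathsf{C}^\infty_{\text c}$ and $v\in\mathsf{H}^1_{\text{loc}}$. Thus $T(\Phi_R v)=\Phi_R Tv + E_R v$ with $|E_R v|\le |\nabla\Phi_R|\,|v|\le |v|$ pointwise, and
\[
\|\Phi_R v\|_{X}\le \|Tv\|_2+\|v\|_2\le 2\|v\|_X .
\]
For $\hat T$ only the $\partial_1$ component produces an extra term, since multiplication by $(x_1-x_2)$ commutes with $\Phi_R$. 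The same bound follows, and the $\Lp^2$ case is trivial with bound $1$.

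For convergence I write
\[
T(\Phi_R v)-Tv = (\Phi_R-1)Tv + E_R v .
\]
The first term tends to $0$ in $\Lp^2$ by dominated convergence, because $\Phi_R\to1$ pointwise and $|\Phi_R-1|\le1$. The second term is supported in the annulus $R-2\le|x|\le R$ and is bounded by $|v|1_{\{|x|\ge R-2\}}$, whose $\Lp^2$ norm tends to $0$ as $R\to\infty$ since $v\in\Lp^2$. The $\hat T$ and $\Lp^2$ cases are identical. The only point needing care is justifying the Leibniz rule for $\hat T$, where $v$ lies only in $\mathsf{H}^1\otimes\Lp^2$, i.e.\ it has only a $\partial_1$ derivative. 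There I would apply the rule fiberwise in $x_1$ for almost every $x_2$ (Fubini), or verify it on the dense algebraic tensor product and pass to the limit. I expect this to be the only mild obstacle.
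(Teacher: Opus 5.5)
Your proposal is correct and follows essentially the same route as the paper: the Leibniz rule for $T$ and $\hat T$ together with $|\nabla\Phi_R|\le 1$ gives the pointwise bound $|\tilde T(\Phi_R v)|\le|\tilde T v|+|v|\le 2|\tilde T v|$, and convergence follows because $\tilde T\big(v(1-\Phi_R)\big)$ vanishes on $B_{R-2}(0)$ and is dominated by $|\tilde T v|+|v|$ outside it (the paper uses this tail bound directly, where you use dominated convergence for the $(1-\Phi_R)\tilde T v$ piece). Your two flagged worries are harmless: since $\Phi(0)=0$, $\Phi_R$ vanishes on $\{|x|\ge R\}$, so $\Phi_R v$ vanishes a.e.\ outside $\overline{\Omega}\cap B_R(0)\subset\overline{\Omega\cap B_R(0)}$ (because $B_R(0)$ is open); and the product rule $\partial_1(\Phi_R v)=\Phi_R\,\partial_1 v+(\partial_1\Phi_R)v$ holds for any distribution $v$ and smooth $\Phi_R$, so no fiberwise or density argument is needed for $\hat T$.
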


\begin{proof}
	For the $\Lp^2(\cdot)$ case, the multiplication operator is clearly bounded by $1$ and the convergence is also trivial. Now, let $\tilde X (\cdot)$ mean either $X(\cdot)$ or $\hat X(\cdot)$, chosen consistently. Let $v \in \tilde X(\Omega)$.
	
	For any function $f \in \mathsf{C}^1(\R^2)$, we observe
	\begin{align}\label{Leibniz T}
		T(vf)=T(v)f + v \begin{pmatrix} 0 \\ \mathrm{i}\partial_1 f \\ \mathrm{i} \partial_2 f \end{pmatrix} \, ,\quad
		\hat T(vf) = \hat T(v)f + v\begin{pmatrix} 0 \\ \mathrm{i}\partial_1 f \\ 0 \end{pmatrix} \, .
	\end{align}
	Since $\Phi_R$ and its differential $D\Phi_R$ are bounded and $\operatorname{supp}(v\Phi_R) \subset \operatorname{supp}(v) \cap \operatorname{supp}(\Phi_R)\subset \Omega \cap B_R(0)$, we can conclude $v\Phi_R \in \tilde X \big( \Omega \cap B_R(0)\big)$. Using $ \lVert  D \Phi_R \rVert_{\Lp^\infty(\R^2)} \le 1$, we get 
	\begin{align*}|T(v\Phi_R)| &= \big|T(v)\Phi_R + v (0,\mathrm{i}\p_1\Phi_R,\mathrm{i}\p_2\Phi_R)^T\big| 
	\le \big|T(v)\Phi_R\big| + |v| \big|(0,\mathrm{i}\p_1\Phi_R,\mathrm{i}\p_2\Phi_R)^T\big|
	\\
	&\le \big|T(v)\big| + |v| \le 2\big|T(v)\big|\,.
	\end{align*}
	Similarly, $|\hat{T}(v\Phi_R)|\le 2\big|\hat{T}(v)\big|$.  Let $\tilde T \in \{T, \hat T\}$ be associated to $\tilde X(\cdot)$. Integrating this over $\Omega$ on both sides yields 
	\begin{align*}
		\lVert v\Phi_R \rVert^2_{\tilde X(\O \cap B_R(0))} = \int_\O \dd x\, |\tilde{T}(v\Phi_R)|^2(x) \le 4\int_\O \dd x\, |\tilde{T}(v)|^2(x)
		=4\lVert v\Phi_R \rVert^2_{\tilde X(\O)} = 4 \lVert v \rVert^2_{\tilde X (\O)} \, .
	\end{align*}
	This was the first claim of this lemma.

	 Since $\lVert 1- \Phi_R \rVert_{\Lp^\infty(\R^2)} \le 1$, $ \lVert  D \Phi_R \rVert_{\Lp^\infty(\R^2)} \le 1$, we use \eqref{Leibniz T} with $f=1-\Phi_R$, and the triangle inequality in $\C^3$ to conclude 
	\begin{align}
		\big\lvert \tilde T \big(v (1-\Phi_R) \big)(x) \big\rvert \le 
		\begin{cases}
			0 &\text{ if } \lvert x \rvert \le R-2\\
			\lvert \tilde T(v)(x) \rvert + \lvert v(x) \rvert & \text{ if } \lvert x \rvert > R-2
		\end{cases}\,.
	\end{align}
	Since $\tilde T(v) \in \Lp^2(\R^2,\C^3)$ and $\lvert v \rvert \in \Lp^2(\R^2)$, as $R \to \infty$, we can show that
	\begin{align*}
		\lVert v- \Phi_R v \rVert_{\tilde X(\Omega)} = \big\lVert \tilde T \big(v (1-\Phi_R)\big) \big\rVert_{\Lp^2(\R^2,\C^3)} \le \lVert \tilde T v \rVert_{\Lp^2(B_{R-2}(0)^\complement,\C^3)} + \lVert v \rVert_{\Lp^2(B_{R-2}(0)^\complement)}  \to 0\,.
	\end{align*}
	This completes the proof.
\end{proof}

\begin{lemma} \label{C density in mixed Sobolev lem}
	Let $I,J$ be bounded, open intervals, and let $Y= \mathsf{H}_0^1(I\times J)$ or $Y=\mathsf{H}_0^1(I)\otimes \Lp^2(J)$. Then, the linear span of the set $\left\{f \in \mathcal C, \operatorname{supp} f \subset \overline{I \times J} \right\}$ is dense in $Y$ and dense in $\Lp^2(I \times J)$.
\end{lemma}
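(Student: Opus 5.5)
The plan is to reduce both density claims to statements about tensor products of one-dimensional spaces, and then use that the one-dimensional factors admit dense subsets of the required type. Since $I$ and $J$ are bounded, a function in $\mathcal C$ with support in $\overline{I\times J}$ is of the form $f_1(x_1)f_2(x_2)$ with $f_1\in\mathsf C^\infty_{\text c}(I)$ (after possibly shrinking supports). The second factor $f_2$ is a Schwartz function whose restriction to $J$ we may take compactly supported in $J$; for instance, choose $f_2\in\mathsf C^\infty_{\text c}(J)\subset\mathcal S(\R)$. So it suffices to show that the algebraic tensor product $\mathsf C^\infty_{\text c}(I)\odot\mathsf C^\infty_{\text c}(J)$ is dense in each of the three spaces.

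For $\Lp^2(I\times J)=\Lp^2(I)\otimes\Lp^2(J)$, the claim follows from two standard facts. First, $\mathsf C^\infty_{\text c}(I)$ is dense in $\Lp^2(I)$, and likewise for $J$. Second, the algebraic tensor product of dense subspaces is dense in the Hilbert tensor product: elementary tensors $u\otimes w$ span a dense set, and $\|u\otimes w-u_n\otimes w_n\|\le\|u-u_n\|\|w\|+\|u_n\|\|w-w_n\|$.

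For $Y=\mathsf H^1_0(I)\otimes\Lp^2(J)$, we argue in the same way. The Hilbert tensor norm is exactly the norm $\|\partial_1v\|^2+\|v\|^2$, which is equivalent to the $\hat X$ norm by \autoref{X(.) Sobolev lem}. We use that $\mathsf C^\infty_{\text c}(I)$ is dense in $\mathsf H^1_0(I)$ (this is the definition of $\mathsf H^1_0$) and that $\mathsf C^\infty_{\text c}(J)$ is dense in $\Lp^2(J)$. The same product estimate then gives density.

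The main obstacle is $Y=\mathsf H^1_0(I\times J)$, since this space is not a Hilbert tensor product of one-dimensional spaces. The approach is as follows. By definition, $\mathsf C^\infty_{\text c}(I\times J)$ is dense in $\mathsf H^1_0(I\times J)$, so it suffices to approximate a fixed $g\in\mathsf C^\infty_{\text c}(I\times J)$ in the $\mathsf H^1$ norm by finite sums of products. We choose a slightly larger rectangle $I'\times J'$ compactly contained in $I\times J$ that still contains $\operatorname{supp} g$, and expand $g$ in the orthonormal basis of products of sines, $\sin(n\pi(x_1-a)/|I'|)\sin(m\pi(x_2-c)/|J'|)$. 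These are the Dirichlet eigenfunctions of the Laplacian on $I'\times J'$. Since $g$ and all its derivatives vanish on $\partial(I'\times J')$, integration by parts shows that the coefficients decay faster than any power of $n$ and $m$. Hence the partial sums converge to $g$ in $\mathsf H^1$ (indeed in every $\mathsf C^k$ norm). Each partial sum is a product of sine functions, which are smooth but not compactly supported in $I'$. To fix this, we multiply by a product cutoff $\chi_1(x_1)\chi_2(x_2)$ with $\chi_i\in\mathsf C^\infty_{\text c}$, where $\chi_1\chi_2=1$ on $\operatorname{supp} g$ and the support of $\chi_1\chi_2$ lies inside $I\times J$. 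Multiplication by this fixed smooth function is bounded on $\mathsf H^1$, so the cutoff partial sums converge to $\chi_1\chi_2 g=g$. Each cutoff partial sum is a finite sum of products $f_1(x_1)f_2(x_2)$ with $f_1\in\mathsf C^\infty_{\text c}(I)$ and $f_2\in\mathsf C^\infty_{\text c}(J)\subset\mathcal S(\R)$, that is, a finite sum of elements of $\mathcal C$ supported in $\overline{I\times J}$. This completes the plan for all three cases.
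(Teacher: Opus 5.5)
Your overall strategy works, but the cutoff step for $Y=\mathsf H^1_0(I\times J)$ is misstated. The sine partial sums built on $I'\times J'$ converge to $g$ only on $I'\times J'$. As functions on $\R^2$ they converge to the odd extension $G$ of $g|_{I'\times J'}$, which is $2|I'|$-periodic in $x_1$ and $2|J'|$-periodic in $x_2$. This $G$ is nonzero on the mirror images of $\operatorname{supp} g$ across $\partial(I'\times J')$, and these mirror images can lie inside $I\times J$. For example, take $I=(0,3)$, $I'=(1,2)$ and $g$ supported in $\{1.1\le x_1\le 1.2\}$; its mirror image is in $\{0.8\le x_1\le 0.9\}$. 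A cutoff that is merely supported in $I\times J$ and equal to $1$ on $\operatorname{supp} g$ can then produce the limit $\chi_1\chi_2 G\neq g$. The fix is immediate. Either take $\chi_1\in\mathsf C^\infty_{\text c}(I')$ and $\chi_2\in\mathsf C^\infty_{\text c}(J')$ equal to $1$ on the projections of $\operatorname{supp} g$. Or drop the auxiliary rectangle and expand on $I\times J$ itself: there the series converges to $g$ in $\mathsf H^1(I\times J)$, and any $\chi_1\in\mathsf C^\infty_{\text c}(I)$, $\chi_2\in\mathsf C^\infty_{\text c}(J)$ with $\chi_1\chi_2=1$ on $\operatorname{supp} g$ works.

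With that correction, your proof is valid and takes a different route from the paper's. The paper treats $\Lp^2(I\times J)$ and both choices of $Y$ with a single argument. It uses the integration-by-parts identity $\langle g,f_{n,m}\rangle_Y=\|f_{n,m}\|_Y^2\langle g,f_{n,m}\rangle_{\Lp^2(I\times J)}$, valid for all $g\in Y$, to show that the product sines on $I\times J$ form an orthogonal basis of $Y$. It then approximates each basis element by $f_{n,m}\,\xi_{\varepsilon,I}\xi_{\varepsilon,J}$, with cutoffs collapsing onto $\partial(I\times J)$. The delicate point there is a uniform gradient bound in the boundary layer: it holds because $f_{n,m}$ vanishes linearly at the boundary, so $|f_{n,m}|\cdot 2/\varepsilon$ stays bounded.

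You instead handle $\Lp^2$ and $\mathsf H^1_0(I)\otimes\Lp^2(J)$ abstractly: algebraic tensor products of dense subspaces are dense in the Hilbert tensor product, whose norm is $\|v\|^2+\|\partial_1 v\|^2$. For $\mathsf H^1_0(I\times J)$ you start from the density of $\mathsf C^\infty_{\text c}(I\times J)$, so only functions vanishing near the boundary need approximating. Rapid decay of their sine coefficients plus one fixed cutoff then replaces the paper's $\varepsilon$-boundary-layer estimate. Your route is more elementary near the boundary. The paper's is more uniform across the three spaces and gives the orthogonal basis of $Y$ as a by-product.
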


\begin{proof}
	We recall that $f \in \mathcal C$ means that there are $f_1 \in \mathsf{C}^\infty_{\text{c}}\big(\R\big),f_2 \in \mathcal S(\R)$ with $f(x_1,x_2)=f_1(x_1)f_2(x_2)$.  Thus, if $f_1 \in \mathsf{C}^\infty_{\text{c}}(I)$ and $f_2 \in \mathsf{C}^\infty_{\text{c}}(J)$, then $f(x_1,x_2)\coloneqq f_1(x_1)f_2(x_2)$ is certainly in $\mathcal C$. 
	Thus, it suffices to show that such functions are dense in $Y$.

	Hence, now we assume without loss of generality, that $I=(0,\lvert I \rvert), J=(0,\lvert J \rvert)$. The set 
	\begin{align}
		\Big \{ f_{n,m} : f_{n,m} (x_1,x_2) \coloneqq \frac{\sqrt{2}}{\sqrt{\lvert I \rvert \lvert J \rvert}} \sin\Big(\frac{n\pi x_1}{\lvert I \rvert}\Big) \sin\Big(\frac{m\pi x_2}{\lvert J \rvert}\Big) , n,m \in \Z^+, (x_1,x_2)\in I\times J\Big\}
	\end{align}
	is an orthonormal basis of $\Lp^2(I \times J)$, as these are the eigenfunctions of the Dirichlet Laplace operator on $\Lp^2(I \times J)$. By an integration by parts, we can verify that in either case, for any $g \in Y$, 
	\begin{align}
		\langle g , f_{n,m} \rangle_Y = \lVert f_{n,m} \rVert_Y^2 \,\langle g, f_{n,m} \rangle_{\Lp^2(I\times J)} \, .
	\end{align}
	 For instance, for $Y=\Lp^2(I\times J)$, we have
		\begin{align*}
		\langle g , f_{n,m} \rangle_{\Lp^2(I\times J)} &= \langle g,f_{n,m}\rangle_{\Lp^2(I\times J)} + \sum_{|\a|=1}\langle\p^\a g,\p^\a f_{n,m}\rangle_{\Lp^2(I\times J)}
		\\&=\langle g,f_{n,m}\rangle_{\Lp^2(I\times J)} + \Big(\big(\tfrac{n\pi}{|I|}\big)^2 + \big(\tfrac{m\pi}{|J|}\big)^2\Big) \,\langle g,f_{n,m}\rangle_{\Lp^2(I\times J)}
		\\
		&=\|f_{n,m}\|_{\mathsf{H}_0^1(I\times J)}^2 \,\langle g,f_{n,m}\rangle_{\Lp^2(I\times J)}\,.
		\end{align*}
	
	Thus, the functions $f_{n,m}$ are pairwise orthogonal in $Y$ and any function $g$ orthogonal to each $f_{n,m}$ is $0$, which means that $\big\{ f_{n,m}/\lVert f_{n,m} \rVert_Y : n,m \in \Z^+\big\}$ is an orthonormal basis of $Y$. 
	
	Hence, any function in $Y$ can be arbitrarily approximated by a finite linear combinations of the functions $f_{n,m}$. We are thus left to show that the elements of $\mathcal C^{(+)}$ can be used to approximate $f_{n,m}$. For any $\varepsilon >0$, choose a smooth cut-off function  $\xi_{\varepsilon,I} \in \mathsf{C}^\infty_{\text{c}}(I)$ such that $\xi_{\varepsilon,I}(x)=1$ for any $x$ with (distance) $\operatorname{dist}(x,\partial I)>\varepsilon$, $\xi_{\varepsilon,I}(x) \in [0,1]$ for each $x \in I$ and $\left \lVert \xi_{\varepsilon,I}' \right \rVert_{\Lp^\infty(I)} < \frac 2 \varepsilon$. Choose $\xi_{\varepsilon,J}$ in the same way. We see that 
	\begin{align}
		g_{n,m,\varepsilon}(x_1,x_2) \coloneqq \frac{\sqrt{2}}{\sqrt{\lvert I \rvert \lvert J \rvert}} \, \left(\sin\Big(\frac{n\pi x_1}{\lvert I \rvert}\Big)  \xi_{\varepsilon,I}(x_1)\right) \, \left(\sin\Big(\frac{m\pi x_2}{\lvert J \rvert}\Big) \xi_{\varepsilon,J}(x_2)  \right)\,,\quad x_1\in I,x_2\in J\,,
	\end{align}
	is a product of two smooth, compactly supported functions and thus in $\mathcal C$. We are left to show that $g_{n,m,\varepsilon} \to  f_{n,m}$ with respect to the $Y$-norm as $\varepsilon \to 0$. Clearly, the sequence $g_{n,m,\varepsilon}$ converges pointwise. Since $g_{n,m,\varepsilon}$ is uniformly bounded independent of $\varepsilon$, dominated convergence yields norm convergence in $\Lp^2(I \times J)$. We want to do the same for $\nabla g_{n,m,\varepsilon}$. However, the uniform boundedness is a bit less obvious, which is why we elaborate on that. Let 
	\begin{equation*}
		A_{\varepsilon} \coloneqq \big\{ x \in I \times J : \operatorname{dist}\big(x, \partial (I \times J)\big) < \varepsilon\big\}\,,\quad B_\varepsilon \coloneqq (I \times J) \setminus A_\varepsilon\,.
	\end{equation*}
	Since $f_{n,m}$ is clearly Lipschitz continuous, there is a constant $C$ (depending on $n,m$, but not on $\varepsilon$) such that $\lVert f_{n,m} \rVert_{\Lp^\infty(A_\varepsilon)} < C \varepsilon$. Thus, using the product rule, we can estimate 
	\begin{align*}
		& \left \lVert \nabla g_{n,m,\varepsilon} \right \rVert_{\Lp^\infty ( I \times J) } \\
		\le& \max \left( \lVert f_{n,m} \rVert_{\Lp^\infty(A_\varepsilon)} \frac 2 \varepsilon + \lVert \nabla f_{n,m} \rVert_{\Lp^\infty(A_\varepsilon)} \, , \, \lVert f_{n,m} \rVert_{\Lp^\infty(B_\varepsilon)} \cdot 0 + \lVert \nabla f_{n,m} \rVert_{\Lp^\infty(B_\varepsilon)} \cdot 1 \right) \le C\,.
	\end{align*}
	Thus, dominated convergence yields that $\nabla g_{n,m,\varepsilon} \to \nabla f_{n,m}$ in the $\Lp^2(I \times J)$-norm as $\varepsilon\to0$. As we had already shown  convergence in $\Lp^2(I \times J)$, the sequence also converges in $Y$.	  
\end{proof}

\begin{lemma}\label{Q are Hilbert spaces}
	Let $\O \subset \R^2$ be open. Then $X(\O), \hat X (\O)$ are Hilbert spaces.
\end{lemma}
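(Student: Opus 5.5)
We need to show completeness of $X(\O)$ and $\hat X(\O)$ under the graph norms $\|v\|_{X(\O)}=\|Tv\|_{\Lp^2(\R^2,\C^3)}$ and $\|v\|_{\hat X(\O)}=\|\hat Tv\|_{\Lp^2(\R^2,\C^3)}$. The norm comes from the scalar product $q+\langle\cdot,\cdot\rangle$ (respectively $\hat q+\langle\cdot,\cdot\rangle$), so the parallelogram law holds and only completeness is at stake. The plan is to realize $X(\O)$ as (the preimage of) a closed subspace of $\Lp^2(\R^2,\C^3)$ via the isometry $T$. Equivalently, we show that $T$, viewed on all of $\Lp^2$ with the distributional derivatives, is a closed operator, and that the support condition is closed under $\Lp^2$-limits.

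Let $(v_n)$ be Cauchy in $X(\O)$. Then $Tv_n$ is Cauchy in $\Lp^2(\R^2,\C^3)$, so $v_n\to v$, $\mathrm{i}\p_1 v_n\to g_1$ and $(\mathrm{i}\p_2+x_1)v_n\to g_2$ in $\Lp^2$. We identify the limits distributionally. For $\phi\in\mathsf{C}^\infty_{\text c}(\R^2)$ we have $\int\overline{\mathrm{i}\p_1 v_n}\,\phi=\int \overline{v_n}\,\mathrm{i}\p_1\phi\to\int\overline v\,\mathrm{i}\p_1\phi$, hence $g_1=\mathrm{i}\p_1 v$ as distributions. Since $x_1\phi$ is again a test function, we similarly get $(\mathrm{i}\p_2+x_1)v_n\to(\mathrm{i}\p_2+x_1)v$ distributionally, so $g_2=(\mathrm{i}\p_2+x_1)v$. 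For $\hat T$ the third component $(x_1-x_2)v_n$ is a multiplication; passing to an a.e.\ convergent subsequence gives $g_2=(x_1-x_2)v$ a.e.

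Next we check that $v\in\mathcal Q(q)$, i.e.\ $v\in\mathsf{H}^1_{\text{loc}}$. On any bounded set $x_1v$ lies in $\Lp^2$, so $\p_2 v\in\Lp^2_{\text{loc}}$; together with $\p_1v\in\Lp^2$ this gives $v\in\mathsf{H}^1_{\text{loc}}\cap\Lp^2$ and $q(v,v)<\infty$. In the hatted case we need $v\in\mathsf{H}^1(\R)\otimes\Lp^2(\R)$, which holds because $\p_1 v\in\Lp^2$ (the tensor space is exactly $\{v\in\Lp^2:\p_1v\in\Lp^2\}$ by Fubini), and $\hat q(v,v)<\infty$ follows from $g_2\in\Lp^2$. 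The support condition survives the limit: if $\operatorname{supp}v_n\subset\overline\O$ then $v_n=0$ a.e.\ on the open set $\R^2\setminus\overline\O$, so the $\Lp^2$-limit $v$ also vanishes a.e.\ there, giving $\operatorname{supp}v\subset\overline\O$. Finally $\|v_n-v\|_{X(\O)}=\|Tv_n-Tv\|_2\to0$.

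The main (mild) obstacle is the identification of the magnetic component with a distributional derivative, since $x_1$ is unbounded and $x_1v$ need not be in $\Lp^2$ globally. We handle this by testing only against compactly supported $\phi$, so that $x_1\phi$ remains a test function; no global control of $x_1v$ is needed.
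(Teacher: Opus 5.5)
Your argument is correct, but it takes a different route from the paper's.

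\textbf{The paper's route.} The paper does not identify the limits distributionally on all of $\R^2$; it localizes instead.
\begin{enumerate}
\item It multiplies the Cauchy sequence by the cutoffs $\Phi_R$ and uses \autoref{X(.) finite approx lem} to see that $v_k\Phi_R$ is Cauchy in $\tilde X((-R,R)^2)$.
\item It invokes \autoref{X(.) Sobolev lem} to identify this space, up to equivalent norms, with $\mathsf H^1_0((-R,R)^2)$ or $\mathsf H^1_0((-R,R))\otimes\Lp^2((-R,R))$, whose completeness is taken as known.
\item It identifies the local limits with $v^*\Phi_R$, which gives $\tilde T v^*\in\Lp^2_{\text{loc}}$ and the global bound $\|\tilde T v^*\|\le 2\lim_k\|v_k\|$.
\item A separate $\varepsilon$-argument, splitting with $\Phi_R$ and $1-\Phi_R$, then yields convergence in the global graph norm.
\end{enumerate}

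\textbf{Your route.} You replace all of this by showing that the graph of $T$, taken with distributional derivatives, is closed. Testing only against $\phi\in\mathsf C^\infty_{\text c}(\R^2)$ makes the unbounded coefficient $x_1$ harmless. The relation $\p_2 v=-\mathrm i(g_2-x_1v)\in\Lp^2_{\text{loc}}$ gives $v\in\mathsf H^1_{\text{loc}}$. Convergence in the graph norm is then automatic, since $Tv=(v,g_1,g_2)$.

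\textbf{Comparison.} Your proof is shorter and self-contained: it needs no extension theorem, no norm-equivalence lemma and no cutoff bookkeeping. The paper's proof reuses lemmas it needs anyway for the density result, and treats completeness of standard Sobolev spaces as a black box. That black box internally contains exactly the distributional identification you carry out explicitly.

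In the hatted case, both proofs rest on the same identification $\mathsf H^1(\R)\otimes\Lp^2(\R)=\{v\in\Lp^2(\R^2):\p_1 v\in\Lp^2(\R^2)\}$. Strictly, this is a bit more than Fubini: you need that almost every slice has weak derivative equal to the corresponding slice of $\p_1 v$, or you can use a partial Fourier transform in $x_1$. The paper invokes this step in the same way, so your proof is not at a disadvantage there.
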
	
\begin{proof}
	Let $\tilde X(\cdot)$ be either $X(\cdot)$ or $\hat X(\cdot)$ and let $\tilde T$ (see \eqref{Leibniz T}) be chosen appropriately. Since the graph norm is clearly associated to a positive form, we only have to show that it is complete with respect to that norm. Let $v_k$ be a Cauchy sequence in $\tilde X(\O)$. Since the graph norm is bounded below by the $\Lp^2(\Omega)$ norm, $v_k$ is  a Cauchy sequence in $\Lp^2(\O)$ and thus converges to some $v^* \in \Lp^2(\O)$. In particular $\operatorname{supp}(v^*)\subset \overline{\O}$. 
	
	Due to \autoref{X(.) finite approx lem}, for any $R>2$, we get the Cauchy sequence $v_k \Phi_R \in \tilde X(\O \cap B_R(0)) \subset \tilde X((-R,R)^2) \simeq Y$, where $Y$ is either the Hilbert space $\mathsf H^1_0((-R,R)^2)$ or $\mathsf{H}^1_0((-R,R)) \otimes \Lp^2((-R,R))$ as given in \autoref{C density in mixed Sobolev lem}. Thus, we find a limit function $v^*_R \in \tilde X((-R,R)^2)$ such that $v_k \Phi_R \to v^*_R$ as $k\to\infty$. On the other hand, we observe $v_k \Phi_R \to v^* \Phi_R$ in $\Lp^2(\O)$  as $k\to\infty$. Hence, as the limit is unique, we get that $v^*_R= v^* \Phi_R$ almost everywhere. In particular, for almost every $x \in \R^2$ with $\lvert x \rvert \le R-2$, we can conclude $v^*(x)=v^*_R(x)$. Since $v^*_R \in \tilde X ((-R,R)^2)$, this tells us that $\tilde Tv^* \in \Lp^2_{\text{loc}}(\R^2)$. We estimate
	\begin{align*}
		\left \lVert \tilde T (v^*) \right \rVert_{\Lp^2(\R^2)} 
		 &= \limsup_{R \to \infty} \left \lVert \tilde T (v^*) \right \rVert_{\Lp^2(B_{R-2}(0))} \\
		 &\le \limsup_{R \to \infty}  \left \lVert \tilde T (v^* \Phi_{R}) \right \rVert_{\Lp^2(\R^2)} \\
		 & =\limsup_{R \to \infty} \left \lVert v^*_{R} \right \rVert_{\tilde X(\R^2)} \\
		 & =\limsup_{R \to \infty} \lim_{k \to \infty} \left \lVert f_k \Phi_{R} \right\rVert_{\tilde X(\R^2)} \\
		&\le 2 \lim_{k \to \infty} \left \lVert f_k \right\rVert_{\tilde X(\R^2)} < \infty\,.
	\end{align*}
	Now, we split the cases briefly. For the $X(\O)$ case, we note that since $T(v^*) \in \Lp^2(\R^2,\C^3)$, we can conclude that $v^* \in (\mathsf H^1_{\text{loc}}\cap \Lp^2)(\R^2)$ and thus $v^* \in X(\O)$, while in the other case, $\hat T(v^*) \in \Lp^2(\R^2;\C^3)$ implies that $\p_1 v^* \in \Lp^2(\R^2)$ and thus $v^* \in \mathsf H^1 (\R) \otimes \Lp^2(\R)$ and we get $v^* \in \hat X(\O)$ in that case, too.
	
	We are left to show that $v_k \to v$ in $X(\R^2)$. Let $\varepsilon>0$. As $v_k$ is a Cauchy sequence, there is a $k_0 \in \N$, such that for all $k, \ell \ge k_0$, we have $\lVert v_k - v_\ell \rVert_{X(\Omega)}<\varepsilon$. Furthermore, we choose $R$ large enough such that $\lVert v_{k_0} (1-\Phi_R) \rVert_{X(\Omega)}<\varepsilon $ and $\lVert v^* (1-\Phi_R) \rVert_{X(\Omega)}<\varepsilon $. As all norms in the following calculation are in $\tilde X (\O)$, we drop the index. Let $k \ge k_0$. We begin with the realization that $\lVert v_k \Phi_R- v^* \Phi_R \rVert = \lim_{\ell \to \infty} \lVert v_k \Phi_R- v_\ell \Phi_R \rVert \le 2 \varepsilon$. Now, we can estimate
	\begin{align*}
		\lVert v_k -v^* \rVert \le \lVert (v_k-v^*) \Phi_R \rVert + \lVert v^* (1-\Phi_R) \rVert + \lVert (v_k - v_{k_0})(1-\Phi_R) \rVert + \lVert v_{k_0} (1-\Phi_R) \rVert \le 6 \varepsilon\,.
	\end{align*}
	
\end{proof}

Finally, we conclude with the initial goal of this section. 

\begin{lemma}
	For any of the spaces $\mathcal Q \big(q^{(+)}\big)$, $\mathcal Q\big(\hat q^{(+)}\big)$ and $\Lp^2(\R^{(+)}\times \R)$, the linear spans of the sets $\mathcal C^{(+)}$ are dense subsets. 
\end{lemma}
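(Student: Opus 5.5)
The proof is a localization argument. The preceding lemmas supply the three ingredients: cutting off to bounded sets (\autoref{X(.) finite approx lem}), identifying the form domain on a box with a standard Sobolev space (\autoref{X(.) Sobolev lem}), and density of products inside such a Sobolev space (\autoref{C density in mixed Sobolev lem}).

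Let $\tilde X$ denote any of the spaces $X(\R^{(+)}\times\R)=\mathcal Q(q^{(+)})$, $\hat X(\R^{(+)}\times\R)=\mathcal Q(\hat q^{(+)})$ or $\Lp^2(\R^{(+)}\times\R)$. For $\mathcal Q(q^+)$ we use the identification with $X(\R^+\times\R)$, and for $\mathcal Q(\hat q^+)$ the one with $\hat X(\R^+\times\R)$; both were noted above.

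Fix $v\in\tilde X$ and $\varepsilon>0$. By \autoref{X(.) finite approx lem}, choose $R>2$ with $\|v-\Phi_R v\|_{\tilde X}<\varepsilon$. Then $\Phi_R v$ has support in $\overline{\O}\cap \overline{B_R(0)}$, where $\O=\R^{(+)}\times\R$. Hence $\Phi_R v$ lies in $\tilde X(I\times J)$ with $J=(-R,R)$, and $I=(-R,R)$ in the full-plane case or $I=(0,R)$ in the half-plane case. Here we regard an element of the space over $\O$ whose support lies in the closed box as an element of the space over the open box, which holds by definition via supports.

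By \autoref{X(.) Sobolev lem}, $\tilde X(I\times J)$ is isomorphic, with equivalent norms, to $Y=\mathsf H^1_0(I\times J)$ or $Y=\mathsf H^1_0(I)\otimes\Lp^2(J)$ (or to $\Lp^2(I\times J)$ in the $\Lp^2$ case). By \autoref{C density in mixed Sobolev lem}, finite linear combinations $g$ of products $f_1(x_1)f_2(x_2)$ with $f_1\in\mathsf C^\infty_{\text c}(I)$ and $f_2\in\mathsf C^\infty_{\text c}(J)$ approximate $\Phi_R v$ in $Y$. By the norm equivalence, $\|\Phi_R v-g\|_{\tilde X(I\times J)}<\varepsilon$.

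Two checks remain.
\begin{enumerate}
\item[(i)] The norm of $\tilde X(I\times J)$ coincides with that of $\tilde X$ on functions supported in the box, since both are $\|\tilde T\cdot\|_{\Lp^2}$.
\item[(ii)] The functions $f_1f_2$ belong to $\mathcal C^{(+)}$. Indeed $f_1\in\mathsf C^\infty_{\text c}(I)\subset \mathsf C^\infty_{\text c}(\R^{(+)})$, which in the half-plane case uses that $I=(0,R)$ is open so the support stays away from $0$, and $f_2\in\mathsf C^\infty_{\text c}\subset\mathcal S(\R)$.
\end{enumerate}
The triangle inequality then gives $\|v-g\|_{\tilde X}<2\varepsilon$, which proves density.

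The main obstacle is the half-plane case. One must make sure that $\Phi_R v$ genuinely lies in $\mathsf H^1_0((0,R)\times(-R,R))$, respectively in $\mathsf H^1_0((0,R))\otimes\Lp^2$, which encodes the Dirichlet condition at $x_1=0$. This is exactly where the support characterization of $\mathsf H^1_0$ from the extension theorem \cite[Lemma 3.22]{Adams}, quoted in the proof of \autoref{X(.) Sobolev lem}, enters. Since $v$ extended by zero lies in $\mathcal Q(q)$ with support in the closed half-plane, the cut-off function has support in the closed box, and that characterization applies. For $\hat X$ one uses the analogous tensor-product statement. With this verified, the rest is routine.
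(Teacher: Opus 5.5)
Your proof is correct and follows the same route as the paper. You cut off with $\Phi_R$ via the finite-approximation lemma, identify the box space with $\mathsf H^1_0(I\times J)$, $\mathsf H^1_0(I)\otimes\Lp^2(J)$ or $\Lp^2(I\times J)$, and then approximate by the product functions from the mixed-Sobolev density lemma. Your check (ii) is slightly more careful than the paper, which states the inclusion $\{g\in\mathcal C:\operatorname{supp}g\subset\overline{Q_R^{+}}\}\subset\mathcal C^{+}$ that as written would admit $f_1$ whose support touches $0$; using that the approximants actually have $f_1\in\mathsf C^\infty_{\text c}((0,R))$ avoids this.
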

\begin{remark}\label{Q is dense}
	We point out that this also implies that these form domains are dense in $\Lp^2(\R^{(+)}\times \R)$, as they include the dense subsets $\mathcal C^{(+)}$.
\end{remark}
	
\begin{proof}
	As we discussed at the start of this section, we can identify $\mathcal Q \big(q^{(+)}\big) =X(\R^{(+)}\times \R)$ and $\mathcal Q\big(\hat q^{(+)}\big) = \hat X(\R^{(+)}\times \R)$. We call this space $Z(\R^{(+)}\times \R)$. So let $f \in Z(\R^{(+)}\times \R)$ and $\varepsilon>0$. Due to \autoref{X(.) finite approx lem}, there is an $R>2$, such that $\lVert f-f\Phi_R \rVert_{Z(\R^{(+)}\times \R)}\le \varepsilon$. 
	
	Let $Q_R^{(+)}=(-R,R)^2 \cap (\R^{(+)}\times \R)$. We get $f \Phi_R \in Z(Q_R^{(+)})$ and, due to \autoref{X(.) Sobolev lem} and \autoref{C density in mixed Sobolev lem}, we see that $f\Phi_R$ is in the closure of the linear span of $\Big\{g \in \mathcal C : \operatorname{supp}(g)\subset \overline{Q_R^{(+)}}\Big\} \subset \mathcal C^{(+)}$. Hence, $f$ is in the closure of the span of $\mathcal C^{(+)}$.
\end{proof}

\section{Parabolic cylinder functions}\label{App PCF}

For the convenience of the reader we collect in the following section some basic properties of the parabolic cylinder functions $D_\nu$  that are needed in this paper. In our main case of interest we have $\nu\in(0,1)$ and $x \in \R$, but in some statements below $\nu$ and $x$ may be arbitrary real or even complex numbers. We recall our convention that we denote the natural numbers by $\N = \{0,1,2\ldots\}$ and the negative integers by $\Z^- = \{\ldots,-2,-1\}$.

We also remind the reader of some basic properties of the $\Gamma$-function that we need and refer to \cite{Remmert}, where all proofs of the stated formulas on the $\G$-function can be found. First of all, we define it through Euler's identity (1729), see \cite[Euler's theorem]{Remmert},
\[ \Gamma(z) \coloneqq \int_0^\infty \dd t\, \exp(-t) t^{z-1}\]
for $z$ with real part, $\Re(z)>0$, and continue this function uniquely to $\C\setminus(-\N)$ as a meromorphic function with simple poles of residue $(-1)^n/n!$ at $-n,n\in\N$, see \eqref{residue of Gamma}. It satisfies 
\begin{align} \Gamma(z+1) &= z\Gamma(z) \mbox{ for } z\in\C\setminus (-\N)  \mbox{ and } \Gamma(1)=1: \mbox{``Recurrence relation"}\,,
\\ \label{product of Gamma}
\frac{\pi}{\sin(\pi z)} &=\Gamma(z)\cdot \Gamma(1-z) \mbox{ for } z\in\C\setminus\Z: \mbox{``Euler's reflection property" (1749)}\,,
\\
\Gamma(2z)&=\Gamma(z) \pi^{-1/2} 2^{2z-1} \Gamma(z) \Gamma(z+\tfrac{1}{2}) \mbox{ for } z\in \C:\mbox{``Legendre's duplication formula" (1811)}\,.
\end{align}
The function $z\mapsto 1/\Gamma(z)$ is holomorphic on $\C$ and satisfies the Hankel loop integral representation (1863),
\begin{equation}  \label{Gamma loop integral eq}
\frac1{\Gamma(z)} = \frac1{2\pi\mathrm{i}} \int_{-\infty}^{(0+)}\mathrm{d}t\, \exp(t) t^{-z}\,,\quad z\in\C\,.
\end{equation}
Here, the (Hankel) loop contour starts at $-\infty$, circles the origin once in the positive direction, and returns to $-\infty$, see \cite[Figure 2.2]{Remmert}, \cite[Figure 19.1]{AS}, 
or our definition \eqref{gamma 0}. For $-z\not\in \N$, here and in the following, we use the principal branch of the logarithm to define $t^{-z} = \exp(-z\ln(t))$ in the integral.

The parabolic cylinder function $D_\nu$ satisfies the differential equation \eqref{DE for D} and is uniquely determined by the asymptotic behaviour of $D_\nu(x)$ as $x\to+\infty$ and $x\to-\infty$, see \eqref{D large x} and \eqref{D large negative x}.

For $\nu \in \C \setminus \Z^-$ we will use the following loop-integral representation (see \cite[19.5.1]{AS}) 
\begin{align}\label{loop repr of D}
D_\nu(x)&=\frac{\Gamma(\nu+1)}{2\pi\mathrm{i}} \exp(-x^{2}/4)\,\int_{-\infty}^{(0+)}\mathrm{d}t\, \exp\big(xt-\tfrac{1}{2}t^{2}\big) t^{-(\nu+1)}\,,\quad x\in\C\,,
\end{align}
with the above (Hankel) loop contour. On the range $x \in \mathbb C, \nu \in \C \setminus \Z^-$, this defines a holomorphic function, since the integrand is holomorphic in $x \in \C,\nu \in \C ,t \in \C \setminus (-\infty,0]$ and the complex differential with respect to either $\nu$ or $x$ of the integrand remains absolutely integrable. The restriction on $\nu$ in \eqref{loop repr of D} is due to the Gamma function, which is not defined at $-\N$ and $z\mapsto 1/\Gamma(z)$ has simple zeros at $-n, n\in\N$, see \eqref{Gamma loop integral eq}.

For our application, it is also convenient to define the function without the factor $\Gamma(\nu+1)$. Thus, let
\begin{align}\label{loop rep of tilde D}
\tilde D_\nu(x) \coloneqq \frac{1}{2\pi\mathrm{i}} \exp(-x^{2}/4)\,\int_{-\infty}^{(0+)}\mathrm{d}t\, \exp\big(xt-\tfrac{1}{2}t^{2}\big) t^{-(\nu+1)}\,,\quad x\in\C\,.
\end{align}
This expression is well-defined also for $\nu \in \Z^-$ and in that case $\tilde D_\nu(x)=0$ for all $x\in\C$, as the integrand is holomorphic on all of $\mathbb C$ and thus the closed loop-integral vanishes. 

We also want to point out that $\overline {\tilde D_\nu(x)} = \tilde D_\nu(\overline{x})$, as the complex conjugation changes $x$ to $\overline {x}$ and flips the orientation of the contour integral, which gives a factor $-1$, which cancels with the factor $-1$ from $1/(2\pi \mathrm{i})$. In particular, $D_\nu(\R)\subseteq \R$. 

\begin{remark}
Since the above integral definition \eqref{loop repr of D} of $D_\nu$ does not apply for $\nu \in \Z^-$, we exclude the case $\nu \in \Z^-$ from several upcoming results, which are still true in that case, but are not required for this paper. 
\end{remark}

We need a minor contour integral identity to help with the upcoming lemma that covers the relevant properties of the parabolic cylinder functions for our paper.

\begin{lemma} \label{first alternate loop lem}
	For any $ x, \nu \in \C$ and any $d>0$ we have the identity
	\begin{align}
		\tilde D_\nu(x)= \frac 1 {2\pi \mathrm{i}} \exp(-x^2/4) \left(  \int_{\R-\mathrm{i} d} - \int_{\R+\mathrm{i}d} \right) \mathrm d t \,  \exp\big(xt-\tfrac{1}{2}t^{2}\big) t^{-(\nu+1)}\,.
	\end{align}
\end{lemma}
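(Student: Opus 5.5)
The plan is to deform the Hankel loop in \eqref{loop rep of tilde D} into the two horizontal lines $\R\mp\mathrm{i}d$ by Cauchy's theorem. The integrand $g(t)\coloneqq\exp(xt-\tfrac12 t^2)\,t^{-(\nu+1)}$ is holomorphic on $\C\setminus(-\infty,0]$ with the principal branch. The key analytic input is that $|\exp(-\tfrac12 t^2)|=\exp(-\tfrac12(\Re t)^2+\tfrac12(\Im t)^2)$ decays like a Gaussian as $|\Re t|\to\infty$, uniformly for $\Im t$ in a bounded set. This Gaussian decay dominates both $|\exp(xt)|\le \exp(|x||t|)$ and $|t^{-(\nu+1)}|\le |t|^{-\Re\nu-1}\exp(\pi|\Im\nu|)$.

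First, I fix a concrete Hankel loop $\gamma_\varepsilon$ with $0<\varepsilon<d$. It runs from $-\infty$ to $-\varepsilon$ along the lower edge of the cut (argument $-\pi$), then around the circle $|t|=\varepsilon$ counterclockwise, then back to $-\infty$ along the upper edge (argument $+\pi$). Next, I close a contour for $R>\varepsilon$, lying in the slit region, made of four pieces:
\begin{itemize}
\item the segment from $R-\mathrm{i}d$ to $-R-\mathrm{i}d$ (this is $\R-\mathrm{i}d$ traversed backwards, truncated);
\item a vertical segment from $-R-\mathrm{i}d$ up to the lower edge of the cut at $-R$;
\item the truncated loop $\gamma_\varepsilon$ from $-R$ to $-R$, around the origin;
\item a vertical segment from the upper edge at $-R$ up to $-R+\mathrm{i}d$, then the segment from $-R+\mathrm{i}d$ to $R+\mathrm{i}d$, and finally the vertical segment from $R+\mathrm{i}d$ down to $R-\mathrm{i}d$, which closes the contour.
\end{itemize}
Since $g$ is holomorphic inside this region, Cauchy's theorem gives
\[
\int_{\gamma_\varepsilon\cap\{|\Re t|\le R\}} g \;=\;\int_{-R-\mathrm{i}d}^{R-\mathrm{i}d} g \;-\;\int_{-R+\mathrm{i}d}^{R+\mathrm{i}d} g \;+\;(\text{three vertical segments}).
\]
The vertical pieces have length at most $2d$ and sit at $|\Re t|=R$. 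On them the integrand is bounded by $C\exp(-R^2/2+d^2/2+|x|(R+d))(R^2+d^2)^{|\Re\nu|+1}\to0$ as $R\to\infty$. The tails of $\gamma_\varepsilon$ and of the horizontal lines converge absolutely for the same reason. Letting $R\to\infty$ and multiplying by $\frac1{2\pi\mathrm{i}}\exp(-x^2/4)$ then yields the identity.

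I expect the main obstacle to be the orientation and branch bookkeeping rather than any estimate. I must check that the horizontal lines cross the positive real axis, not the cut, so that $g$ is continuous along them. I must also confirm that the sign combination $\int_{\R-\mathrm{i}d}-\int_{\R+\mathrm{i}d}$, with both lines oriented left to right, reproduces the counterclockwise loop around $0$. The check is that the lower line traversed to the right followed by the upper line traversed to the left encircles the origin positively, consistent with $(0+)$.

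A secondary point is that no condition on $\nu$ is needed, including $\nu\in\Z^-$. For $\nu\in\Z^-$ the integrand is entire and both sides vanish, consistent with the remark after \eqref{loop rep of tilde D}. Finally, the singularity at $t=0$ never touches the new contour, since $d>0$.
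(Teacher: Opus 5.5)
Your proposal is correct and is essentially the paper's proof: close the truncated Hankel loop with the rectangle through $\pm R\pm\mathrm{i}d$, apply Cauchy's theorem (the paper phrases this as homotopy invariance in the simply connected set $\C\setminus(-\infty,0]$), and let $R\to\infty$, using the Gaussian factor $\exp(-\tfrac12(\Re t)^2)$ to kill the vertical sides and the tails. The differences are cosmetic. The paper uses the smooth loop $\gamma_0$ of \eqref{gamma 0}, which stays strictly inside the slit plane and is joined to the rectangle at $-R\pm\mathrm{i}\theta(R)$, whereas you use a keyhole along the two edges of the cut; that is equally fine via the one-sided boundary values of the principal branch. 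One small correction to your bookkeeping remark: the lines $\R\pm\mathrm{i}d$ never meet the real axis, and only your right vertical side crosses it, at $t=R>0$, where the integrand is holomorphic.
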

\begin{proof}
	Let us first make a small detour on contour integrals. For an injective, piecewise $\mathsf{C}^1$-function $\gamma$ from $I$ to $\Omega$ (we write shortly $\gamma\in \mathsf{pC}^1 (I,\Omega)$), where $I=(a,b)$ is a (possibly unbounded) interval and $\Omega \subset \C$ is open, and any holomorphic function $f \colon \Omega \to \C$, the contour integral is defined as
	\begin{align*}
		\int_\gamma \mathrm d t\, f(t)  \coloneqq \int_a^b \mathrm d s\, f(\gamma(s)) \gamma'(s)  \, ,
	\end{align*} 
	provided that the Riemann integral on the right-hand side exists. While there are more general definitions, this suffices here. 
	
	In this paper, $\Omega$  will always be simply-connected and usually $\Omega=\C \setminus (-\infty,0]$. For simply-connected $\Omega$, the homotopy invariance of contour integrals implies that $\int_{\gamma} \mathrm d t \,f(t) = \int_{\gamma'} \mathrm d t \, f(t)$, whenever $\gamma \in \mathsf{pC}^1([a,b],\Omega), \gamma' \in \mathsf{pC}^1([a',b'],\Omega)$ have the same endpoints, that is $\gamma(a)=\gamma'(a'),\gamma(b)=\gamma'(b')$. For that reason, we introduce,  with a slight abuse of notation,
	\begin{align}
		\int_{\gamma(a)}^{\gamma(b)} \mathrm d t\,f(t)  \coloneqq \int_\gamma \mathrm d t\, f(t) \, .
	\end{align}
	We note that when $ \gamma(a),\gamma(b) \in \R$, this agrees with the usual definition of an integral over an interval, since one can pick the path $\gamma(s)=s$ on $(a,b)=(\gamma(a),\gamma(b))$. 
	
	For $t \in \C \setminus (-\infty,0]$, let 
	\begin{align} \label{contour switching lemma proof f def}
		f(t) \coloneqq \exp(-x^2/4) \exp(xt-\tfrac 1 2 t^2)t^{-(\nu+1)} \, .
	\end{align}
	Let $\gamma_0$ be the  Hankel loop. Specifically, we define $\gamma_0 \colon \R \to \C \setminus (-\infty,0]$ by
	\begin{align}\label{gamma 0}
		\gamma_0(s)= 1-s^2 +\mathrm{i} s \exp\left(\tfrac{1-s^2} 2 \right) \, .
	\end{align}
	We note that for each $R>0$, $\gamma_0$ intersects $-R+\mathrm{i}\R$ in exactly two points, $-R-\mathrm{i}\theta(R)$ and $-R+\mathrm{i}\theta(R)$, for $\theta(R)\coloneqq \Im(\gamma(\sqrt{R+1})) \in (0,1)$. Furthermore, we note that there is a constant $C$, such that for any $s \in \R \setminus(-2,2)$, we have
	\begin{align} \label{contour switching lemma proof gamma' eq}
		\lvert \gamma_0'(s) \rvert \le C \lvert s \rvert \, .
	\end{align}
	
	For $R>3,d>0$, let $\gamma_{d,R}$ be a contour constructed as follows. We follow the path $\gamma_0$ from $-\infty$ until it reaches $-R-\mathrm{i}\theta(R)$. From here, we continue with an edge sequence through the points $-R-\mathrm{i}d,R-\mathrm{i}d,R+\mathrm{i}d,-R+\mathrm{i}d,-R+\mathrm{i}\theta(R)$. Then, we rejoin $\gamma_0$ heading to $-\infty$ in the upper half-plane. While this defines the image $\gamma_{d,R}(\R)$, to define the function itself, we say that on the edge sequence segment, $\gamma_{d,R}$ is piecewise linear and $\lvert \gamma_{d,R}' \rvert$ is constant. 

	\begin{figure}
		\centering
		\includegraphics{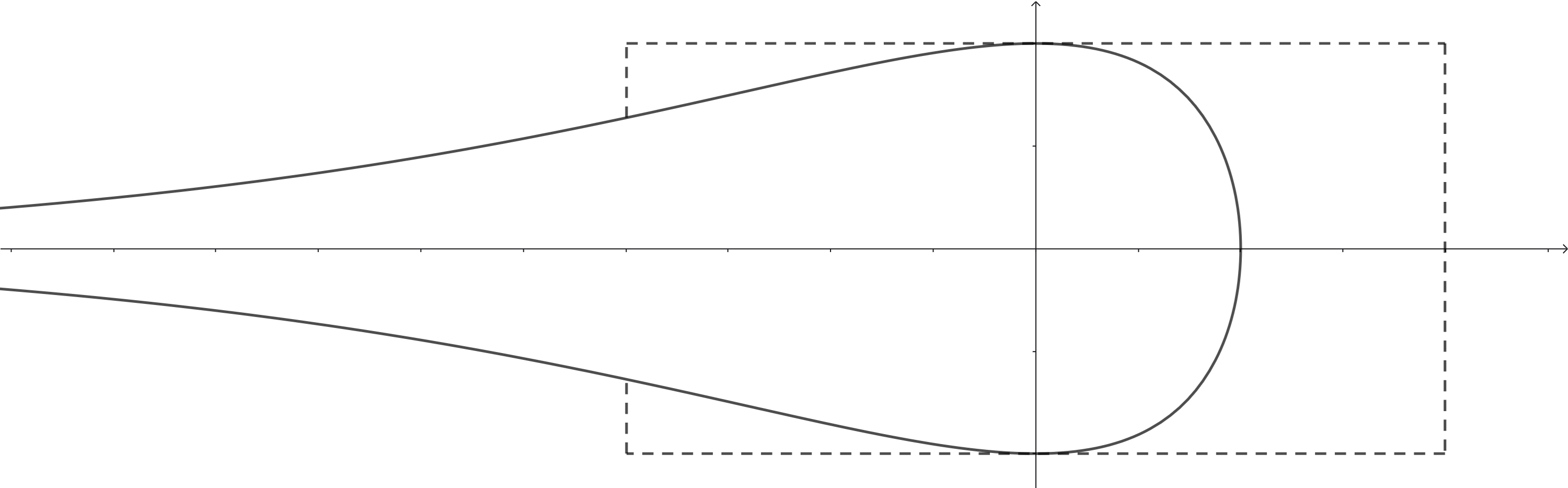}
		\caption{An illustration of $\gamma_0$ and $\gamma_{1,2}$}
	\end{figure}
	
	Since $\C \setminus (-\infty,0]$ is simply-connected the different paths that $\gamma_0$ and $\gamma_{d,R}$ take between $-R-\mathrm{i}\theta(R)$ and $-R+\mathrm{i}\theta(R)$ are homotopic. Hence, the contour integrals with respect to $\gamma_0$ and $\gamma_{d,R}$ are equal. 
	
	We are left to show that the contour integral over $\gamma_{d,R}$ converges to the claimed expression as $R \to \infty$. Therefore, we split the contour integral of $\gamma_{d,R}$ into the remnants of $\gamma_0$, the vertical line segments and the horizontal line segments.
	\begin{align*}
		\int_{\gamma_{d,R}} \dd t\, f(t) &= \int_{-\infty}^{-\sqrt{R+1}} \dd s\, f(\gamma_0(s))\gamma_0'(s) + \int_{\sqrt{R+1}}^\infty \dd s\, f(\gamma_0(s))\gamma_0'(s) \\
		&+\int_{-R-\mathrm{i}\theta(R)}^{-R-\mathrm{i}d} \mathrm dt\, f(t)  + \int_{R-\mathrm{i}d}^{R+\mathrm{i}d}\mathrm dt\, f(t) + \int_{-R+\mathrm{i}d}^{-R+\mathrm{i}\theta(R)} \mathrm dt\,f(t) \\
		&+\int_{-R-\mathrm{i}d}^{R-\mathrm{i}d} \mathrm dt\, f(t) + \int_{-R+\mathrm{i}d}^{R+\mathrm{i}d}\mathrm dt\, f(t) \,.
	\end{align*}
	Clearly, there is a constant $C$, depending on $x,\nu,d$ and $\varepsilon \in (0,1)$, such that
	\begin{align} \label{contour switching lemma proof function estimate eq}
		\lvert f(t) \rvert= \lvert \exp(-x^2/4) \exp(xt-\tfrac 1 2 t^2)t^{-(\nu+1)} \rvert \le C \exp\big(-\tfrac 1 4 \Re(t)^2\big)\,,
	\end{align}
	for all $t \in \R + \mathrm{i} [-d-1,d+1] \setminus \big( (-\infty,0]\cup B_\varepsilon(0) \big)$, where $B_\varepsilon(0)$ is the $\varepsilon$-ball around $0$. For sufficiently small $\varepsilon$, this estimate holds for all $t \in \gamma_0(\R) \cup \bigcup_{R>3} \gamma_{d,R}(\R) \cup (\R-\mathrm{i} d) \cup (\R+\mathrm{i}d)$.  
	
	With this estimate and noting that $R>3$ implies $\sqrt{R+1}\ge 2$, which lets us use \eqref{contour switching lemma proof gamma' eq}, we can proceed to bound the contour integral over the remnant of $\gamma_0$,
	\begin{align*}
		\left \lvert \int_{-\infty}^{-\sqrt{R+1}} \mathrm d s\,f(\gamma_0(s)) \gamma_0'(s) \right \rvert 
		&\le  C \int_{-\infty}^{-\sqrt{R+1}} \mathrm d s\,  \exp\left( - \tfrac {(1-s^2)^2}  4 \right) C \lvert s \rvert \\
		\\&\le C \int_{-\infty}^{-\sqrt{R+1}}  \mathrm d s\,\lvert s \rvert \exp\left( - \tfrac {s^2-1}2\right) = C \exp\left(- \tfrac R 2 \right)
		\, ,
	\end{align*}
	where we used $(1-s^2)^2 = (s^2-1)(s^2-1)\ge 2 (s^2-1)$ for $|s|\ge2$. As the integral over $(\sqrt{R+1},\infty)$ can be estimated in the same way, we conclude
	\begin{align}
		\lim_{R \to \infty} \int_{-\infty}^{-\sqrt{R+1}} \mathrm ds\,f(\gamma_0(s))\gamma_0'(s)  + \int_{\sqrt{R+1}}^\infty \mathrm ds\, f(\gamma_0(s))\gamma_0'(s) =0\,.
	\end{align}
	We proceed with the vertical line segments. Let $J \subset \C$ be such a vertical line segment. For $t \in J$, we get $\lvert \Re(t) \rvert=R$ and $\lvert J \rvert \le 2d+1$. Thus, we can bound $\lvert \int_J \mathrm d t\,f(t) \rvert \le C (2d+1)\exp(-\tfrac 1 4 R^2)$, which yields
	\begin{align}
		\lim_{R \to \infty} \left(\int_{-R-\mathrm{i}\theta(R)}^{-R-\mathrm{i}d}\mathrm dt\, f(t)  + \int_{R-\mathrm{i}d}^{R+\mathrm{i}d} \mathrm dt\,f(t)  + \int_{-R+\mathrm{i}d}^{-R+\mathrm{i}\theta(R)} \mathrm dt\,f(t)\right)  =0 \,.
	\end{align}
	We are left with the horizontal line segments. For these, we start from $\R \pm \mathrm{i}d$ and split that into parts
	\begin{align}
		\int_{\R \pm \mathrm{i}d } \mathrm d t\,f(t)  = \int_{-\infty}^{-R} \mathrm d s\, f(s \pm \mathrm{i}d) + \int_{-R\pm \mathrm{i}d } ^{R \pm \mathrm{i}d } \mathrm d t\,f(t) + \int_{R}^\infty \mathrm d s\, f(s \pm \mathrm{i}d) \,. 
	\end{align}
	For the first (and analogously last) part of this integral, we can use \eqref{contour switching lemma proof function estimate eq} and $R>3$ to conclude
	\begin{align}
		\left \lvert \int_{-\infty}^{-R} \mathrm ds\,f(s \pm \mathrm{i}d ) \right \rvert \le C \int_{-\infty}^{-R} \mathrm ds\,\exp(-\tfrac 1 4 s^2) \le C \int_{-\infty}^{-R} \mathrm ds\,\lvert s \rvert  \exp(-\tfrac 1 4 s^2) = C \exp(-\tfrac 1 4 R^2) \, ,
	\end{align}
	which yields 
	\begin{align*}
		\lim_{R \to \infty} \int_{-R \pm \mathrm{i}d } ^{R \pm \mathrm{i}d} \mathrm d t \, f(t)  = \int_{\R \pm \mathrm{i}d } \mathrm d t \, f(t) \, .
	\end{align*}
	The final step is the realization that 
	\begin{align*}
		\int_{R+\mathrm{i}d}^{-R+\mathrm{i}d} \mathrm dt\, f(t)  = - \int_{-R+\mathrm{i}d} ^{R+\mathrm{i}d} \mathrm dt\, f(t) \,,
	\end{align*}
	which is just based on $\gamma'$ being changed exactly by a factor $-1$, when we run a path in opposite direction.
	
	Thus, we have shown
	\begin{align*}
		\int_{\gamma_0}\mathrm d t \,f(t) = \lim_{R \to \infty} \int_{\gamma_{d,R}} \mathrm d t \,f(t) = \left( \int_{\R-\mathrm{i}d} - \int_{\R+\mathrm{i}d} \right)\mathrm d t \, f(t) \, ,
	\end{align*}
	which, in light of \eqref{loop rep of tilde D}, \eqref{contour switching lemma proof f def} and the fact that $\gamma_0$ is the Hankel loop, was the claim.	
\end{proof}
\begin{cl} \label{2nd alternate loop cl}
	For any $x \in \C, d>0$, we have the identity 
	\begin{align}
		\tilde D_\nu(x)= \frac 1 {2\pi \mathrm{i}} \exp(-x^2/4) \left(  \int_{(-\infty,d)-\mathrm{i} d} + \int_{d-\mathrm{i}d}^{d+\mathrm{i}d}- \int_{(\infty,d)+\mathrm{i}d} \right) \mathrm d t \,  \exp\big(xt-\tfrac{1}{2}t^{2}\big) t^{-(\nu+1)}\,.
	\end{align}
\end{cl}
\begin{proof}
	We follow the same construction as in the previous proof. The only notable difference is that the two corners $R \pm \mathrm{i}d$ of the path $\gamma_{d,R}$ are replaced by $d \pm \mathrm{i}d$. All the required estimates and identities have been discussed in the previous proof.
\end{proof}

Let us collect a few known properties of the parabolic cylinder functions in the following lemma. 
\begin{lemma}
The following properties hold:
\begin{enumerate}
	\item For all $\nu \in \R$, the function $\tilde D_\nu$ satisfies \eqref{DE for D}, that is,
	\begin{align} \label{tilde D nu diff eq}
		\frac{\dd^2}{\dd x^2} \tilde D_\nu(x) - \big(\tfrac14 x^2-\nu-\tfrac12\big) \tilde D_\nu(x) = 0\,,\quad x\in\mathbb C\,.
	\end{align}
	\item For all $\nu \in \R \setminus \Z^-$, we have the two-term asymptotic expansion as $x\in\R$ tends to $+\infty$,
	\begin{align} \label{D large x}
		D_\nu(x)&=\exp(-x^2/4)\,x^\nu \Big(1+\mathcal O\big( \tfrac{1}{x^2}\big) \Big)\,,
	\end{align}
	where the convergence is locally uniform in $\nu$.
	\item For all $\nu \in \R \setminus \Z$, we have the leading asymptotic behaviour as $x\in\R$ tends to $- \infty$,
	\begin{align} \label{D large negative x}
		D_\nu(x) = \frac{\sqrt{2\pi}}{\Gamma(-\nu)} \exp(x^2/4) \,|x|^{-\nu-1}\big(1+o(1)\big)\,.
	\end{align}
	In our main case of interest, $\nu\in(0,1)$ so that $1/\Gamma(-\nu)$ is negative and then $D_\nu(x)$ goes to $-\infty$ as $x\to-\infty$.
	\item For all $\nu > -1, x \le 0$ and $j=1,2$, we have
	\begin{align} \label{d nu D bound}
		\Big| \frac{\dd^j}{\dd\nu^j}\tilde  D_\nu(x)\Big| &\le  C \exp(x^2/4)\,,
	\end{align}
	with a constant $C$ independent of $\nu$ and $x$.
	\item As $x\in\R$ tends to $- \infty$, we have
	\begin{align} \label{d nu D at nu = 0}
		 \frac{\dd }{\dd \nu} \tilde D_\nu(x)\Big|_{\nu=0}= -\frac{\sqrt{2\pi}}{|x|} \exp(x^2/4) \Big(1+\mathcal O(|x|^{-2})\Big)  \,.
	\end{align}
	\item For all $\nu  \in \Z^-$, as $x\in\R$ tends to $\pm \infty$ we have
	\begin{align} \label{D_-n(ix)}
		 D_\nu(\mathrm{i}x) \coloneqq  \lim_{\nu' \to \nu } D_{\nu'}(\mathrm{i}x)=  \exp(x^2/4) (\mathrm{i}x)^{ \nu} \big(1+ o(1)\big) \,.
	\end{align}
	\item For all $\nu \in \R$ and all $ x \in \R$, we have
	\begin{align} \label{exp D diff 1}
		\frac{\dd}{\dd x}\left(\exp(x^2/4)\tilde D_\nu(x) \right)= \exp(x^2/4) \tilde D_{\nu-1}(x) \,.
	\end{align}
	\item For all $\nu \in \R$ and all $ x \in \R$, we have
	\begin{align} \label{exp D diff 2}
		\frac{\dd}{\dd x}\left(\exp(-x^2/4)\tilde D_\nu(x) \right)=- (\nu+1)\exp(-x^2/4) \tilde D_{\nu+1}(x) \,.
	\end{align}	
	\item For all $\nu \in \R \setminus \Z^-$, we have
	\begin{align} \label{D nu at 0 final}
		D_\nu(0) = \frac{2^{\tfrac{\nu}{2}}}{\sqrt{\pi}} \,\Gamma\big(\tfrac{\nu+1}{2}\big) \sin \big( \pi \big(\tfrac{\nu+1}{2}\big) 				\big) \,.
	\end{align}	
\end{enumerate}
\end{lemma}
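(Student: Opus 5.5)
All nine properties will be derived directly from the loop representation \eqref{loop rep of tilde D} of $\tilde D_\nu$, together with its deformed versions in \autoref{first alternate loop lem} and \autoref{2nd alternate loop cl}. Throughout, write $g(t)\coloneqq\exp(xt-\tfrac12t^2)\,t^{-(\nu+1)}$ for the integrand. Holomorphy in $(x,\nu)$, and the absolute integrability of derivatives along the contour, allow differentiation under the integral sign.

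\textbf{Algebraic identities (1), (7), (8).} For (7), note that $\exp(x^2/4)\tilde D_\nu(x)=\frac1{2\pi\mathrm i}\int g$. Differentiating in $x$ brings down a factor $t$, so $t^{-(\nu+1)}$ becomes $t^{-\nu}$, which is exactly the integrand for $\tilde D_{\nu-1}$. For (8), differentiate $\exp(-x^2/4)\tilde D_\nu(x)$, which has integrand $\exp(-x^2/2+xt-t^2/2)t^{-\nu-1}=\exp(-(x-t)^2/2)t^{-\nu-1}$. The $x$-derivative equals minus the $t$-derivative of the Gaussian factor. Integrating by parts along the Hankel loop produces no boundary terms, since the integrand decays at $-\infty$. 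This gives $-(\nu+1)\int\exp(-(x-t)^2/2)t^{-\nu-2}$, as claimed. For (1), combine (7) and (8) to get two first-order relations; eliminating $\tilde D_{\nu\pm1}$ yields \eqref{tilde D nu diff eq}. Equivalently, one can check directly that $\int\frac{\dd}{\dd t}\big(\exp(xt-t^2/2)t^{-\nu-1}\big)\,\dd t=0$.

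\textbf{Asymptotics (2), (3), (5), (6).} For $x\to+\infty$ in (2), use the representation in \autoref{2nd alternate loop cl} with the contour shifted to pass through the saddle point $t=x$, i.e.\ take $d=x$. Then $\exp(-x^2/4)\exp(xt-t^2/2)=\exp(x^2/4)\exp(-(t-x)^2/2)$. Expanding $t^{-\nu-1}$ near $t=x$ leaves a contribution of order $\exp(x^2/4)x^{-\nu-1}$ from the saddle. This must be combined with the reflection formula \eqref{product of Gamma}, or equivalently with the fact that the Hankel contour collapses onto the cut. I expect the cleanest route is different: rewrite $D_\nu$ via the standard real integral $D_\nu(x)=\frac{\exp(-x^2/4)}{\Gamma(-\nu)}\int_0^\infty s^{-\nu-1}\exp(-xs-s^2/2)\,\dd s$, valid for $\nu<0$, and continue analytically in $\nu$ using (8). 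The large-$x$ asymptotic is then Watson's lemma, with $O(x^{-2})$ coming from the next Taylor term of $\exp(-s^2/2)$, and it is locally uniform in $\nu$. For $x\to-\infty$ in (3) and (5), collapse the Hankel loop onto the negative axis. There $t^{-\nu-1}$ has jump $2\mathrm i\sin(\pi(\nu+1))|t|^{-\nu-1}$, so one gets a real integral over $s>0$ of $\exp(-|x|s... )$. More precisely, the saddle at $t=x<0$ lies on the cut, so the integral localises to $t\approx x$ with weight $\exp(x^2/4)|x|^{-\nu-1}\sin(\pi\nu)$. Using $\Gamma(\nu+1)\sin(\pi\nu)/\pi=-1/\Gamma(-\nu)$ and the Gaussian integral $\sqrt{2\pi}$ then gives \eqref{D large negative x}. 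Differentiating the resulting formula in $\nu$ at $\nu=0$, where $\sin(\pi\nu)$ vanishes, yields (5) with constant $-\sqrt{2\pi}/|x|$. For (6), the analogous saddle analysis applied at imaginary argument $\mathrm ix$ gives the result.

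\textbf{Uniform bound (4) and value at zero (9).} For $x\le0$, use \autoref{first alternate loop lem} with a fixed $d$, say $d=1$. Then $|\exp(-x^2/4+xt-t^2/2)|=\exp(x^2/4)\exp(-(s-x)^2/2+d^2/2)$ for $t=s\pm\mathrm id$. The $\nu$-derivatives only bring down factors $(\ln t)^j$, and $t^{-\nu-1}$ is bounded on these lines, uniformly in $\nu>-1$. Integrating over $s$ with the Gaussian weight gives $C\exp(x^2/4)$, uniformly in $x$. The logarithm grows only like $\ln|s|$, which is harmless. For (9), set $x=0$ in the real-integral or Hankel representation. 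Evaluating the Hankel integral along the cut gives $\Gamma(\nu+1)\frac{\sin(\pi(\nu+1))}{\pi}\int_0^\infty e^{-s^2/2}s^{-\nu-1}\,\dd s$ for $\nu<0$, continued analytically. Applying the duplication formula then yields \eqref{D nu at 0 final}.

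\textbf{Main obstacle.} The hardest step is the precise asymptotics (2) and (3). There the saddle sits on, or interacts with, the branch cut, so uniformity in $\nu$ and the correct constant demand careful contour bookkeeping. I would first try the Watson-lemma route through the real integral and extend in $\nu$ by the recurrences (7) and (8).
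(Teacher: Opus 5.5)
\paragraph{The gap: item (5).} You obtain (5) by ``differentiating the resulting formula in $\nu$ at $\nu=0$'', i.e.\ by differentiating the asymptotic relation (3) in $\nu$. That step fails.

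An asymptotic relation can only be differentiated in a parameter if the remainder and its $\nu$-derivative are controlled uniformly, and here they are not. The main term in (3) vanishes at $\nu=0$, because $1/\Gamma(0)=0$, while $D_0(x)=e^{-x^2/4}\neq0$. So for fixed $x$ the relative remainder in (3) is of order $|x|e^{-x^2/2}/|\nu|$ as $\nu\to0$. Relation (3) says nothing about the derivative of this remainder. It also only carries an $o(1)$, whereas (5) asserts $1+\mathcal O(|x|^{-2})$.

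Switching to the exact collapsed form $\tilde D_\nu(x)=-\frac{\sin(\pi\nu)}{\pi}\int_0^\infty e^{-x^2/4-xs-s^2/2}s^{-\nu-1}\,\dd s$ does not help either. Using the product rule with $\sin(0)=0$ is not possible: that formula holds only for $\nu<0$, and the integral has a pole at $\nu=0$.

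\paragraph{The repair.} Differentiate the exact integral first, as the paper does.
\begin{itemize}
\item The $\nu$-derivative brings down $-\ln t$. At $\nu=0$ the factor $t^{-1}$ is single-valued, so only the logarithm jumps across the cut, by $2\pi\mathrm{i}$.
\item Collapse the loop onto $(-\infty,-1]$, keeping the circle $|t|=1$, whose contribution is $\mathcal O(e^{-x^2/4+|x|})$. This gives $\partial_\nu\tilde D_\nu(x)|_{\nu=0}=-\int_1^\infty e^{x^2/4-(s-|x|)^2/2}\,\frac{\dd s}{s}+\mathcal O(e^{-x^2/4+|x|})$.
\item Set $u=s-|x|$ and restrict to $|u|\le|x|/2$, which costs only an exponentially small error. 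Expand $1/s=|x|^{-1}(1-u/|x|+\mathcal O(u^2/|x|^2))$; the odd term integrates to zero.
\item You obtain $-\sqrt{2\pi}\,|x|^{-1}e^{x^2/4}(1+\mathcal O(|x|^{-2}))$.
\end{itemize}
The paper does the same computation on the lines $\R\pm\mathrm{i}$, splitting at $|t|=\sqrt{|x|}$.

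\paragraph{Smaller repairs.}
\begin{itemize}
\item \emph{Item (4).} Bounding $|t^{-\nu-1}|\le1$ and the logarithm separately is not uniform in $x$. The Gaussian weight is centred at $s=x\to-\infty$, so you pick up a factor $(\ln|x|)^j$; the logarithm is not ``harmless''. Pair the two factors instead: $|t|^{-(\nu+1)}(\ln|t|+\pi)^j\le C$ on $|t|\ge1$. This is uniform for $\nu$ bounded away from $-1$, which suffices for $\nu\in[0,1]$, the only range used later. The paper's proof has the same dependence.
\item \emph{Item (6).} The main term is not a saddle contribution. In $D_{-n}(\mathrm{i}x)=\frac{e^{x^2/4}}{(n-1)!}\int_0^\infty s^{n-1}e^{-\mathrm{i}xs-s^2/2}\,\dd s$ the saddle $s=-\mathrm{i}x$ is exponentially subdominant. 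The term $e^{x^2/4}(\mathrm{i}x)^{-n}$ comes from the endpoint $s=0$, via $n$ integrations by parts and Riemann--Lebesgue. This is exactly the paper's final step, which it reaches through the residue of $\Gamma$ at $1-n$ and $\partial_\nu\tilde D_\nu$ at $\nu=-n$.
\item \emph{Item (2), opening heuristic.} The saddle heuristic is wrong: the vertical segment through $t=x$ is a direction of steepest ascent. The result $e^{-x^2/4}x^\nu$ comes from near the branch point.
\item \emph{Item (3).} For $\nu>0$ the loop cannot be collapsed onto the cut without keeping a small circle around $0$. Its contribution is negligible, but it has to be there.
\end{itemize}

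\paragraph{Routes that are valid but differ from the paper's.}
\begin{itemize}
\item \emph{Item (2).} You use Watson's lemma on the real integral for $\nu<0$, extended by the derivative-free recurrence $D_{\nu+1}=xD_\nu-\nu D_{\nu-1}$ obtained from (7) and (8), with continuity at the integers. The paper instead applies Whittaker's rescaling $t\mapsto t/x$ to the loop, which handles all $\nu\notin\Z^-$ at once.
\item \emph{Item (1).} Eliminating $\tilde D_{\nu-1}$ between (7) at $\nu$ and (8) at $\nu-1$ is correct.
\item \emph{Item (9).} The identity theorem on $\C\setminus\Z^-$ works, since the right-hand side equals the entire function $2^{\nu/2}\sqrt{\pi}/\Gamma(\frac{1-\nu}{2})$. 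This is slicker than the paper's induction via $\tilde D_{\nu-1}(0)=-(\nu+1)\tilde D_{\nu+1}(0)$.
\end{itemize}
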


\begin{proof}
Let $w\coloneqq2\pi \mathrm{i} \tilde D_\nu$, that is,
\[ w(x)= \exp(-x^{2}/4)\,\int_{-\infty}^{(0+)}\mathrm{d}t\, \exp\big(xt-\tfrac{1}{2}t^{2}\big) t^{-(\nu+1)}\,,\quad x\in \C \,.
\]
As the integrand is smooth along the contour and decays along with the relevant differentials as $t \to - \infty$, we can exchange the differential in $x$ with the integral twice. Thus (${}' = \frac{\dd}{\dd x}$), 
\begin{align*} w'(x)&=-\frac{x}{2}\exp(-x^{2}/4)\,\int_{-\infty}^{(0+)}\mathrm{d}t\, \exp\big(xt-\tfrac{1}{2}t^{2}\big) t^{-(\nu+1)}+\exp(-x^{2}/4)\,\int_{-\infty}^{(0+)}\mathrm{d}t\, \exp\big(xt-\tfrac{1}{2}t^{2}\big) t^{-\nu}
\end{align*}
so that
\begin{align*}
w''(x)&=\Big(-\frac12 +\frac{x^2}{4}\Big) \exp(-x^{2}/4)\,\int_{-\infty}^{(0+)}\mathrm{d}t\, \exp\big(xt-\tfrac{1}{2}t^{2}\big) t^{-(\nu+1)}
\\
&-2\cdot\frac{x}{2} \exp(-x^{2}/4)\,\int_{-\infty}^{(0+)}\mathrm{d}t\, \exp\big(xt-\tfrac{1}{2}t^{2}\big) t^{-\nu}
\\
&+\exp(-x^{2}/4)\,\int_{-\infty}^{(0+)}\mathrm{d}t\, \exp\big(xt-\tfrac{1}{2}t^{2}\big) t^{-\nu+1}\,.
\end{align*}
Writing in the last integral $\exp(-t^2/2) t = -\frac{\dd}{\dd t} \exp(-t^2/2)$ and integrating by parts we obtain 
\begin{align*}
w''(x) &= \Big(-\frac12 +\frac{x^2}{4}\Big) w(x)
\\
&-{x}\exp(-x^{2}/4)\,\int_{-\infty}^{(0+)}\mathrm{d}t\, \exp\big(xt-\tfrac{1}{2}t^{2}\big) t^{-\nu}
\\
&+\exp(-x^{2}/4)\,\int_{-\infty}^{(0+)}\mathrm{d}t\, \exp\big(-\tfrac{1}{2}t^{2}\big) \big(xt-\nu\big) \exp(xt)t^{-\nu-1}
\\
&=\Big(-\frac12 +\frac{x^2}{4}-\nu\Big) w(x)\,.
\end{align*}
Thus, we have shown \eqref{tilde D nu diff eq}. 

To obtain the asymptotic expansion as $x\to + \infty$, we follow Whittaker (see \cite{Whitt1902}), change coordinates and write
\[ D_\nu(x) = \frac{\Gamma(\nu+1)}{2\pi\mathrm{i}} \exp(-x^{2}/4)\,x^\nu \,\int_{-\infty}^{(0+)}\mathrm{d}t\, \exp\big(t-\tfrac{1}{2}t^{2}/x^2\big) t^{-(\nu+1)} \, .
\]
For the original integral we use \autoref{2nd alternate loop cl} with $d\coloneqq x$ to replace the Hankel loop. After the integral transformation, we get the same type of loop with $d=1$. Thus, we can write
	\[ 
	D_\nu(x) = \frac{\Gamma(\nu+1)}{2\pi\mathrm{i}} \exp(-x^{2}/4)\,x^\nu \,\left( \int_{(-\infty,1)-\mathrm{i}}+\int_{1-\mathrm{i}}^{1+\mathrm{i}}-\int_{(-\infty,1)+\mathrm{i}}\right) \mathrm{d}t\, \exp\big(t-\tfrac{1}{2}t^{2}/x^2\big) t^{-(\nu+1)} \, .
	\]
We call this piecewise linear path $\gamma$. For the error bound, we can now observe that $\Re(t^2)\ge -1$ along this contour. For $x>1$, this yields $\Re(t^2/x^2)>-1$, which implies 
\[
\frac{\big\lvert \exp(-\tfrac 1 2 t^2/x^2)-\exp(0) \big\rvert }{\big\lvert (-\tfrac 1 2 t^2/x^2) - 0 \big\rvert} \le \sup_{z \in \C, \Re(z)<1} \left \lvert \frac {\mathrm d }{\mathrm d z} \exp(z) \right \rvert=\mathrm{e} \, .
\]
Thus, we can estimate (by the triangle inequality for complex path integrals)
\[
\left \lvert \int_{-\infty}^{(0+)}\mathrm{d}t\, \exp(t) \left( \exp\big(-\tfrac{1}{2}t^{2}/x^2\big)-1 \right) t^{-(\nu+1)} \right \rvert \le \frac {\mathrm{e}}{2 x^2} \int_{\gamma(\R)} \mathrm{d} \mathcal H^1(t)  \, \lvert \exp(t) \rvert \,\lvert t\rvert ^{2-(\nu+1)}\,,
\]
where $\mathcal H^1$ is the one-dimensional Hausdorff-measure on $\g(\R)$; in the literature $\mathrm{d} \mathcal H^1(t)$ is also denoted by $|\dd t|$ for complex $t$.  
Since this integral is a continuous, real-valued function in $\nu$, this yields the error estimate in \eqref{D large x}. 

For the main term, we still need to change the contour path from $\gamma$ back to a Hankel loop like $\gamma_0$, see \eqref{gamma 0}. We use the same arguments that led to the proof of \autoref{2nd alternate loop cl} with $d\coloneqq 1$ and the function $\exp(t)$ instead of $\exp(xt-t^2/2)$. In addition, we use \eqref{Gamma loop integral eq} to see that 
\[
	\int_\gamma \mathrm{d}t\, \exp(t) t^{-(\nu+1)} =\int_{\gamma_0} \mathrm{d}t\, \exp(t) t^{-(\nu+1)} = \frac {2 \pi \mathrm{i}} { \Gamma(\nu+1)} \,,
\]
which completes the proof of \eqref{D large x}.

On the other hand, if we let $x\to-\infty$, then we first apply \autoref{first alternate loop lem} and then we complete the square in the exponent, 
so that
\begin{align} 
	\tilde D_\nu(x) &= \frac{1}{2\pi\mathrm{i}} \exp(x^{2}/4)\,\left(\int_{\R-\mathrm{i}d}-\int_{\R+\mathrm{i}d}\right) \mathrm{d}t\, \exp\big(-\tfrac{1}{2}(t-x)^{2}\big) t^{-(\nu+1)}\label{D nu modified integral original Hankel loop}
	\\
	&=\frac{1}{2\pi\mathrm{i}} \exp(x^{2}/4)\,\, \left(\int_{\R-\mathrm{i}d}-\int_{\R+\mathrm{i}d}\right)\mathrm{d}t\,  \exp\big(-\tfrac{1}{2}t^{2}\big) \left(t+x\right)^{-(\nu+1)}  \label{D nu modified integral} \\
	&=\frac{1}{2\pi\mathrm{i}} \exp(x^{2}/4)\,|x|^{-\nu-1}\, \left(\int_{\R-\mathrm{i}d}-\int_{\R+\mathrm{i}d}\right)\mathrm{d}t\, \exp\big(-\tfrac{1}{2}t^{2}\big) \left(-1+\frac{t}{\lvert x \rvert}\right)^{-(\nu+1)}\,. \nonumber
\end{align}
Since this holds for any $x<0$, we may now choose $d=1/\lvert x\rvert$. Thus, we are left to study 
\begin{align*}
	\lim_{d \to 0^+} \int_\R \mathrm d s &\left( \exp\big(-\tfrac 1 2 ( s-\mathrm{i}d )^2\big) \left(-1+d (s-\mathrm{i}d)\right)^{-(\nu+1)} \right.
	\\
	&\left.- \exp\big(-\tfrac 1 2 ( s+\mathrm{i}d )^2\big) \left(-1+d (s+\mathrm{i}d)\right)^{-(\nu+1)} \right)\,.
\end{align*}
To apply dominated convergence, we need to be a bit careful around $sd \approx 1$, since $(-1+d(s\pm \mathrm{i}d))^{-(\nu+1)}$ may reach up to $d^{-2(\nu+1)}$, if $\nu>-1$. However, $\big\lvert \exp(-\tfrac 12 (s \pm \mathrm{i}d)^2)\big\rvert$ easily compensates this, as it can be bounded by $C\exp(-\tfrac 1 d)$ in the critical case $s\in(\tfrac 1{2d}, \tfrac 3 {2d})$, for small $d$. Thus, we may apply dominated convergence.  Since the expressions $(-1+ d (s\pm \mathrm{i}d))$ approach the point $-1$ from different half planes, we get different branches of the function $t \mapsto t^{-(\nu+1)}$ evaluated at $-1$. Hence, the above limit as $d\to0^+$ equals
\[  \int_\R \dd t\, \exp\big(-\tfrac 1 2 t^2\big)\,\big(\exp(\mathrm{i} \pi(\nu+1))-\exp(-\mathrm{i} \pi(\nu+1)\big) = \sqrt{2\pi} \,2\mathrm{i} \sin\big(\pi(\nu+1)\big)\,.
\]
Using \eqref{product of Gamma} with $z=-\nu$ we arrive to leading order as $x\to-\infty$ at 
\begin{align*} D_\nu(x)&\sim -\frac{\pi}{\Gamma(-\nu) \sin(\pi\nu)} \,\frac1{2\pi\mathrm{i}}\, \sqrt{2\pi} \,2\mathrm{i} \sin(\pi(\nu+1)) \, \exp(x^2/4) |x|^{-\nu-1}=  \frac{\sqrt{2\pi}}{\Gamma(-\nu)}\, \exp(x^2/4) |x|^{-\nu-1}\,,
\end{align*}
which is \eqref{D large negative x}.

To study the differential with respect to $\nu$, we use \eqref{D nu modified integral} with $d=1$. Then, for $j\in \{1,2\}$, we observe
\begin{align}
\frac{\dd^j}{\dd\nu^j}\tilde D_\nu(x)= & \frac{1}{2\pi\mathrm{i}} \exp(x^{2}/4)\,\,\left(\int_{\R-\mathrm{i}}-\int_{\R+\mathrm{i}}\right)\mathrm{d}t\, \exp\big(-\tfrac{1}{2}t^{2}\big) \left(t+x\right)^{-(\nu+1)} (-\ln(t+x))^j\,.
\end{align}
Along this contour, $\left(t+x\right)^{-(\nu+1)} (-\ln(t+x))^j$ is bounded by some constant depending on $\nu, j$, provided that $\nu>-1$. Thus, for $\nu>-1$, as $\exp(-\tfrac 1 2 t^2)$ is clearly absolutely integrable along the contour, for $j=1,2$ we get the estimate
\begin{align*}
\left \lvert \frac{\dd^j}{\dd\nu^j}\tilde D_\nu(x) \right \rvert \le C \exp(x^2/4) \, ,
\end{align*}
which is \eqref{d nu D bound}. 

To get the asymptotic description for $\nu=0$ as $x \to - \infty$, we also go to the contour integral $\int_{\R-\mathrm{i}}-\int_{\R+\mathrm{i}}$. We assume $\lvert x \rvert>2$ and define
\begin{align}
f_x(t) \coloneqq \exp\big(-\tfrac{1}{2}t^{2}\big) \frac {-\ln(t+x) \lvert x \rvert}  {t+x}\quad\mbox{ for } t \in \C \setminus  (-\infty, -x]\,.
\end{align}
Thus, we see that 
\begin{align*}
\frac{\dd}{\dd\nu}\tilde D_\nu(x)\Big|_{\nu=0}= & \frac{1}{2\pi\mathrm{i}} \exp(x^{2}/4) \lvert x \rvert^{-1} \,\,\int_\R \mathrm{d}t \,\big(f_x(t-\mathrm{i})-f_x(t+\mathrm{i})\big) \, .
\end{align*}
We need to be careful around the singularity at $t=-x$. Thus, we split the contour into parts. On the domain $\lvert t \rvert> \sqrt{ \lvert x \rvert}$, the exponential decay suffices and we can just use that $\ln(t+x)/(t+x)$ remains bounded to get
\begin{align*}
\left \lvert \int_{\lvert t \rvert> \sqrt{ \lvert x \rvert}} \mathrm d t \, \big(f_x(t-\mathrm{i})-f_x(t+\mathrm{i})\big) \right \rvert \le C \lvert x \rvert \int_{\sqrt{\lvert x \rvert}}^\infty \mathrm d t \exp(-\tfrac 1 2 t^2) < C \exp(-\tfrac 1 4 \lvert x \rvert) \, .
\end{align*}
For the remaining integral with $|t|\le \sqrt{|x|}$, we switch to a different contour. For every $h \in (0,1)$, we can replace the straight path from $-\sqrt{\lvert x \rvert} -\mathrm{i}, +\sqrt{\lvert x \rvert}-\mathrm{i}$ by the sequence of straight paths via $-\sqrt{\lvert x \rvert}-\mathrm{i}h, \sqrt{\lvert x \rvert}-\mathrm{i}h$ and do the same for the path in the upper half-plane. For the short paths, we clearly get for any $e_1,e_1\in \{\pm 1\}$ 
\begin{align*}
\left \lvert \int_{e_1 \sqrt{\lvert x \rvert} + e_2 \mathrm{i}  }^{e_1 \sqrt{\lvert x \rvert} + e_2 \mathrm{i} h } \dd t\,f_x(t)\right \rvert \le C \exp(-\tfrac 1 4 \lvert x \rvert) \, .
\end{align*}
For the remaining leading term, we can take the limit $h \to 0^+$ and, as $\frac{t}{|x|} \le \frac{1}{\sqrt{|x|}}\le \frac1{\sqrt{2}}$ and $\lim_{h \to 0^+}\big(\ln(s+\mathrm{i}h)-\ln(s-\mathrm{i}h)\big)=2\pi \mathrm{i}$  for any $s<0$, observe
\begin{align*}
\lim_{h \to 0^+} \int_{-\sqrt {\lvert x \rvert} }^{\sqrt { \lvert x \rvert} } \mathrm d t \,\big(f_x(t-\mathrm{i}h)-f_x(t+\mathrm{i}h)\big) &=\int_{-\sqrt {\lvert x \rvert} }^{\sqrt { \lvert x \rvert} } \mathrm d t  \,\exp\big(-\tfrac{1}{2}t^{2}\big) \frac {2 \pi \mathrm{i}  }  {-1+\tfrac t {\lvert x \rvert}} \\
&= 2 \pi \mathrm{i}  \int_{-\sqrt {\lvert x \rvert} }^{\sqrt { \lvert x \rvert} } \mathrm d t  \exp\big(-\tfrac{1}{2}t^{2}\big) \left( -1- \frac t {\lvert x \rvert} + \mathcal O \big( \tfrac {t^2} {\lvert x \rvert^2} \big) \right) \\
&= -2\pi \mathrm{i} \int_\R \exp\big(-\tfrac{1}{2}t^{2}\big) \mathrm d t + \mathcal O \big( \tfrac 1 {\lvert x \rvert^2}\big) \\
&=-2\pi \mathrm{i}  \sqrt{2\pi} + \mathcal O \big( \tfrac 1 {\lvert x \rvert^2}\big)\,.
\end{align*}
Thus, we can conclude
\begin{align*}
\frac{\dd}{\dd\nu}\tilde D_\nu(x)\Big|_{\nu=0}= & - \frac { \sqrt{2 \pi}} {\lvert x \rvert}  \exp(x^{2}/4) \left( 1+ \mathcal O \big( \tfrac 1 {\lvert x \rvert^2}\big) \right) \, ,
\end{align*}
which is \eqref{d nu D at nu = 0}.

Let $-n\in\Z^-$. Assuming $\frac{\dd}{\dd\nu}\tilde D_{\nu}(\mathrm{i}x)\Big|_{\nu=-n}$ exists, we see that
\begin{align}\label{D_{-n}(ix)}
D_{-n}(\mathrm{i}x) &= \lim_{\nu \to -n}D_{\nu}(\mathrm{i}x) =\left(  \lim_{\nu \to -n} \Gamma(1+\nu) (\nu+n) \right) \left(  \lim_{\nu \to -n} \frac { \tilde D_{\nu}(\mathrm{i}x)}{\nu+n} \right)\nonumber \\
&= \operatorname{res}_{-n+1}(\Gamma) \frac{\dd}{\dd\nu}\tilde D_{\nu}(\mathrm{i}x)\Big|_{\nu=-n}\,.
\end{align}
As the following integral is absolutely integrable, we may exchange integration with the differential and get
\begin{align*}
\frac{\dd}{\dd\nu}\tilde D_{\nu}(\mathrm{i}x)\Big|_{\nu=-n} &= \frac 1 {2\pi \mathrm{i}}  \exp(x^2/4) \int_{-\infty}^{(0+)} \mathrm d t \exp\big( \mathrm{i} xt - \tfrac 1 2 t^2\big) t^{n-1} (-\ln(t)) \,.
\end{align*}
Similarly to the calculation for $\nu=0$, we observe that the integrand is  product of a holomorphic function on $\C$ and the logarithm. The difference is that the singularity at $0$ is now only of the type $\ln(t)$ and thus, we can contract the integration path to $\R^-$ and the logarithm cancels out the factor $2 \pi \mathrm{i}$. Hence, we get 
\begin{align*}
\frac{\dd}{\dd\nu}\tilde D_{\nu}(\mathrm{i}x)\Big|_{\nu=-n}  &= \exp(x^2/4)  \int_{\R^-} \mathrm d t \,\exp( \mathrm{i} xt) \exp \big( - \tfrac 1 2 t^2\big) t^{n-1}  \,.
\end{align*}
We note that for any $0 \le j \le n-1$, we have 
\begin{align*}
\frac{\dd^j}{\dd t^j} \exp \big( - \tfrac 1 2 t^2\big) t^{n-1}   \Big|_{t=0}&= 
	\begin{cases}
		0 & \text{ if } j<n-1 \\
		(n-1)! & \text{ if } j=n-1
	\end{cases} \,,\\
\lim_{t' \to -\infty} \frac{\dd^j}{\dd t^j} \exp \big( - \tfrac 1 2 t^2\big) t^{n-1}   \Big|_{t=t'} \exp(\mathrm{i}xt') &=0\,.
\end{align*}
Thus, an $n$-fold integration by parts tells us
\begin{align*}
\frac{\dd}{\dd\nu}\tilde D_{\nu}(\mathrm{i}x)\Big|_{\nu=-n}  &= \exp(x^2/4)  \int_{\R^-} \mathrm d t \,\exp( \mathrm{i} xt) \exp \big( - \tfrac 1 2 t^2\big)  t^{n-1}  \\
&=  \frac{\exp(x^2/4)}{(\mathrm{i}x)^n} (-1)^{n-1}  \left( \left[  \frac{\dd^{n-1}}{\dd t^{n-1}} \exp \big( - \tfrac 1 2 t^2\big) t^{n-1} \Big |_{t=s}  \exp(\mathrm{i}xs) \right]_{-\infty}^0 \right.
\\
&\phantom{\frac{\exp(x^2/4)}{(\mathrm{i}x)^n} (-1)^{n-1}\left( \left[\right.\right.}\left.- \int_{\R^-} \mathrm d s \,\frac{\dd^n}{\dd t^n} \exp \big( - \tfrac 1 2 t^2\big) t^{n-1}   \Big|_{t=s} \exp(\mathrm{i}xs) \right) \\
&= \frac{\exp(x^2/4)}{(\mathrm{i}x)^n} (-1)^{n-1} \left( (n-1)! - \int_{\R^-} \mathrm d s \,\frac{\dd^n}{\dd t^n} \exp \big( - \tfrac 1 2 t^2\big) t^{n-1}   \Big|_{t=s} \exp(\mathrm{i}xs) \right)  \,.
\end{align*}
Due to the Riemann--Lebesgue lemma, the final integral converges to zero as $x \to \pm \infty$. Thus, for $x \to \pm \infty$, we get the asymptotic expansion\footnote{We could get an $1/x^2$ error term by two more integrations by parts.}
\begin{align*}
\frac{\dd}{\dd\nu}\tilde D_{\nu}(\mathrm{i}x)\Big|_{\nu=-n} =  \frac{\exp(x^2/4)}{(\mathrm{i}x)^n} (-1)^{n-1}  (n-1)! \left( 1+ o (1) \right)\,.
\end{align*}
We recall that a meromorphic function $f$ has a first order pole at $z \in \C$, if and only if $\lim_{h \to 0} f(z+h)h$ exists and does not vanish. Furthermore, this limit is the residue of $f$ at $z$. Using the recurrence relation $z \Gamma(z)=\Gamma(z+1)$, we can calculate the residue of $\Gamma$ at $-n+1$ with $n\in\Z^+$ as 
\begin{align} \label{residue of Gamma}
\operatorname{res}_{-n+1}(\Gamma)= \lim_{h \to 0} \Gamma(-n+1+h)h= \lim_{h\to 0} \frac { \Gamma(1+h) h} { \prod_{j=0}^{n-1} (h-j)} = \frac{(-1)^{n-1}}{(n-1)!} \,.
\end{align}
Thus, we finally conclude for $x\to\pm\infty$ using \eqref{D_{-n}(ix)}
\begin{align*}
D_{-n}(\mathrm{i}x)= \operatorname{res}_{-n+1}(\Gamma)  \frac{\dd}{\dd\nu}\tilde D_{\nu}(\mathrm{i}x)\Big|_{\nu=-n}  =   \frac{\exp(x^2/4)}{(\mathrm{i}x)^n} \big(1+o(1)\big)\, ,
\end{align*}
which is \eqref{D_-n(ix)}.

For the next step, consider
\begin{align}
\left(\exp(x^2/4)\tilde D_\nu(x) \right)' &= \frac{1}{2\pi \mathrm{i}} \int_{-\infty}^{(0+)} \mathrm d t \exp(xt-\tfrac  1 2 t^2) t^{-\nu} 
=  \exp(x^2/4) \tilde D_{\nu-1}(x) \, ,
\end{align}
which is \eqref{exp D diff 1}. Similarly, starting with \eqref{D nu modified integral} with $d=1$ (which is valid for all $x\in \R$), we get
\begin{align*}
\left(\exp(-x^2/4)\tilde D_\nu(x) \right)' &= -(\nu+1) \frac{1}{2\pi \mathrm{i}} \left(\int_{\R-\mathrm{i}}-\int_{\R+\mathrm{i}}\right) \mathrm d t \exp(-\tfrac  1 2 t^2) (t+x)^{-\nu-2} \\
&= -(\nu+1)   \exp(-x^2/4) \tilde D_{\nu+1}(x) \, ,
\end{align*}
which is \eqref{exp D diff 2}.

If we apply the last two equations at $x=0$, we get
\begin{align}  \label{D nu at 0 nu-2}
\tilde D_{\nu-1}(0)= \tilde D_\nu'(0) = -(\nu+1) \tilde D_{\nu+1}(0) \, .
\end{align}
To continue our study of $D_\nu(0)$, we consider the case $\nu<0$ and get 
\begin{align*} &D_\nu(x)
\\
&=\frac{\Gamma(\nu +1)}{2\pi\mathrm{i}} \exp(-x^2/4) \int_0^\infty \dd s\, \exp\big(-xs-\tfrac{1}{2}s^2\big) \Big[(s\exp(-\mathrm{i}\pi))^{-(\nu+1)} - (s\exp(+\mathrm{i}\pi))^{-(\nu+1)}\Big]
\\
&=\frac{\Gamma(\nu+1)}{2\pi\mathrm{i}}\exp(-x^2/4)\int_0^\infty \dd s\, \exp\big(-xs-\tfrac{1}{2}s^2\big) s^{-(\nu+1)} \Big[\exp\big(\mathrm{i}\pi(\nu+1)\big) - \exp\big(-\mathrm{i}\pi(\nu +1)\big)\Big]
\\
&=-\frac{\Gamma(\nu+1)}{\pi}\,\sin(\pi\nu) \exp(-x^2/4)\int_0^\infty \dd s\, \exp\big(-xs-\tfrac{1}{2}s^2\big) s^{-(\nu+1)}\,.
\end{align*}
This yields for $\nu<0$,
\begin{align*} 
D_\nu(0) &= -\frac{\Gamma(\nu+1)}{\pi}\,\sin(\pi\nu) \int_0^\infty \dd s\, \exp\big(-\tfrac{1}{2}s^2\big) s^{-(\nu+1)} =-\frac{\Gamma(\nu+1)}{\pi}\,\sin(\pi\nu) \, 2^{-\tfrac{\nu}{2}-1} \Gamma(-\tfrac{\nu}2)
\\
&=2 (-\tfrac{\nu}{2}) \Gamma(\nu) \Gamma(-\tfrac{\nu}{2}) \,\frac{\sin(\pi\nu)}{\pi} \,2^{-\tfrac{\nu}{2}-1}=\Gamma(-\tfrac{\nu}{2}+1) \Gamma(\nu)\,\frac{\sin(\pi\nu)}{\pi} \,2^{-\tfrac{\nu}{2}}
\\
&=\frac{\pi}{\sin(\pi\tfrac{\nu}{2})} \,\frac{\Gamma(\nu)}{\Gamma(\tfrac{\nu}{2})} \,\frac{\sin(\pi\nu)}{\pi} \,2^{-\tfrac{\nu}{2}}=\frac{2\cos(\pi\nu/2)}{\sqrt{\pi}} \, 2^{\nu-1} \,\Gamma(\tfrac{\nu}{2}+\tfrac12) \,2^{-\tfrac{\nu}{2}}
\\
&= \frac{2^{\tfrac{\nu}{2}}}{\sqrt{\pi}} \,\Gamma\big(\tfrac{\nu+1}{2}\big) \sin \big( \pi \big(\tfrac{\nu+1}{2}\big) \big) =2^{\tfrac{\nu}{2}} \sqrt{\pi} \frac{1}{\Gamma(\tfrac{1}{2}-\tfrac{\nu}{2})}\,.
\end{align*}
Here, we have used the substitution $\tfrac12 s^2 = t$ in the integral and the very definition of the $\Gamma$-function, the recurrence relation $\Gamma(z+1) = z \Gamma(z+1)$ for $z=\nu$ and for $z=-\nu/2$, the reflection property for $z=\nu/2$ and the duplication formula $\Gamma(2z) = \pi^{-1/2} 2^{2z-1} \Gamma(z) \Gamma(z+\tfrac{1}{2})$ for $z=\nu/2$, and finally the reflection property for $z=(1+\nu)/2$. 

To conclude the same claim for $\nu>0$ with $\nu \not \in \Z$, we use \eqref{D nu at 0 nu-2} to show by induction over $n$, that it holds for all $\nu \in \R \setminus \Z$ with $\nu<2n$. The induction step is
\begin{align*}
D_{\nu+2}(0)&= \frac{-\Gamma(\nu+3)}{(\nu+2)\Gamma(\nu+1)}  D_\nu(0) =  (-2)\tfrac{\nu+1}2   \frac{2^{\tfrac{\nu}{2}}}{\sqrt{\pi}} \,\Gamma\big(\tfrac{\nu+1}{2}\big) \sin \big( \pi \big(\tfrac{\nu+1}{2}\big) \big) 
\\
&= \frac{2^{\tfrac{\nu+2}{2}}}{\sqrt{\pi}} \,\Gamma\big(\tfrac{\nu+3}{2}\big) \sin \big( \pi \big(\tfrac{\nu+3}{2}\big) \big)  \,.
\end{align*}
Finally, \eqref{d nu D bound} tells us that $D_\nu$ is a twice differentiable in $\nu$, at least for $\nu>-1$. Thus, as the right-hand side of \eqref{D nu at 0 final} is continuous on $\nu>-1$, we can conclude that equality holds for $\nu \in \N$, too, which finally shows \eqref{D nu at 0 final} for all $\nu \in \R\setminus \Z^-$. 

\end{proof}

\begin{cl} \label{D0 D1 cl}
For any $x \in \R$, we have $D_0(x)=\exp(-x^2/4)$ and $D_1(x)=x \exp(-x^2/4)$. 
\end{cl}
\begin{proof}
Due to \eqref{exp D diff 1} and $\tilde D_{-1}=0$ (see below \eqref{loop rep of tilde D}), we can conclude that for every $n \in \N$, there is a polynomial $p_n(x)$ of degree at most $n$, such that $D_n(x)=p_n(x)\exp(-x^2/4)$. For $n=0,1$, the asymptotic expansion \eqref{D large x} shows $p_n(x)=x^n$. We remark that with \eqref{exp D diff 1} and \eqref{exp D diff 2}, one can actually show $p_n(x)=2^{-n/2}H_n(x/\sqrt 2)$, where $H_n$ are the Hermite polynomials, see \cite[19.13.1]{AS}.
\end{proof}

\begin{lemma} \label{Sturm comparison principle lem}
Let $\nu ,\nu' \in \R \setminus \Z^-$ with $\nu <\nu'$ and $x_1 \in \R, x_2 \in (-\infty,\infty]$, such that $x_1<x_2, D_\nu(x_1)=0$ and $\lim_{x \to x_2}D_\nu(x)=0$ and $D_\nu(x) \neq 0$ for any $x\in (x_1,x_2)$. Then, there is an $x_0 \in (x_1,x_2)$ with $D_{\nu'}(x_0)=0$.
\end{lemma}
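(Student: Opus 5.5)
This is a Sturm comparison argument via a Wronskian identity. Write $u\coloneqq D_\nu$ and $v\coloneqq D_{\nu'}$. Both are real-valued and, by \eqref{tilde D nu diff eq}, satisfy
\[ u''=\big(\tfrac14x^2-\nu-\tfrac12\big)u,\qquad v''=\big(\tfrac14x^2-\nu'-\tfrac12\big)v. \]
Suppose, for contradiction, that $v$ has no zero in $(x_1,x_2)$. Replacing $u$ by $-u$ and $v$ by $-v$ if necessary, we may assume $u>0$ and $v>0$ on $(x_1,x_2)$.

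Set $W\coloneqq u'v-uv'$. Then
\[ W'=u''v-uv''=(\nu'-\nu)\,uv>0 \quad\text{on } (x_1,x_2). \]
Integrating over $[x_1,y]$ and letting $y\uparrow x_2$ gives
\[ \lim_{y\to x_2}W(y)-W(x_1)=(\nu'-\nu)\int_{x_1}^{x_2}uv\,\dd x>0. \]
We need this integral to be finite when $x_2=\infty$. By \eqref{D large x}, the product $uv$ decays like $x^{\nu+\nu'}\exp(-x^2/2)$, so the integral converges.

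Now evaluate the boundary terms.

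\textbf{Left endpoint.} Since $u(x_1)=0$, we have $W(x_1)=u'(x_1)v(x_1)$. Because $u>0$ to the right of $x_1$, $u'(x_1)\ge0$; and $v(x_1)\ge0$ by continuity. Hence $W(x_1)\ge0$.

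\textbf{Right endpoint, finite case.} If $x_2<\infty$, then $u(x_2)=0$ and $u>0$ to the left of $x_2$, so $u'(x_2)\le0$. Together with $v(x_2)\ge0$ this gives $W(x_2)=u'(x_2)v(x_2)\le0$.

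\textbf{Right endpoint, case $x_2=\infty$.} Use the two-term asymptotics \eqref{D large x} together with the derivative relation \eqref{exp D diff 2}, which gives $u'=-\tfrac{x}{2}u-(\nu+1)D_{\nu+1}\cdot\Gamma(\nu+1)/\Gamma(\nu+2)$. From these, $u'$ and $v'$ are $O\big(x^{\nu+1}e^{-x^2/4}\big)$, and therefore $W\to0$ as $x\to\infty$. (Alternatively, $W\to 0$ follows from the fact that $u,u',v,v'$ all decay like Gaussians times polynomials.)

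In either case we obtain $\lim_{y\to x_2}W(y)-W(x_1)\le 0$, which contradicts the strict positivity above. Hence $v$ has a zero $x_0\in(x_1,x_2)$.

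The main obstacle is the decay of the derivatives at $+\infty$, since the paper's asymptotics control only $D_\nu$ itself. I would handle this via \eqref{exp D diff 2}, applied to $\tilde D$ and multiplied by the Gamma factor (valid since $\nu\notin\Z^-$), which expresses $D_\nu'$ through $D_\nu$ and $D_{\nu+1}$; both of these are covered by \eqref{D large x}.
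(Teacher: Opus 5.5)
Your proof is correct and is essentially the paper's argument. Both use the same Wronskian identity $[u'v-uv']_{x_1}^{x_2}=(\nu'-\nu)\int_{x_1}^{x_2}uv\,\dd x>0$ with the same sign analysis of the boundary terms; the only difference is that you get the decay of $D_\nu'$ at $+\infty$ from \eqref{exp D diff 2} (via $D_{\nu+1}$), whereas the paper uses \eqref{exp D diff 1} (via $D_{\nu-1}$).

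One small slip: \eqref{exp D diff 2} actually gives $D_\nu'=\tfrac{x}{2}D_\nu-D_{\nu+1}$, with a plus sign on $\tfrac{x}{2}D_\nu$. This does not affect the bound $D_\nu'=O\big(x^{\nu+1}\exp(-x^2/4)\big)$ that you need.
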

\begin{remark}
This lemma and its proof are based on the Sturm comparison principle. The case $x_2=\infty$ uses the same arguments as the standard proof, but is not generally included. Thus, we prove both cases.
\end{remark}
\begin{proof}
Assume by contradiction that no such $x_0$ exists. Thus, both $D_\nu$ and $D_{\nu'}$ have a constant sign on $(x_1,x_2)$. Define $f \coloneqq \pm D_\nu, g \coloneqq \pm D_{\nu'}$, such that $f$ and $g$ are positive on $(x_1,x_2)$. Now consider
\begin{align*}
\left[ f'(x)g(x)-f(x)g'(x) \right]_{x_1}^{x_2}  &=  \int_{x_1}^{x_2} \mathrm d x\,\big(f'(x)g(x)-f(x)g'(x)\big)'  \\
&=  \int_{x_1}^{x_2} \mathrm d x\,\big(f''(x) g(x) - f(x) g''(x)\big) \\
&= \int_{x_1}^{x_2} \mathrm d x\,\left( \big( \tfrac 1 4 x^2 - \nu - \tfrac 12 \big) -   \big( \tfrac 1 4 x^2 - \nu' - \tfrac 12 \big)   \right) f(x) g(x)  \\
&= \int_{x_1}^{x_2} \mathrm d x\,(\nu'-\nu) f(x) g(x) >0\,.
\end{align*}
We used the differential equation \eqref{tilde D nu diff eq}. Since $f$ and $g$ are strictly positive and $\nu' > \nu$, the right-hand side is strictly positive. For the left-hand side, we observe that $f'(x_1) \ge 0$, since $f(x_1)=0$ and $f$ is positive on $(x_1,x_2)$. If $x_2<\infty$, we similarly see $f'(x_2)\le 0$, while for $x_2= \infty$, the asymptotic expansion \eqref{D large x} in combination with \eqref{exp D diff 1} tells us that $\lim_{x\to \infty} \max(\lvert f (x) \rvert, \lvert f'(x) \rvert, \lvert g(x) \rvert, \lvert g'(x) \rvert)= 0$ and thus the boundary term at $\infty$ vanishes. Thus, the possible boundary terms are $-f'(x_1)g(x_1) \le 0, f'(x_2)g(x_2) \le 0$ and $0$. In conclusion, the left-hand side of the equation is not positive, which yields the desired contradiction.
\end{proof}

\begin{cl} \label{negative nu no zeros lem}
For $\nu <0$ with $\nu \not \in \Z^-$, the function $D_\nu$ has no zeros.
\end{cl}

\begin{proof}
As $D_\nu$ is continuous, $D_\nu^{-1}(0)$ is closed. Due to \eqref{D large x} and \eqref{D large negative x}, we see that $D_\nu^{-1}(0)$ is bounded. Thus, assuming $D_\nu$ had a real zero, it would have a largest real zero $x_1$. Applying \autoref{Sturm comparison principle lem} with $\nu'=0$ and $x_2=\infty$ yields that $D_0(x)=\exp(-x^2/4)$ has a real zero, which is false. Thus, $D_\nu$ has no real zeros.
\end{proof}

\begin{cl} \label{D nu unique simple zero cl}
For $\nu \in (0,1]$, $D_\nu$ has exactly one zero and it is simple. Furthermore, the implicitly defined function  $\zeta \colon (0,1] \to (-\infty,0]$ with $D_\nu\big(\zeta(\nu)\big)=0$ is strictly increasing and differentiable.
\end{cl}

\begin{proof}
 If $D_\nu$ had a non-simple zero $x_0$, we would see that $D_\nu(x_0)=D_\nu'(x_0)=0$. However, due to the Picard--Lindelöf theorem, there is only one function $w$ solving \eqref{tilde D nu diff eq} and $w(x_0)=w'(x_0)=0$, which is obviously $w \equiv 0$. Thus, $D_\nu$ has only simple zeros. The function $D_1(x)=x \exp(-x^2/4)$ (see \autoref{D0 D1 cl}) clearly has exactly one zero at $0$. For $\nu\in(0,1)$, the asymptotics \eqref{D large x} and \eqref{D large negative x} show opposite signs for $D_\nu(x)$ as $x \to \pm \infty$, by the intermediate value theorem, $D_\nu$ has at least one zero. Furthermore, as the zeros are simple, the sign changes at each zero and thus, $D_\nu$ has an odd number of zeros.

As $\nu<1$, we can use $\nu'=1$ in \autoref{Sturm comparison principle lem}. If $D_\nu$ had more than one zero, it would have at least three zeros. Let $x_1,x_2,x_3$ be three such zeros with $x_1<x_2<x_3$ and no further zeros between $x_1$ and $x_3$. Now, if $x_2 \le 0$,  \autoref{Sturm comparison principle lem} yields that $D_1$ has a negative zero, which is false. On the other hand, if $x_2 >0$, \autoref{Sturm comparison principle lem} for $x_2$ and $x_3$ tells us that $D_1$ has a positive zero, which is also false. Thus, for any $\nu \in (0,1]$, $D_\nu$ has a unique zero, which we call $\zeta(\nu)$ and note  $\zeta(1)=0$.

As $\zeta(\nu)$ is a simple zero of $D_\nu$, we know $D_\nu'(\zeta(\nu))\neq 0$ and thus, the implicit function theorem tells as that $\zeta$ is a well-defined $\mathsf{C}^1$-function with
\begin{align*}
\zeta'(\nu) =- \frac { \big(\tfrac {\partial } {\partial \nu} D_\nu\big)\big(\zeta(\nu)\big)} { D_\nu'\big(\zeta(\nu)\big) } \, .
\end{align*} 

For any $\nu<1, \nu'\le 1$ with $\nu<\nu'$, \autoref{Sturm comparison principle lem} with $(x_1,x_2)=\big(\zeta(\nu), \infty\big)$ yields that $\zeta(\nu') \in \big(\zeta(\nu),\infty\big)$, which states that $\zeta$ is strictly increasing. 
\end{proof}

\begin{cl}\label{nu(k) asymptotics lem}
For $k\ge0$, we define
\begin{equation}\label{def of nu(k)}
	\nu(k)\coloneqq \inf\big\{\nu >0 : D_\nu(-{\sqrt 2} k)=0\big\}\,.
\end{equation}
Then, the function $[0,\infty)\ni k\mapsto \nu(k)\in (0,1]$ is strictly decreasing and as $k \to \infty$, we have
\begin{equation} \label{nu(k) asymptotic}
\nu(k) = \frac{k}{\sqrt{\pi}} \exp(-k^2)\Big(1+\mathcal O(k^{-2})\Big) \,.
\end{equation}
\end{cl}

\begin{proof}
Due to \autoref{D nu unique simple zero cl}, we see that $\nu(k) = \zeta^{-1}(-\sqrt2  k)$. Thus, as $\zeta$ is strictly increasing, the function $k \mapsto \nu(k)$ is strictly decreasing. In particular, for any $k \ge 0$, we have $\nu(k) \le \nu(0)=1$. Thus, it suffices to consider sufficiently large $k$.

Consider the function $f_k \colon [0,1] \to \R$ given by $f_k(\nu)\coloneqq D_{\nu}(- \sqrt 2 k)$. We have shown (see \autoref{D0 D1 cl}, \eqref{d nu D at nu = 0} and \eqref{d nu D bound}) 
\begin{alignat}{2}
a_k&\coloneqq f_k(0)&&=\exp(-k^2/2) \,,\\
-b_k&\coloneqq f_k'(0)&&= -\frac{\sqrt{\pi}}{k}\exp(k^2/2) \Big(1+\mathcal O(k^{-2})\Big) \,, \\
\lVert f_k'' \rVert_{\Lp^\infty((0,1))}& \le C \exp(k^2/2) &&\eqqcolon c_k\,.
\end{alignat}
For the last inequality we used in addition that $\sup\big\{\big|\frac{\dd^j}{\dd \nu^j}\Gamma(\nu+1)\big| : j\in\{0,1,2\},\nu\in[0,1]\big\}$ is finite. Thus, $f_k$ is strictly decreasing on $[0,\max(1, b_k/c_k)]$. If $b_k<c_k$, which holds for sufficiently large $k$, we can conclude that $f_k$ has at most one zero in $[0, b_k/c_k]$. The estimates
\begin{align}
a_k-b_kx -\frac{c_k}2 x^2 \le f_k(x) \le a_k-b_kx +\frac{c_k}2 x^2 
\end{align}
imply that   it has exactly one zero on that interval, provided that $2a_k<b_k^2/c_k$, which holds for sufficiently large $k$. 
 From the zeros of the two quadratic polynomials bounding $f_k$, we obtain
\[ -\frac{b_k}{c_k}+\sqrt{\frac{b_k^2}{c_k^2} + 2\frac{a_k}{c_k}} \le \nu(k) \le \frac{b_k}{c_k}-\sqrt{\frac{b_k^2}{c_k^2} - 2\frac{a_k}{c_k}}\,.
\]
Using $\sqrt{1-x} = 1-\tfrac{1}{2}x+\mathcal O(x^2)$ we have the expansion
\begin{align*} 
\frac{b_k}{c_k}-\sqrt{\frac{b_k^2}{c_k^2} - 2\frac{a_k}{c_k}}&=\frac{b_k}{c_k}\left(1-\sqrt{1-2 \frac{a_k}{b_k}\frac{c_k}{b_k}}\right)
\\
&=\frac{b_k}{c_k}\left(1-\left(1-\frac{a_k}{b_k}\frac{c_k}{b_k} + \mathcal O\left(\tfrac{a_k}{b_k}\tfrac{c_k}{b_k}\right)^2\right)\right)
\\
&=\frac{a_k}{b_k}\left(1+ \mathcal O \left( k^2\exp(-k^2)\right)\right)\\
&= \frac k {\sqrt \pi} \exp(-k^2) \Big(1+\mathcal O(k^{-2})\Big)\,,
\end{align*}
and similarly for the other root.
\end{proof}

\begin{lemma} \label{eigenfunctions are D nus lem}
Let $f$ be an eigenfunction of $H^+(k)$ with eigenvalue $\lambda$. Then, there is a $c \in \R$ such that
\begin{align}
f(x)=c D_{\tfrac {\lambda-1} 2 } \left( \sqrt 2 (x-k) \right) \,, \quad x\in[0,\infty)\,.
\end{align}
Furthermore, $D_{\tfrac {\lambda-1} 2 } (- \sqrt 2  k)=0$. 
\end{lemma}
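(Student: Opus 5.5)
The plan is to reduce the eigenvalue problem for $H^+(k)$ to the parabolic cylinder equation and then use a uniqueness argument for solutions that decay at $+\infty$.

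\emph{Step 1: regularity and the ODE.} Let $f\in\mathcal D(H^+(k))$ satisfy $H^+(k)f=\lambda f$. By the description of $\mathcal D(H^+(k))$ we have $f\in \mathsf{H}^2_{\text{loc}}(\R^+)\cap\mathsf{H}^1_0(\R^+)$, and in the distributional sense $-f''+(x-k)^2f=\lambda f$ on $(0,\infty)$. The right-hand side $f''=((x-k)^2-\lambda)f$ is continuous, so bootstrapping gives $f\in\mathsf{C}^\infty((0,\infty))$. The $\mathsf{H}^1_0$ condition together with Sobolev embedding gives $f\in \mathsf{C}([0,\infty))$ with $f(0)=0$. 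Since $H^+(k)$ is real, we may split into real and imaginary parts and treat each separately, so assume $f$ is real.

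\emph{Step 2: the two-dimensional solution space.} Set $\nu=\frac{\lambda-1}{2}$ and $g(y)\coloneqq f(k+y/\sqrt2)$. By the scaling computation preceding \eqref{eigenvalue problem}, $g$ satisfies \eqref{DE for D} on $(-\sqrt2 k,\infty)$. One solution is $\tilde D_\nu(y)$, by \eqref{tilde D nu diff eq}. For the second, when $\nu\notin\Z^-$ I take $D_{-\nu-1}(\mathrm{i}y)$: the substitution $y\mapsto \mathrm{i}y$, $\nu\mapsto-\nu-1$ maps \eqref{DE for D} to itself. It grows like $\exp(y^2/4)$, by \eqref{D_-n(ix)} in the integer case, or by the analogous expansion from \eqref{D large x} continued to imaginary arguments in general. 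In the range we actually need, $\lambda\in(1,3)$ (i.e.\ $\nu\in(0,1)$), one can avoid this altogether: a second solution is $w_2= D_\nu\int^y D_\nu^{-2}$ on a half-line beyond the last zero of $D_\nu$, and by \eqref{D large x} it grows like $\exp(y^2/4)y^{-\nu-1}$. In either case, the Wronskian of $D_\nu$ and this second solution is a nonzero constant, so the two solutions form a basis.

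\emph{Step 3: selecting the decaying solution.} Write $g=c\,D_\nu+c'w_2$. Since $f\in\Lp^2(\R^+)$ while $D_\nu$ decays by \eqref{D large x} and $w_2$ is not square integrable near $+\infty$, we get $c'=0$. Finally, the boundary condition $f(0)=0$ becomes $D_\nu(-\sqrt2 k)=0$ when $c\ne0$, and $c\neq0$ because $f$ is an eigenfunction. The case $\nu\in\Z^-$, i.e.\ $\lambda\le -1$, cannot occur since $H^+(k)\ge0$, so it needs no separate treatment.

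\emph{Main obstacle.} The delicate point is Step 2: rigorously establishing the growth of the second solution (equivalently, that no decaying solution is independent of $D_\nu$). I would handle it with the Wronskian argument. If $w$ is any solution in $\Lp^2$ near $+\infty$, then $W=D_\nu w'-D_\nu'w$ is constant. Both $D_\nu, D_\nu'$ tend to zero by \eqref{D large x} and \eqref{exp D diff 1}, and an $\Lp^2$ solution of \eqref{DE for D} has bounded derivative at infinity (the potential is positive there, so the solution is eventually monotone and convex toward zero). Hence $W=0$, which makes $w$ proportional to $D_\nu$.
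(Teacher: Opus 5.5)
Your proof is correct. The decisive step differs from the paper's.

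\textbf{The paper's route.} The paper also rescales to $g(y)=f(k+y/\sqrt2)$ and uses Picard--Lindel\"of. It gets boundedness of $f$ from $\mathsf{H}^1\subset\Lp^\infty$, and then writes down an explicit second solution:
\begin{itemize}
\item $D_\nu(-y)$ for $\nu\notin\Z$, whose growth $\frac{\sqrt{2\pi}}{\Gamma(-\nu)}\exp(y^2/4)\,y^{-\nu-1}$ at $+\infty$ is read off from \eqref{D large negative x};
\item $D_{-\nu-1}(\mathrm{i}y)$ for $\nu\in\N$, using \eqref{D_-n(ix)}.
\end{itemize}
The growth condition then kills the coefficient of this second solution.

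\textbf{Your route.} Your Wronskian argument instead shows directly that the solutions of \eqref{DE for D} that are square integrable near $+\infty$ form an at most one-dimensional space. Where $\tfrac14y^2-\nu-\tfrac12>0$, convexity of $g^2$ makes $g$ and $g'$ bounded. Since $D_\nu,D_\nu'\to0$, the constant Wronskian vanishes. This needs only \eqref{D large x} and \eqref{exp D diff 1}. It avoids the integer/non-integer case split, and it makes the asymptotics at $-\infty$ and on the imaginary axis unnecessary for this lemma; in particular the whole derivation of \eqref{D_-n(ix)} appears to be needed only here. The paper's route, in exchange, identifies the growing solution explicitly.

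\textbf{Step 2 is now superfluous.} Once you use the Wronskian argument, your Step 2 can be dropped, and it is better that way for two reasons:
\begin{itemize}
\item The growth of $D_{-\nu-1}(\mathrm{i}y)$ for non-integer $\nu$ is not established in the paper; the paper uses $D_\nu(-y)$ in that case.
\item Your restriction to $\lambda\in(1,3)$ would not suffice downstream. \autoref{psik is a Dnu cl} needs the lemma for all $\lambda\in[0,3]$, i.e.\ $\nu\in[-\tfrac12,1]$, including $\nu=0,1$.
\end{itemize}

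\textbf{Real-valuedness.} Your reduction to real $f$ is more careful than the paper, whose proof simply asserts that $g$ is real-valued. For a complex eigenfunction the constant $c$ is in general complex. So the claim $c\in\R$ holds only for real $f$, which is how the lemma is used.
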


\begin{proof}
As $H^+(k)$ is a positive operator, we have $\lambda \ge 0$. Let us define $\nu \coloneqq \tfrac {\lambda-1}2 \ge- \tfrac 1 2 $ and 
\begin{align*}
g(x) \coloneqq f\big( \tfrac x {\sqrt 2} +k \big) \,,\quad x\in[-\sqrt{2}k,\infty)\, .
\end{align*}
The claim boils down to showing that $g(-\sqrt 2 k)=0$ and $g(x)=c D_\nu(x)$ for some $c \in \R$. 

We will now show the following:
\begin{align}
g \in \mathsf{C}^2\big((-\sqrt 2 k ,&\infty)\big) \, , 									  \label{eigenfunctions are Dnu g claim 1} \\
\lim_{x \to \infty} g(x) \exp(-x^2/4) \lvert x \rvert^{\nu +1 } &= 0 \, ,  \label{eigenfunctions are Dnu g claim 2} \\
g''(x)- \left( \tfrac 1 4 x^2 - \nu - \tfrac 1 2 \right) g(x) &=0 \, ,		 \label{eigenfunctions are Dnu g claim 3}  \\
g(-\sqrt 2 k) &=0 \, .											 \label{eigenfunctions are Dnu g claim 4} 
\end{align} 
As $f$ is an eigenfunction of $H^+(k)$, we get $f \in \mathcal Q(q^+_k) \subset \mathsf{H}^1(\R^+)$, see \eqref{H+k form domain}. Furthermore, as $(x-k)^2 f \in \Lp^2_{\text{loc}}(\R^+)$ and $H^+(k)f= \big(-\Delta + (x-k)^2\big) f \in \Lp^2(\R^+)$, we can conclude $f \in \mathsf{H}^2_{\text{loc}}(\R^+)$. By the Sobolev imbedding theorem~\cite[Theorem 5.4, Case C]{Adams}, we can conclude $\mathsf{H}^2_{\text{loc}}(\R^+) \subset  \mathsf{C}^1(\R^+)$ and $\mathsf{H}^1(\R^+) \subset \Lp^\infty(\R^+)$. Thus, we get $f \in \mathsf{C}^1(\R^+) \cap \Lp^\infty(\R^+)$. As $f \in \Lp^\infty(\R^+)$, we can conclude \eqref{eigenfunctions are Dnu g claim 2}. Furthermore, as $H^+(k)f= \lambda f$ and $f \in \mathsf{C}^1(\R^+)$, we obtain $f \in \mathsf{C}^2(\R^+)$ and thus \eqref{eigenfunctions are Dnu g claim 1}. Finally $H^+(k)f= \lambda f$  translates to \eqref{eigenfunctions are Dnu g claim 3} and the boundary condition $f(0)=0$ from $f \in \mathcal Q (q^+_k) $ becomes \eqref{eigenfunctions are Dnu g claim 4}. 

Due to the Picard--Lindelöf theorem, all $\mathsf{C}^2\big((- \sqrt 2 k , \infty)\big)$-functions, which solve \eqref{eigenfunctions are Dnu g claim 3}  form a two-dimensional space. Thus, whenever we know two linearly independent solutions of \eqref{eigenfunctions are Dnu g claim 3}, all other solutions are linear combinations of these two. 

For $\nu \in \Z$, as $\nu \ge-\tfrac 1 2$, we get $\nu \in \N$. Here, we choose the solutions $D_\nu$ and $h(x) \coloneqq D_{-\nu-1}(\mathrm{i}x)$. Due to \eqref{tilde D nu diff eq}, evaluated at $\mathrm{i}x$, we see
\begin{align*}
-h''(x) =  D_{\nu-1}''(\mathrm{i}x) &= \big(\tfrac{1}{4}(\mathrm{i}x)^2 - (-\nu-1)-\tfrac12\big)D_{\nu-1}(\mathrm{i}x)=\big(-\tfrac{1}{4}x^2 + \nu+\tfrac12\big) h(x)\,,
\end{align*}
which means that $h$ does indeed solve \eqref{eigenfunctions are Dnu g claim 3}. The asymptotic expansions for $x \to \infty$, \eqref{D large x} and \eqref{D_-n(ix)}, tell us that $h$ and $D_\nu$ are linearly independent. As $h$ is complex-valued, we get $a,c \in \C$ with $g=ah + c D_\nu$. Thus, \eqref{eigenfunctions are Dnu g claim 2} yields $a=0$. Now, as $g$ and $D_\nu$ are real-valued, as discussed below \eqref{loop rep of tilde D}, it follows that $c \in \R$ and thus $g(x)=cD_\nu(x)$.

For $\nu \not \in \Z$, let us consider the two functions $D_\nu$ and $x \mapsto D_{\nu}(-x)$. Due to \eqref{tilde D nu diff eq}, they both solve \eqref{eigenfunctions are Dnu g claim 3}.  Due to \eqref{D large x} and \eqref{D large  negative x}, they have wildly different asymptotic behaviour as $x \to \infty$. Thus, there are $a,c \in \R$, such that 
\begin{align*}
g(x)= a D_{\nu}(-x) + c D_{\nu}(x)\,.
\end{align*}
Now \eqref{eigenfunctions are Dnu g claim 2} tells us that $a=0$. Thus, we have shown $g(x)=c D_{\nu}(x)$, which completes the proof.
\end{proof}

\begin{cl} \label{psik is a Dnu cl}
For $k < 0$, the operator $H^+(k)$ has no eigenfunction with eigenvalue $\lambda \le 3$, while for $k>0$, $H^+(k)$ has exactly one eigenfunction with eigenvalue $\lambda \le 3$ satisfying $\lambda=2\nu(k)+1$. See \eqref{def of nu(k)} for the definition of the function $\nu$. That is, the dimension of the kernel,
\begin{align}
\operatorname{dim ker}\big(H^+(k)-\l\big) = \begin{cases}1&\mbox{ if } k \ge 0,\l=2\nu(k)+1\\0&\mbox{ if } k \ge 0, \l\not=2\nu(k)+1\\0&\mbox{ if } k < 0,\l \le 3\end{cases}\,.
\end{align}
The (normalized) ground-state eigenfunction $\psi_k$ is given by
\begin{align}
\psi_k(x) = \frac { D_{\nu(k)} \big( \sqrt 2 (x-k) \big) } { 2^{-\tfrac 1 4} \lVert D_{\nu(k)} \rVert_{\Lp^2((-\sqrt 2 k, \infty))}}\,, \quad  x \in [0,\infty)\,.
\end{align}
That is, 
\begin{align} \label{eigenvalue equ for H(k) plus}
H^+(k)\psi_k = \big(2\nu(k)+1\big)\psi_k\,,\quad \psi_k(0)=0\,,\quad\|\psi_k\|_{\mathsf{L}^2(\R^+)} =1\,.
\end{align}
\end{cl}

\begin{proof} By \autoref{compact resolvent}, the spectrum of $H^+(k)$ is discrete.
Let $\lambda$ be an eigenvalue of $H^+(k)$ and define $\nu \coloneqq \tfrac {\lambda-1} 2$. As $\lambda \ge 0$ we get $\nu \ge- \tfrac 12 $.  The final condition $D_{\tfrac {\lambda -1} 2 } (-\sqrt 2 k) =0$ of \autoref{eigenfunctions are D nus lem} shows that we need to understand the zeros of $D_\nu$ for $\nu \in [-\tfrac 12 , 1]$. Thus, we take a look at \autoref{negative nu no zeros lem}, \autoref{D nu unique simple zero cl} and \autoref{nu(k) asymptotics lem}. In combination, they tell us that in the given range $\nu \in [-\tfrac 12 , 1]$, the function $D_\nu$ has a zero at $-\sqrt 2 k$ only if $k \ge 0$ and $\nu = \nu(k)$. We are left to show that in this case, there actually is an eigenfunction. As $D_\nu \in \mathsf{C}^\infty(\R)$ and due to the asymptotic expansion \eqref{D large x}, we see that $D_\nu \in \Lp^2\big((-\sqrt 2 k, \infty)\big)$. Thus, the candidate function
\begin{align*}
w(x) \coloneqq  \frac { D_{\nu(k)} \big( \sqrt 2 (x-k) \big) } { 2^{-\tfrac 1 4} \lVert D_{\nu(k)} \rVert_{\Lp^2((-\sqrt 2 k, \infty))}}\,, \quad  x \in [0,\infty)
\end{align*}
is well-defined and $\lVert w \rVert_{\Lp^2(\R^+)}=1$. We also note $w(0)= D_{\nu(k)} ( -\sqrt 2 k ) =0$. As $w \in \Lp^2(\R^+) \cap \mathsf{C}^\infty ([0,\infty))$ and $w$ satisfies the differential equation $-w'' (x) + x^2 w(x)= \big(1+ 2 \nu(k)\big) w(x)$, we can conclude that $w$ is in the operator domain of $H^+(k)$ and it is a normalized eigenfunction with eigenvalue $\lambda=\big(1+2\nu(k)\big)<3$ of $H^+(k)$.
\end{proof}

\end{appendix}

\section*{Acknowledgement}
The work of P.P. was partially financed by the German Research Foundation's TRR 352-Project-ID 470903074.

\bibliography{HalfplaneLandauBib}{}
\bibliographystyle{abbrvurl}

\end{document}